\newenvironment{theorem}{\begin{thm}}{\end{thm}}
\newenvironment{corollary}{\begin{cor}}{\end{cor}}
\newenvironment{lemma}{\begin{lem}}{\end{lem}}
\newenvironment{proposition}{\begin{prop}}{\end{prop}}
\newenvironment{remark}{\begin{rem}}{\end{rem}}
\newenvironment{example}{\begin{exa}}{\end{exa}}
\newenvironment{definition}{\begin{defi}}{\end{defi}}
\setlist{itemsep=0ex}
\def\dotminus{\mathbin{\ooalign{\hss\raise1ex\hbox{.}\hss\cr
  \mathsurround=0pt$-$}}}
\renewcommand{\case}[3]{\mfix{case}{\mathbin{}#1}{of}{#2}{\kern-1pt;}{\mathbin{}#3}}
\renewcommand{\inj}{\oname{inj}}
\newcommand{\trans}[1]{\mathrel{\raisebox{-1.0pt}{$\xrightarrow{\;\smash{\raisebox{4.75pt}{\makebox(9,0)[t]{\scriptsize $#1$}}\;}}$}}}
\newcommand{\docase}[3]{\mfix{docase}{\mathbin{}#1}{of}{#2}{\kern-1pt;}{\mathbin{}#3}}
\renewcommand{\iobj}{0}  
\newcommand{\real}{\mathbb{R}}
\newcommand{\realp}{\real_{\geq 0}}
\newcommand{\realsp}{\real_{> 0}}
\newcommand{\realpe}{\bar\real_{\geq 0}}
\tikzset{cong/.style={draw=none,edge node={node [sloped, allow upside down, auto=false]{$\cong$}}},
         iso/.style={draw=none,every to/.append style={edge node={node [sloped, auto=false]{$\cong$}}}}}
\newcommand{\cpto}{
  \mathrel{\raisebox{0.5ex}{\kern3pt\ensuremath{\mathrel{\tikz{ \draw [-stealth,line width=0.4] (0.6ex,1ex) -- (0,1ex) -- (0,0.4ex) -- (2.2ex,0.4ex); }}}\kern3pt}}
}
\tikzset{
  double line with arrow/.style args={#1,#2}{
      decorate
    , decoration={markings
        , mark=at position 0 with {
            \coordinate (ta-base-1) at (0,1pt);
            \coordinate (ta-base-2) at (0,-1pt);
          }
        , mark=at position 1 with {
            \draw[#1] (ta-base-1) -- (0,1pt);
            \draw[#2] (ta-base-2) -- (0,-1pt);
          }
      }
  }
}
\let\oldemph\emph
\renewcommand{\emph}[1]{\oldemph{\index{#1}#1}}
\newcommand\sep{\mathrel{\ooalign{\kern1pt\rule[-1pt]{.41pt}{7pt}\kern1pt}}}
\renewcommand{\c}{\colon}
\newcommand{\klstar}{\star}  				%
\newcommand{\istar}{\dagger}  				%
\newcommand{\out}{\operatorname{\mathsf{out}}}
\newcommand{\inm}{\operatorname{\mathsf{in}}}
\newcommand{\tuo}{\operatorname{\out^{\text{\kern.5pt\rmfamily-}\kern-.5pt1}\kern-1pt}}
\newcommand{\nf}{\operatorname{\mathsf{nf}}}
\newcommand{\nm}{\operatorname{\mathsf{nm}}}
\renewcommand{\comp}{\mathbin{\operatorname*{\circ}}}					    %
\newcommand{\dcomp}[2]{{#2\comp }{#1}}
\newcommand{\ctx}{\vdash}
\newcommand{\ctxv}{\ctx_{\mathsf{v}}}
\newcommand{\ctxc}{\ctx_{\mathsf{c}}}
\newcommand{\gtag}{\bullet} %
\newcommand{\superimpose}[2]{%
  {\ooalign{$#1\@firstoftwo#2$\cr\hfil$#1\@secondoftwo#2$\hfil\cr}}}
\newcommand{\lrule}[3]{\textbf{#1}~~\frac{#2}{#3}}
\newcommand{\IB}[2]{#1\mathbin{\hash}#2}
\newcommand{\G}{\Gamma}
\newcommand{\Sigv}{\Sigma_{\oname{v}}}
\newcommand{\Sigc}{\Sigma_{\oname{c}}}
\newcommand\gsep{
\operatorname{\,
\mathchoice{\raisebox{.3ex}{\scalebox{.8}[.8]{$\rangle\kern-4pt\rangle\kern-4pt\rangle\kern-4pt\rangle$}}}
           {\raisebox{.3ex}{\scalebox{.8}[.8]{$\rangle\kern-4pt\rangle\kern-4pt\rangle\kern-4pt\rangle$}}}
           {\raisebox{.15ex}{\scalebox{.7}[.5]{$\rangle\kern-4pt\rangle\kern-4pt\rangle$}}}
           {\raisebox{.15ex}{\scalebox{.7}[.5]{$\rangle\kern-4pt\rangle\kern-4pt\rangle$}}}
}\,}
\newcommand{\IHom}{\@ifstar{\@ihomdn}{\@ihomup}}
\newcommand{\@ihomup}[1]{{#1}^{\kern.5pt\raisebox{.3pt}{\scalebox{.38}{$\blacksquare$}}}}
\newcommand{\@ihomdn}[1]{{#1}_{\kern.5pt\raisebox{.3pt}{\scalebox{.38}{$\blacksquare$}}}}
\newcommand{\GHom}{\@ifstar{\@ghomdn}{\@ghomup}}
\newcommand{\@ghomup}[1]{{#1}^{\kern-.2pt\gtag}}
\newcommand{\@ghomdn}[1]{{#1}_{\kern-.2pt\gtag}}
\let\dir@frac\frac
\newcommand{\inv@frac}[2]{\dir@frac{#2}{#1}}
\renewcommand{\frac}{\@ifstar{\inv@frac}{\dir@frac}}
\DeclareFontFamily{U}{mathc}{}
\DeclareFontShape{U}{mathc}{m}{it}{<->s*[1.03] mathc10}{}
\DeclareMathAlphabet{\morph}{U}{mathc}{m}{it}
\renewcommand{\id}{\operatorname{\morph{id}}}
\renewcommand{\inl}{\operatorname{\morph{inl}}}
\renewcommand{\inr}{\operatorname{\morph{inr}}}
\renewcommand{\fst}{\operatorname{\morph{fst}}}
\renewcommand{\snd}{\operatorname{\morph{snd}}}
\newcommand{\linj}{\oname{inl}}
\newcommand{\rinj}{\oname{inr}}
\newcommand{\proj}{\operatorname{\morph{proj}}}
\newcommand{\dist}{\operatorname{\morph{dist}}} 					%
\renewcommand{\wave}[1]{{\accentset{\scalebox{.95}[.9]{\texttildelow}}{#1}}}
\def\defbbname#1{\expandafter\def\csname BB#1\endcsname{{\bm{\mathsf{#1}}}}}
\def\defbbnames#1{\ifx#1\defbbnames\else\defbbname#1\expandafter\defbbnames\fi}
\newcommand{\nat}{\mathbb{N}}
\tikzset{
    commutative diagrams/.cd,
    arrow style=tikz,
    diagrams={>=stealth},
    row sep=large, 
    column sep = huge
}
\tikzstyle{shiftarr} = [
\newcommand{\mbind}[2]{\mfix{do}{#1;}{}{#2}}
\newcommand{\ibind}[2]{\mfix{iter}{#1;}{}{#2}}
\newcommand{\ret}{\oname{return}}
\newcommand\rsmraise[1]{%
  \ifx#1\displaystyle .8\else
    \ifx#1\textstyle .8\else
      \ifx#1\scriptstyle .6\else
        .45%
      \fi
    \fi
  \fi}
\renewcommand{\xto}[1]{\mathrel{\raisebox{-2pt}{$\;\xrightarrow{\scriptsize #1}$}}}
\DeclareRobustCommand{\pcase}[2]{\mfix{case\,}{#1}{\,of\,}{#2}}
\renewcommand{\dar}{\mathrel{\kern-.4pt\raisebox{1pt}{{\rotatebox[origin=c]{-90}{$\to$}}}\kern2pt}}
\renewcommand{\uar}{\kern-.2pt\mathrel{\raisebox{1pt}{{\rotatebox[origin=c]{90}{$\to$}}}}\kern-.2pt}
\renewcommand{\pm}{\mathbin{\mkern-02mu\hash\mkern-02mu}}
\newcommand{\ten}{\mathbin{\,\otimes\,}}
\begin{document}

\title[Representing Guardedness in Call-By-Value]
{Representing Guardedness in Call-by-Value\texorpdfstring{\\}{} and Guarded Parameterized Monads}
\thanks{Support by the Deutsche Forschungsgemeinschaft (DFG, German
  Research Foundation) is gratefully acknowledged -- project number 501369690}	%

\author[S.~Goncharov]{Sergey Goncharov\lmcsorcid{0000-0001-6924-8766}}

\address{University of Birmingham, Birmingham, UK}	%
\email{s.goncharov@bham.ac.uk}  %

\maketitle

\begin{abstract} 
Like the notion of computation via (strong) monads serves to classify various
flavours of impurity, including exceptions, non-determinism, probability, local
and global store, the notion of guardedness classifies well-behavedness of
cycles in various settings. In its most general form, the guardedness discipline
applies to general symmetric monoidal categories and further specializes to
Cartesian and co-Cartesian categories, where it governs guarded recursion and
guarded iteration, respectively. Here, even more specifically, we deal with the
semantics of call-by-value guarded iteration. It was shown by Levy, Power and
Thielecke that call-by-value languages can be generally interpreted in Freyd
categories, but in order to represent effectful function spaces, such a category
must canonically arise from a strong monad. We generalize this fact by showing
that representing \emph{guarded} effectful function spaces calls for certain
parameterized monads (in the sense of Uustalu). This provides a description of
guardedness as an intrinsic categorical property of programs, complementing the
existing description of guardedness as a predicate on a category.
\end{abstract}

\section{Introduction}
A traditional way to model call-by-value languages is based on a clear-cut separation between
computations and values. A computation can be \emph{suspended} and thus turned 
into a value, and a value can be \emph{executed}, and thus again be
turned into a computation. The paradigmatic example of these conversions is the 
application and abstraction mechanisms of the $\lambda$-calculus. From the categorical 
modelling perspective, this view naturally requires two categories, suitably connected with 
each other. As essentially suggested by Moggi~\cite{Moggi91a}, a minimal modelling
framework requires a Cartesian category (i.e.\ a category with finite products)
as a category of values and a Kleisli category of a strong monad over it, as a 
category of (side-effecting) computations (also called \emph{producers}~\cite{Levy04}). 
A generic \emph{computational metalanguage} thus arises
as an internal language of strong monads. Levy, Power and Thielecke~\cite{LevyPowerEtAl02}
designed a refinement of Moggi's computational metalanguage, called \emph{fine-grain 
call-by-value (FGCBV)}, whose models are not necessarily strong monads, but are more 
general \emph{Freyd categories}. They have shown that a strong monad in fact 
always emerges from a Freyd category if certain function spaces
(needed to interpret higher-order functions) are~\emph{representable} as objects of 
the value category -- thus strong monads arise from first principles.

Here, we analyse an extension of the FGCBV paradigm with a notion of 
guardedness, which is a certain predicate on computations, certifying their 
well-behavedness, in particular that they can be iterated~\cite{GoncharovSchroderEtAl17,LevyGoncharov19}. 
A typical example is guardedness in process algebra, where guardedness is often used 
to ensure that recursive systems of process definitions have unique solutions~\cite{Milner89}. 

FGCBV does not directly deal with fixpoints, since these are 
usually considered to be features orthogonal to computational effects and evaluation 
strategies. Analogously, even though the notion of guardedness is motivated
by fixpoints, here we do not consider (guarded) fixpoints as a core language feature. 
In fact, in practically relevant cases guardedness is meaningful on its own as 
a suitable notion of \emph{productivity} of computation, and need not be justified via
fixpoints, which may or may not exist. In FGCBV, one typically regards general recursion to be supported by the category 
of values, and once the latter indeed does so (e.g.\ by being a suitable category of 
complete partial orders), it is obvious to add a corresponding fixpoint construct to the language. 

Let us nevertheless outline the connection between guardedness and recursion in some 
more detail. General recursion entails partiality for programs, meaning that
even if we abstract from it, the corresponding effect of \emph{partiality}
must be part of the computational effect abstraction (see e.g.~\cite{Fiore04}). 
Recursion and computational effects are thus intimately connected. This connection
persists under the restriction from general recursion to iteration, which 
is subject to a much broader range of models, and triggers the partiality effect
just as well.
Arguably, the largest class of monads, supporting iteration, are
\emph{Elgot monads}~\cite{AdamekMiliusEtAl10,GoncharovRauchEtAl15}. These are
monads~$T$, equipped with \emph{Elgot iteration}:
\begin{align}\label{eq:total}
\lrule{}{f\c X\to T(Y+X)}{f^\istar\c X\to TY}
\end{align}
and subject to established equational laws~\cite{BloomEsik93,SimpsonPlotkin00}.
Intuitively, $f^\istar$ is obtained from~$f$ by iterating away the right summand 
in the output type $Y+X$. For example, the maybe-monad $(\argument)+1$ is an Elgot 
monad over the category of \emph{classical} sets, which yields a model for a 
while-language with non-termination as the only computational effect. 
Now, \emph{guarded Elgot monads}~\cite{LevyGoncharov19} refine Elgot monads in that
the operator~\eqref{eq:total} needs only to be defined w.r.t.\ a custom class 
of \emph{guarded morphisms}, governed by simple laws. Proper partiality of the guardedness 
predicate is relevant for various reasons, such as the following.

\begin{figure*}[t]
\begin{equation*}
~~\begin{tikzcd}[column sep = -2ex,row sep = 2ex]
 &[10ex]  & \parbox{3cm}{\scriptsize guarded\\ parameterized monads\footnotemark{}} \arrow[ddd,<-] \arrow[rrr] &  &  &[10ex] \parbox{3.6cm}{\scriptsize strong guarded\\parameterized monads\textsuperscript{\ref{foot}}} \\
 &  &  &  &  &  \\
\parbox{2.5cm}{\scriptsize id.-on-obj.\ guarded\\ functors} \arrow[rruu] \arrow[rrr,crossing over] &  &  & \parbox{2cm}{\scriptsize guarded Freyd\\ categories}  \arrow[rruu] &  &  \\
 &  & \text{\scriptsize monads} \arrow[rrr] &  &  & \text{\scriptsize strong monads} \arrow[uuu] \\
 &  &  &  &  &  \\[3ex]
\text{\scriptsize id.-on-obj. functors}
  \arrow{rruu}[sloped,above]{\text{\tiny representability}}
  \arrow{uuu}[sloped,above]{\text{\tiny guardedness}}
  \arrow{rrr}[sloped,above]{\text{\tiny strength}}
  &  &  & \text{\scriptsize Freyd categories} \arrow[rruu] \arrow[uuu,crossing over] &  &
\end{tikzcd}
\end{equation*}
  \caption{Three dimensions within call-by-value.}
  \label{fig:cube}
\end{figure*}
\begin{itemize}%
  \item Guarded fixpoints often \emph{uniquely} satisfy the corresponding fixpoint equation~\cite{Uustalu03,Milius05,GoncharovSchroderEtAl19},
  which greatly facilitates reasoning; this is extensively used in bisimulation-based
  process algebra~\cite{Milner89,Fokkink13}.
  \item In a type-theoretic and constructive setting, guarded iteration can often be 
  defined natively and more generally, e.g.\ the ``simplest'' guarded Elgot monad is
  Capretta's \emph{delay monad} (initially called ``partiality monad'')~\cite{Capretta05}, rendered by final coalgebras $D=\nu\gamma.\,(\argument+\gamma)$, 
  which yields an intensional counterpart of the maybe-monad; guardedness then 
  means \emph{productivity}, i.e., that the computation signals that it evolves if it does.
  Contrastingly, the ``simplest'' Elgot monad is 
  much harder to construct and arguably requires additional principles to
  be available in the underlying metatheory~\cite{ChapmanUustaluEtAl15,AltenkirchDanielssonEtAl17,EscardoKnapp17,Goncharov21}.
  \item Guardedness is a compositional type discipline, and hence it potentially 
  helps to encapsulate additional information about the productivity of programs in types, 
  like monads encapsulate the information about potential side-effects.
\end{itemize}
\footnotetext{More precisely, representability yields parameterized guarded monads,
subject to an additional monicity condition. This is treated in detail in~\autoref{sec:gpm}.\label{foot}}
By allowing the iteration operator to be properly partial, we can accommodate 
a range of new examples of iterative behaviour. A notion of guardedness thus 
often plays an auxiliary role of determining, in a compositional way,
which morphisms can be iterated.
   
As indicated above, strong monads can be regarded as structures, in a canonical 
way arising from FGCBV by adding the requirement of representability of certain
function spaces in the category of values. This is behind the mechanism of representing
computational effects via monads in type systems (e.g.\ in F$\omega$, by quantification
over higher kinds) and hence in programming languages (e.g.\ in Haskell). Our goal is to provide an analogous
mechanism for guardedness and for its combinations with computational effects 
and strength. That is, (strong) monads are an answer to the question: 

\begin{quote}{}
\emph{What is the 
categorical/type-theoretic structure that faithfully represents computational
effects within a higher-order universe?}
\end{quote}
Here, we are answering the question: 
\begin{quote}{}
\emph{What is the 
categorical/type-theoretic structure that faithfully represents guarded 
computational effects within a higher-order universe?} 
\end{quote}
In other words, we seek to formulate guardedness as an intrinsic structural property of morphisms, rather than as additional data that (anonymously) identifies guarded morphisms among others. In doing so, we are inspired by the view of monads as structures for representing effects, as summarized above.
In fact, we show that strength, representability and guardedness can be naturally 
arranged  within FGCBV as three orthogonal dimensions, as shown in \autoref{fig:cube} (the arrows point from more general concepts
to more specific ones). The bottom face of the cube features the above-mentioned 
connection between Freyd categories and strong monads, and a corresponding connection
between identity-on-object functors and (not necessarily strong) monads. We contribute
with the top face, where guardedness is combined with other dimensions. The key
point is the combination of guardedness with representability,
which produces a certain class of \emph{parameterized monads}~\cite{Uustalu03}
that we dub \emph{guarded parameterized~monads}.

\smallskip
\textbf{Related work~} We benefit from the analysis of Power and 
Robinson~\cite{PowerRobinson97}, who introduced \emph{premonoidal categories} 
as an abstraction of Kleisli categories. Freyd categories were subsequently 
defined by Power and Thielecke~\cite{PowerThielecke99} as premonoidal categories 
with additional structure and also connected to strong monads. Levy~\cite{Levy04} 
came up with an equivalent definition, which we use throughout. In the previous 
characterization~\cite{PowerThielecke99,LevyPowerEtAl02}, strong monads were shown to arise jointly 
with Kleisli exponentials from \emph{closed Freyd categories}. We refine this 
characterization (\autoref{cor:str-freyd}) by showing that strong monads in fact 
arise independently of exponentials (\autoref{prop:freyd}). \emph{Distributive
Freyd categories} were defined by Staton~\cite{Staton14} -- here we use them
to extend the FGCBV language with coproducts and, subsequently, with guardedness predicates.
Previous approaches to identifying structures for 
ensuring guardedness on monads involved \emph{monad modules}~\cite{PirogGibbons14,JirAdamekMiliusEtAl02}
-- we make do with guarded parameterized monads instead, which combine monads with
modules over them and arise universally. \sgnote{Mention Paul-Andre's work.}

\textbf{Plan of the paper~} 
After short technical preliminaries, we start off by introducing a restricted version of 
FGCBV in~\autoref{sec:simp} and extensively discuss motivating examples, which 
(with a little effort) can already be encoded despite restrictions. We establish 
a very simple form of the representability scenario, producing monads, and meant 
to serve as a model for subsequent sections. In~\autoref{sec:mon} we deal with 
full FGCBV, Freyd categories, modelling them and strong monads, representing Freyd 
categories. The guardedness dimension is introduced in~\autoref{sec:gfreyd} where 
we define \emph{guarded Freyd categories}, and in~\autoref{sec:repr} we analyse
the representability issue for them. Finally, in~\autoref{sec:gpm} we introduce 
an equational axiomatization of a categorical structure for representing guardedness, 
called \emph{guarded parameterized monads}. As a crucial technical step, we establish 
a coherence property in the style of Mac Lane's coherence theorem for 
monoidal categories~\cite{Mac-Lane71}.

The present paper is an extended version of the conference paper~\cite{Goncharov23}.
We added the proofs and more details to the examples and the general discussion. 
The original definition~\cite{Goncharov23} of the guarded parameterized monad
was missing two coherence conditions, which are now added (\autoref{def:pm}).

\section{Preliminaries}\label{sec:prelim}
We assume familiarity with the basics of category 
theory~\cite{Mac-Lane71,Awodey10}. For a category~$\BV$, $|\BV|$ will denote the 
class of objects, and $\BV(X,Y)$ will denote morphisms from~$X$ to $Y$. We tend to
omit indices at natural transformations for readability.
A category with finite (co\dash)products is called (co\dash)Cartesian. 
In a co-Cartesian category with selected
coproducts, we write 
$\bang\c\iobj\to A$ for the initial morphism, and $\inl\c A\to A+B$ and $\inr\c B\to A+B$
for the left and right coproduct injections, respectively.
A \emph{distributive category}~\cite{Cockett93} is a Cartesian and co-Cartesian category, in which
the natural transformation
\begin{align*}
X\times Y + X\times Z \xrightarrow{~~[\id\times\inl,\;\id\times\inr]~~} X\times (Y+Z)
\end{align*}
is an isomorphism, whose inverse we denote $\dist_{X,Y,Z}$ (a 
co-Cartesian and Cartesian closed category is always distributive). Let $\Delta=\brks{\id,\id}\c X\to X\times X$ 
and $\nabla=[\id,\id]\c X+X\to X$.
A monad~$\BBT$ on~$\BV$ is determined by a \emph{Kleisli triple}~$(T,\eta, (-)^\star)$,
consisting of a map $T\c{|\BV|\to|\BV|}$, a family of morphisms 
$(\eta_X\c X \to TX)_{X\in|\BV|}$ and \emph{Kleisli lifting} sending each~$f\c X \to T Y$ to~$f^\star\c TX \to TY$ and
obeying \emph{monad laws}: %
\begin{align*}
\eta^{\klstar}=\id, && 
f^{\klstar}\comp\eta=f,  && 
(f^{\klstar}\comp g)^{\klstar}=f^{\klstar}\comp g^{\klstar}.
\end{align*}
It follows that~$T$ extends to a functor, $\eta$ extends to a natural transformation -- \emph{unit},
$\mu = \id^\klstar\c TTX\to TX$ extends 
to a natural transformation -- \emph{multiplication}, and that $(T,\eta,\mu)$ 
is a monad in the standard sense~\cite{Mac-Lane71}. We will generally use blackboard capitals (such as~$\BBT$) to refer to monads 
and the corresponding Roman letters (such as~$T$) to refer to their functor parts.
Morphisms of the form~$f\c X\to TY$ are called \emph{Kleisli morphisms} and form the \emph{Kleisli
category}~$\BV_{\BBT}$ of $\BBT$ under \emph{Kleisli composition} $f,g\mto f^\klstar\comp g$ 
with identity~$\eta$. 

An endofunctor $F$ is \emph{strong} if it is equipped with a natural transformation \emph{strength}
$\tau\c X\times FY\to F(X\times Y)$, such that the diagrams
\begin{equation*}
\begin{tikzcd}[column sep = 5ex,row sep = 4ex]
1\times FX
  \dar["\tau"']
  \rar["\snd"]
&
FX
\\
F(1\times X)
  \urar["F\snd"'] &
\end{tikzcd}
\qquad
\begin{tikzcd}[column sep = 2ex,row sep = 4ex]
(X\times Y)\times FZ
  \dar["\iso"]
  \ar[rr,"\tau"]
&[3ex] &
F((X\times Y)\times Z)
  \dar["\iso"]
\\
X\times (Y\times FY)
  \rar["\id\times\tau"]
&
X\times F(Y\times Z)
  \rar["\tau"]
&
F(X\times (Y\times Z))
\end{tikzcd}
\end{equation*}
commute. A natural transformation between two strong functors is strong if it 
preserves strength in the obvious sense, and a monad $\BBT$ is strong if $T$ is 
strong with some strength $\tau\c X\times TY\to T(X\times Y)$ and $\eta$ and $\mu$ are strong with $\id$ being the strength of $\Id$ 
and $T\tau\comp\tau\c X\times TTY\to TT(X\times Y)$ being the strength of $TT$.

\section{Simple FGCBV with Coproducts}\label{sec:simp} %
We start off with a restricted -- single-variable -- fragment of FGCBV,
but extended with coproduct types. Since we will not deal with operational semantics,
we simplify the language slightly (e.g.\ we do not include let-expressions for values).
We also stick 
to a Haskell-style syntax with do-notation and case expressions.
We fix a collection of sorts~$S_1,S_2,\dots$, a signature~$\Sigv$ 
of pure programs $f\c A\to B$, and a signature~$\Sigc$ of effectful programs~$f\c A\to B$
(also called \emph{generic effects}~\cite{PlotkinPower01}) where $A$ and $B$ are types, generated with the grammar
\begin{align}\label{eq:types}
A,B &\Coloneqq S_1,S_2,\ldots\mid 0\mid A + B.
\end{align}
We then define terms in context of the form~$x\c A\ctxv v\c B$ and~$x\c A\ctxc p\c B$
for value terms and computation terms inductively by the rules given in \autoref{fig:simple}.
(where we chose to stick to the syntax of the familiar Haskell's do-notation):
\begin{figure*}[t]
 \begin{equation*}%
\begin{gathered}
\frac{}{~x\c A\ctxv x\c A~}\qquad
\frac{f\c A\to B\in\Sigv\quad\G\ctxv v\c A}{~\G\ctxv f(v)\c B~}\qquad
\frac{f\c A\to B\in\Sigc\quad\G\ctxv v\c A}{~\G\ctxc f(v)\c B~}\\[2ex]
\frac{~\G\ctxv v\c A~}{\G\ctxc \ret v\c A}\qquad
\frac{~\G\ctxc p\c A\quad x\c A\ctxc q\c B}{\G\ctxc \mbind{x\gets p}{q}\c B}\qquad
\frac{\G\ctxv v\c 0}{\G\ctxc\oname{init} v\c A}
\\[3ex]
\frac{~\G\ctxv v\c A~}{\G\ctxv \linj v\c A+B}\qquad\!
\frac{~\G\ctxv v\c B~}{\G\ctxv \rinj v\c A+B}\qquad\!
 \frac{%
   \G\ctxv v\c A+B\quad x\c A\ctxc p\c C\quad y\c B\ctxc q\c C
  }{%
   \G\ctxc\case{v}{\linj x\mto p}{\rinj y\mto q}\c C
  }
\end{gathered}
\end{equation*}
 \caption{Simple FGCBV with coproducts.}
  \label{fig:simple}
\end{figure*}
This language is essentially a refinement of Moggi's \emph{simple (!)
computational metalanguage}~\cite{Moggi91a}, which has only one-variable contexts (i.e.\ $\G$ is of the form $x\c A$
throughout), rather than fully fledged multi-variable contexts. In terms of monads, the present language corresponds to not necessarily strong ones.
In terms of monads, the present language
corresponds to not necessarily strong ones. Such monads are not
very useful in traditional programming languages semantics; however we dwell on this 
case for several reasons. We aim to explore the interaction between guardedness 
and monads from a foundational perspective, while remaining as general as possible to cover
cases where strength does not exist or is not relevant. %
We would also like to identify the basic representation scenario, 
to be extended later to more sophisticated cases.

An obvious extension of the presented language would be the iteration
operator: 
\begin{align}\label{eq:iter}
\frac{\G\ctxc p\c A\qquad x\c A\ctxc q\c B+A}{\G\ctxc\ibind{x\gets p}{q}\c B}
\end{align}
meant to satisfy the fixpoint equality 
\begin{align*}
	\ibind{x\gets p}{q} = \ibind{x\gets (\mbind{x\gets p}{q})}{q}
\end{align*}
Presently, we focus on representing guardedness 
as such and do not deal with~\eqref{eq:iter}

We present three examples that can be interpreted w.r.t.\ the 
single-variable fragment to demonstrate the unifying power of FGCBV
and illustrate various flavours of guardedness.
\begin{example}[Basic Process Algebra~\cite{BergstraPonseEtAl01}]\label{exa:bpl}
\emph{Basic process algebra (BPA)} over a set of actions~$\CA$ is defined by the 
grammar:
\begin{align*}
P,Q  \Coloneqq (a\in\CA)\mid P+Q\mid P\cdot Q. 
\end{align*}
One typically considers BPA-terms over free variables (seen as process names) to 
solve systems of recursive process equations w.r.t.\ these variables. E.g.\ we 
can specify a 2-bit FIFO buffer as a solution to
\begin{equation}\label{eq:bpa}
\begin{aligned}
B_0 			=&\; \oname{in}_0\cdot B_1^0 + \oname{in}_1\cdot B_1^1\\
B_1^i 		=&\; \oname{in}_0\cdot B_2^{0,i} + \oname{in}_1\cdot B_2^{1,i} + \oname{out}_i\cdot B_0&&\qquad (i\in\{0,1\})\\
B_2^{i,j} =&\; \oname{out}_j\cdot B_1^i&&\qquad (i,j\in\{0,1\})
\end{aligned}
\end{equation}
with $\CA = \{\oname{in}_0,\oname{in}_1,\oname{out}_0,\oname{out}_1\}$.
We view $B_0$ as an empty FIFO, $B_1^i$ as a FIFO carrying
only~$i$ and $B_2^{i,j}$ as a FIFO carrying~$i$ and $j$. For example, the trace 
\begin{align*}
B_0\trans{\oname{in}_0} B_1^0 \trans{\oname{in}_1} B_2^{1,0} \trans{\oname{out}_0} B_1^1\trans{\oname{out}_1} B_0
\end{align*}
is valid and represents the following course of action: push~$0$, push $1$, pop $1$
and then pop $0$. 
We can model such systems of equations in FGCBV as follows. Let us fix a single sort~$1$
and identify an $n$-fold sum $(\ldots (1+\ldots)\ldots)+1$ with the natural number~$n$. 
The injections $\inj_i\c 1\to n$ are defined inductively in the obvious way. 
Let $\Sigv=\emptyset$ and $\Sigc = \{a\c 1\to 1\mid a\in\CA\}\cup\{\oname{toss}\c {1\to 2}\}$.
A BPA-term over process names $\{N_1,\ldots,N_n\}$ can be translated to 
FGCBV recursively, with the following rules where $\rightsquigarrow$ reads as ``translates'':

\begin{equation*}
\begin{gathered}
\frac{}{N_i\rightsquigarrow x\c 1\ctxc\ret(\rinj (\inj_i x))\c  1+n}\qquad
\frac{}{a\rightsquigarrow x\c 1\ctxc \mbind{x\gets a(x)}{\ret(\linj x)}\c  1+n}~~\\[2ex]
\frac{P\rightsquigarrow x\c 1\ctxc p\c 1+n\qquad Q\rightsquigarrow x\c 1\ctxc q\c 1+n}{P+Q\rightsquigarrow x\c 1\ctxc\mbind{x\gets \oname{toss}(x)}{\case{x}{\linj x\mto p}{\rinj x\mto q}}\c 1+n}\\[2ex]
\frac{P\rightsquigarrow x\c 1\ctxc p\c 1+n\qquad Q\rightsquigarrow x\c 1\ctxc q\c 1+n}{P\cdot Q\rightsquigarrow x\c 1\ctxc\mbind{x\gets p}{\case{x}{\linj x\mto q(x)}{\rinj x\mto\ret(\rinj x)}}\c 1+n}
\end{gathered}
\end{equation*}
Intuitively, the terms $x\c 1\ctxc p\c 1+n$ represent processes with 
$1+n$ exit points: every process name $N_i$ identifies an exit $i$, in 
addition to the global anonymous exit. The latter is associated with actions, which 
are not postcomposed with any other commands. %
The generic effect $\oname{toss}$ induces binary nondeterminism as a coin-tossing act. 
For example, the result of translating the right-hand sides of~\eqref{eq:bpa} 
(after minor simplifications) is
\begin{align*}
&\;\mbind{x\gets \oname{toss}(x)}{\case{x}{\\
	&\hspace{1.75cm}\linj x\mto \mbind{x\gets \oname{in}_0(x)}{\ret(\rinj (\inj_1^0 x))}}{\\
	&\hspace{1.75cm}\rinj x\mto \mbind{x\gets \oname{in}_1(x)}{\ret(\rinj (\inj_1^1 x))}}},\\[1.5ex]
&\;\mbind{x\gets \oname{toss}(x)}{\case{x}{\\
	&\hspace{1.75cm}\linj x\mto \mbind{x\gets \oname{in}_0(x)}{\ret(\rinj (\inj_2^{0,i} x))}}{\\
	&\hspace{1.75cm}\rinj x\mto \mbind{x\gets \oname{toss}(x)}{\case{x}{\\
	&\hspace{2.75cm}\linj x\mto \mbind{x\gets \oname{in}_1(x)}{\ret(\rinj (\inj_2^{1,i} x))}}{\\
	&\hspace{2.75cm}\rinj x\mto \mbind{x\gets \oname{out}_i(x)}{\ret(\rinj (\inj_0 x))}}}}}&&\qquad (i\in\{0,1\})\\[1.5ex]
&\; \mbind{x\gets \oname{out}_j(x)}{\ret(\rinj (\inj_1^i x))}&&\qquad (i,j\in\{0,1\})
\end{align*}
where $n=1 + 2 + 4 = 7$, $\inj_0\c 1\to 7$, the $\inj_1^i\c 1\to 7$ and the $\inj_1^{i,j}\c 1\to 7$
are the injections, selecting the indices that address $B_0$, $B_1^{i}$ 
and $B_2^{i,j}$ correspondingly. 
Every list of terms in the context $(x\c 1\ctxc p_0\c m),\ldots,(x\c 1\ctxc p_{n-1}\c m)$ 
can be converted to a single term $x\c n\ctxc\hat p_n\c m$ recursively as follows: 
\begin{align*}
\hat p_0 = \oname{init} x,\qquad \hat p_{n+1} = \case{x}{\linj x\mto \hat p_{n}}{\rinj x\mto p_{n+1}}.
\end{align*}
Every system of $n$ equations over $m+n$ process names in BPA is thus represented 
by a term $x\c n\ctxc p\c (1+m)+n$ in simple FGCBV. Now, an iteration operator~\eqref{eq:iter} 
applied to the latter term ``solves'' the corresponding system of equations 
w.r.t.\ to $n$ names, and keeping the remaining $m$ names free, resulting in a 
term of the form $x\c n\ctxc\ibind{x\gets\ret x}{p}\c 1+m$. In our example~\eqref{eq:bpa},
$n=m=7$.

Guarded systems are those in which recursive calls are preceded by actions;~\eqref{eq:bpa}
is an example. Such systems have a unique solution (under bisimilarity)~\cite{BaetenWeijland91,Fokkink13}. The simplest unguarded example $P=P$ has arbitrary solutions and translates to $x\c 1\ctxc\ibind{x\gets\ret x}{\ret(\rinj x)}\c 1$.
\end{example}

\begin{example}[Imperative Traces]\label{exa:itrace}
We adapt the semantic framework of Nakata and Uustalu~\cite{NakataUustalu15} for 
imperative coinductive traces to our setting. 
Let us fix a set $P$ of 
\emph{predicates}, a set $T$ of \emph{state transformers}, and let the corresponding
pure and effectful signatures be 
$\Sigv = \{p\c S\to S + S\mid p\in P\}\cup\{t\c S\to S\mid t\in T\}$ 
and $\Sigc = \{\oname{put}\c S\to 1\comma\oname{get}\c 1\to S\}$ over the set of 
sorts $\{S,1\}$. The intended interpretation of this data is as follows:
\begin{itemize}
  \item $S$ is a set of memory states, e.g.\ the set of finitely supported partial 
  functions~$\nat\ito 2$;
  \item $T$ are state transformers, e.g.\ functions,
updating precisely one specified memory bit; 
  \item $p\in P$ encode predicates: $p(s) = \inl(s)$ if the predicate 
  is satisfied and $p(s)=\inr(s)$ otherwise, e.g.~$p$ can capture functions that give 
a Boolean answer to the questions ``is the specified bit~$0$?'' and ``is the specified bit~$1$?''.  
\end{itemize}
For example, the following program negates the $i$-th memory bit (if it is present)
\begin{align*}
  x\c 1\ctxc\mbind{s\gets\oname{get}(x)}{\case{(s[i]=0)}{\linj s\mto\oname{put}(s[i\coloneqq1])}{\rinj s\mto\oname{put}(s[i\coloneqq0])}}\c 1,
\end{align*}
where $(\argument[i]=0)$, $(\argument[i\coloneqq0])$ and
$(\argument[i\coloneqq1])$ are the obvious predicate and state transformers.
Nakata and Uustalu~\cite{NakataUustalu15} argued in favour of (infinite) traces 
as a particularly suitable semantics for reasoning about imperative programs. 
This means that store updates must contribute to the semantics, 
which can be ensured by a judicious choice of syntax, e.g., by using 
$\oname{skip}=\mbind{s\gets\oname{get}(x)}{\oname{put}(s)}$, but not $\ret$. 
In FGCBV, however, iterating $x\c 1\ctxc\ret(\rinj x)\c 1$ would not yield any trace.
By restricting to guarded iteration, with guardedness meaning writing to the store, we can indeed prevent such programs from iterating by defining guardedness so that at least one $\oname{put}$ is executed before the body of the loop is repeated. 

\end{example}
\begin{example}[Hybrid Programs]\label{exa:hybrid}
Hybrid programs combine discrete and continuous capabilities and can thus be 
used to describe the behaviours of cyber-physical systems. %
For simplicity, we consider time delays as the only hybrid facility -- more sophisticated
scenarios are treated elsewhere~\cite{GoncharovNevesEtAl20}. 
Let~$\realp$ be 
the sort of non-negative real numbers and let $\Sigv$ contain all unary operations
on non-negative reals and additionally $\oname{is}_0\c\realp\to\realp+\realp$,
which sends $n=0$ to $\inl(n)$ and $n>0$ to $\inr(n)$. Let $\Sigc=\{\oname{wait}\c{\realp\to\realp}\}$. With $\oname{wait}(r)$
we can introduce a time delay of length~$r$ and return $r$. With iteration we can write programs like
\begin{align*}
x\c \realp\ctxc\ibind{x\gets\ret x}{\quad&\\\quad\case{\oname{is}_0(x)}{\linj x&\mto\ret(\linj x)}{\\*\rinj x&\mto\mbind{x\gets\oname{wait}(x)}{\ret (\rinj f(x))}}}\c\realp,
\end{align*}
which terminate successfully in finite time ($f(x) = x\dotminus 1$\footnote{$\dotminus$
refers to truncated subtraction: $x\dotminus y = x - y$ if $x\geq y$, and $x\dotminus y=0$ otherwise.}), run infinitely
(${f(x) = 1}$), or exhibit \emph{Zeno behaviour} ($f(x) = x / 2$), i.e.\ consume 
finite time, but never terminate. In all these examples, every iteration consumes 
non-zero time. This is also often considered a well-behavedness condition, 
which we can naturally interpret as guardedness. 
\end{example}
To interpret the language from~\autoref{fig:simple}, let us fix two 
co-Cartesian categories~$\BV$ and~$\BC$, and an identity-on-objects functor~$J\c{\BV\to\BC}$ 
(hence~$|\BV|=|\BC|$) that strictly preserves coproducts. 
A semantics of~$(\Sigv,\Sigc)$ over $J$ assigns 
\begin{itemize}
  \item an object~$\sem{A}\in |\BV|$ to each sort~$A$;
  \item a morphism~$\sem{f}\in\BV(\sem{A},\sem{B})$ to each~$f\c A\to B\in\Sigv$;
  \item a morphism~$\sem{f}\in\BC(\sem{A},\sem{B})$ to each~$f\c A\to B\in\Sigc$,
\end{itemize}
which extends to types as follows: $\sem{0} = 0$, $\sem{A+B} = \sem{A}+\sem{B}$.  
The semantics of terms are given in~\autoref{fig:simple-sem}.
\begin{figure*}[t]
\begin{gather*}
\frac{}{\sem{x\c A\ctxv x\c A} = \id} 
\qquad
\frac{h=\sem{\G\ctxv v\c A}}{\sem{\G\ctxv f(v)\c B} = \sem{f}\comp h}
\qquad  
\frac{h=\sem{\G\ctxv v\c A}}{\sem{\G\ctxc f(v)\c B} = \sem{f}\comp Jh}
\\[2ex] 
\frac{h=\sem{\G\ctxv v\c A}}{\sem{\G\ctxc \ret v\c A} = Jh}
\qquad
\frac{h_1=\sem{x\c A\ctxc q\c B}\qquad h_2=\sem{\G\ctxc p\c A}}{\sem{\G\ctxc \mbind{x\gets p}{q}\c B} = h_1\comp h_2}
\\[2ex]
\frac{}{\sem{\G\ctxc\oname{init} v\c A} = \bang}
\qquad
\frac{h=\sem{\G\ctxv v\c A}}{\sem{\G\ctxv\linj v\c A+B} = \inl\comp h}
\qquad
\frac{h=\sem{\G\ctxv v\c B}}{\sem{\G\ctxv\rinj v\c A+B} = \inr\comp h}
\\[2ex]
\frac{h=\sem{\G\ctxv v\c A+B}\qquad h_1=\sem{x\c A\ctxc p\c C}\qquad h_2=\sem{y\c B\ctxc q\c C}}{\sem{\G\ctxc\case{v}{\linj x\mto p}{\rinj y\mto q}\c C} = [h_1,h_2]\comp Jh}
\end{gather*}
 \caption{Denotational semantics of simple FGCBV with coproducts.}
  \label{fig:simple-sem}
\end{figure*}
As observed by Power and Robinson~\cite{PowerRobinson97} (cf.~\cite[0.1]{Schumacher69}),
monads arise from the requirement that~$J$ is a left adjoint, thus simple FGCBV can be 
interpreted w.r.t.\ a monad on~$\BV$. A direct simple proof is given below for the 
sake of completeness. 
\begin{proposition}\label{prop:kleisli}
Let~$J\c\BV\to\BC$ be an identity-on-objects functor. Then~$J$ is a left adjoint 
iff\/~$\BC$ is isomorphic to a Kleisli category of some monad\/~$\BBT$ on~$\BV$ and 
$J = H\comp J_\BBT$ where~$H\c\BV_\BBT\iso\BC$ is the relevant isomorphism,
which is necessarily identity-on-objects, and $J_\BBT\c\BV\to\BV_\BBT$ is 
the canonical left adjoint sending every $f\in\BV(X,Y)$ to $\eta\comp f\in\BV(X,TY)$.

Moreover, in this situation, finite coproducts in $\BC$ are inherited from $\BV$, i.e.~$J\bang$
is the initial morphism in $\BC$ and the triples $(X+Y,J\inl,J\inr)$ are binary coproducts in~$\BC$.
\end{proposition}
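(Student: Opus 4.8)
The plan is to treat the biconditional and the ``moreover'' clause in turn.

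\emph{The ``if'' direction.} Every Kleisli category $\BV_{\BBT}$ comes with its canonical free--forgetful adjunction $F \dashv U$, where the free functor $F\colon \BV \to \BV_{\BBT}$ is identity-on-objects and sends $f$ to $\eta \comp f$. Hence, if $H\colon \BV_{\BBT} \iso \BC$ is an isomorphism of categories with $Jf = H(\eta \comp f) = H(Ff)$, then $J = H \comp F$ is a left adjoint, being the composite of a left adjoint with an isomorphism of categories (its right adjoint is $U \comp H^{-1}$).

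\emph{The ``only if'' direction.} Suppose $J \dashv G$ with unit $\eta\colon \Id_{\BV} \to GJ$ and counit $\epsilon\colon JG \to \Id_{\BC}$. Take $\BBT$ to be the monad induced by this adjunction: $T = GJ$, the given $\eta$ as unit, and Kleisli lifting $f^{\klstar} = G\epsilon \comp GJf$ for $f\colon X \to TY$; the monad laws are routine from naturality of $\epsilon$ and the triangle identities. I would then define $H\colon \BV_{\BBT} \to \BC$ directly: it is the identity on objects (legitimate since $|\BV| = |\BC|$ because $J$ is identity-on-objects), and it sends a Kleisli morphism $f\colon X \to TY$ to its adjoint transpose $\epsilon \comp Jf \in \BC(X, Y)$. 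On each hom-set $H$ is then precisely the adjunction bijection $\BV(X, GJY) \cong \BC(JX, JY) = \BC(X, Y)$, hence a bijection; functoriality of $H$ (preservation of Kleisli identities and of Kleisli composites) follows from the triangle identity $\epsilon \comp J\eta = \id$ together with a short naturality computation for $\epsilon$. An identity-on-objects, fully faithful functor is an isomorphism of categories, so $H$ is the desired isomorphism, and $Jf = H(Ff) = H(\eta \comp f)$ holds by construction, with $F$ the Kleisli embedding.

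\emph{The ``moreover'' clause.} In either case $J$ is a left adjoint, hence preserves whatever colimits exist in $\BV$. Since $\BV$ is co-Cartesian and $J$ is identity-on-objects, applying $J$ to the empty cocone shows that $\iobj$ is initial in $\BC$ with unique morphism $J\bang$ into any object, and applying $J$ to the coproduct cocone $(\inl\colon X \to X+Y,\ \inr\colon Y \to X+Y)$ shows that $(X+Y, J\inl, J\inr)$ is a coproduct in $\BC$.

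\emph{Main obstacle.} The only genuine work is in the ``only if'' direction: checking that $f \mapsto \epsilon \comp Jf$ respects Kleisli composition (equivalently, that the comparison functor out of $\BV_{\BBT}$ is an isomorphism onto $\BC$) and keeping track that it is literally identity-on-objects. Both are standard diagram chases with the triangle identities and naturality of $\epsilon$, so I expect no real difficulty — the proposition essentially repackages the universal property of the Kleisli category together with the fact that left adjoints preserve colimits.
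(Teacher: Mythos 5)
Your proof is correct. The ``if'' direction and the ``moreover'' clause are handled exactly as the paper intends (composing the free--forgetful adjunction with the isomorphism $H$, and invoking preservation of colimits by left adjoints together with the identity-on-objects hypothesis; the paper does not even spell out the coproduct part). The ``only if'' direction reaches the same intermediate claim as the paper --- that the canonical Kleisli comparison functor $K_{\BBT}\c\BV_{\BBT}\to\BC$, which is identity-on-objects and sends $f$ to $\epsilon\comp Jf$, is an isomorphism --- but by a different route. The paper factors the composite comparison $K=K^{\BBT}\comp K_{\BBT}$ through the Eilenberg--Moore category $\BV^{\BBT}$ and deduces that $K_{\BBT}$ is full and faithful from the known full-and-faithfulness of $K$ onto free algebras; you instead verify directly that $H$ restricts on each hom-set to the adjunction bijection $\BV(X,GJY)\cong\BC(JX,JY)=\BC(X,Y)$ (legitimate precisely because $J$ is identity-on-objects, so every hom-set of $\BC$ is of this form), with functoriality following from the triangle identity and naturality of $\epsilon$. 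Your version is more self-contained and avoids importing the Eilenberg--Moore machinery, at the cost of the explicit (if routine) functoriality check; the paper's version outsources that check to the standard theory of comparison functors. Both are complete arguments.
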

\begin{proof}
Suppose that~$J\dashv U$ and consider the diagram
\begin{equation*}
\begin{tikzcd}[column sep=large, row sep=normal]
\BV_{\BBT}\rar["K_{\BBT}"] & \BC\\ %
& \BV\ular[bend left=0,"J_{\BBT}"]\uar["J"'] %
\end{tikzcd}
\end{equation*}
where $K_{\BBT}$ is the comparison functor from the Kleisli category of $\BBT$ to $\BC$.
Note that $K_\BBT$ is generally full and faithful,
because $\BV_\BBT(X,Y)=\BV(X,UJY)\iso \BC(JX,JY) = \BC(K_\BBT X,K_\BBT Y)$.
Moreover, $K_\BBT$ is identity-on-objects, for so is $J$ by assumption. 
Thus, $K_\BBT$ is an isomorphism and $J f = (K_\BBT\comp J_\BBT)(f) = K_\BBT (\eta\comp f)$ for any~$f\in\BV(X,Y)$.

Now, suppose that for a suitable monad~$\BBT$,~$H\c\BV_\BBT\iso\BC$ and 
$J f = H(\eta\comp f)$ for any~$f\in\BV(X,Y)$. Let~$U\vdash J$ be the adjunction
between~$\BV$ and~$\BV_\BBT$, and show that~$UH^\mone\vdash J$. Note that 
$H^\mone\vdash H$, and hence, by composing adjunctions~$UH^\mone\vdash HJ$. For 
every~$f\in\BV(X,Y)$,~$HJ f = H(\eta\comp f) = Jf$, i.e.\ indeed,~$UH^\mone\vdash J$.

That finite coproducts in $\BC$ are inherited from $\BV$ is easy to see.
\end{proof}
\begin{example}[Monads]\label{exa:monad}
Let us recall relevant monads on $\BV=\Set$ for further reference.
\begin{enumerate}[wide]
  \item\label{it:mon1} $TX=\nu\gamma.\,\FSet((X+1)+A\times\gamma)$ where $\FSet$ is the finite powerset
  functor and $\nu\gamma.\, F\gamma$ denotes a final $F$-coalgebra. This monad provides a standard strong bisimulation 
  semantics for BPA (\autoref{exa:bpl}). The denotations in $TX$ are finitely 
  branching trees with edges labelled by actions and with terminal nodes labelled
  in $X$ (free variables) or in $1$ (successful termination). This monad is an instance
  of the \emph{coinductive resumption monad}~\cite{PirogGibbons14}, and the inhabitants of $TX$
  are often called \emph{synchronization trees} (e.g.~\cite{AcetoCarayolEtAl12}).
  \item\label{it:mon2} $TX=\PSet(A^\star\times (X+1) + A^\star)$ is the monad of finite traces
  (terminating successfully $A^\star\times (X+1)$ and divergent $A^\star$), which can again 
  be used as a semantics of \autoref{exa:bpl}. 
  \item $TX=\PSet(A^\star\times (X+1) + (A^\star+A^\omega))$ is a refinement of~\eqref{it:mon2},
  collecting not only finite, but also infinite traces. If we extend BPA with countable 
  non-determinism, we obtain a semantics properly between strong bisimilarity and
  finite trace equivalence. For example, the equation $P=a\cdot P$ 
  produces the infinite trace $a^\omega$ and $P'=\sum_{i\in\nat} P_i$ with 
  $P_0 = a$ and $P_{i+1} = a\cdot P_i$ do not. Therefore, $P$ is not infinite trace 
  equivalent to $P'$, while $P$ and~$P'$ are finite trace equivalent. 
  \item $TX=(\nu\gamma.\,X\times S+\gamma\times S)^S$ can be used for \autoref{exa:itrace}.
  In $\Set$, $TX\iso (X\times S^\mplus + S^\omega)^S$, i.e.\ an element 
  $TX$ is isomorphic to a function that takes an initial state in $S$ and returns 
  either a finite trace in $X\times S^\mplus$ or an infinite trace in $S^\omega$.
  We can use \autoref{prop:kleisli} to argue that~$T$ indeed extends to a 
  monad. Indeed, let $\BC$ be the category with $\BC(X,Y)=\Set(X\times S, \nu\gamma.\,Y\times S+\gamma\times S)$,
  which is a full subcategory of the Kleisli category of the coinductive resumption 
  monad $\nu\gamma.\,(\argument+\gamma\times S)$. Now, by definition, the obvious identity-on-objects 
  functor $J\c\Set\to\BC$ is a left adjoint, yielding the original $T$.
  \item $TX=\realp\times X+\realpe$ is a monad, which can be used   
  for \autoref{exa:hybrid}. Here, $\realp\times X$ refers  
  to terminating behaviours and $\realpe = \realp\cup\{\infty\}$ to
  Zeno and infinite behaviours. 
\end{enumerate}
\end{example}

\section{Freyd Categories and Strong Monads}\label{sec:mon} %
\begin{figure*}[t]
\begin{equation*} 
\begin{gathered}
\frac{x\c A\text{~~in~~}\G}{~\G\ctxv x\c A~}\qquad
\frac{f\c A\to B\in\Sigv\quad\G\ctxv v\c A}{~\G\ctxv f(v)\c B~}\qquad
\frac{f\c A\to B\in\Sigc\quad\G\ctxv v\c A}{~\G\ctxc f(v)\c B~}\\[2ex]
\frac{~\G\ctxv v\c A~}{\G\ctxc \ret v\c A}\qquad
\frac{~\G\ctxc p\c A\quad \Gamma,x\c A\ctxc q\c B}{\G\ctxc \mbind{x\gets p}{q}\c B}\qquad
\frac{\G\ctxv v\c 0}{\G\ctxc\oname{init} v\c A}
\\[3ex]
\frac{~\G\ctxv v\c A~}{\G\ctxv \linj v\c A+B}\quad~~
\frac{~\G\ctxv v\c B~}{\G\ctxv \rinj v\c A+B}\quad~~
 \frac{%
   \G\ctxv v\c A+B\quad x\c A\ctxc p\c C\quad y\c B\ctxc q\c C
  }{%
   \G\ctxc\case{v}{\linj x\mto p}{\rinj y\mto q}\c C
  }\\[2ex]
\frac{%
		\G\ctxv v\c A\qquad \G\ctxv w\c B 
	}{%
		\G\ctxv \brks{v,w}\c A\times B
	}
\qquad\qquad
  \frac{%
    \G\ctxv v\c A\times B \qquad
    \G,x\c A,y\c B \ctxc q\c C
  }{%
    \G \ctxc \pcase{v}{\brks{x,y} \mto q}\c C
  }
\end{gathered}
\end{equation*}
 \caption{FGCBV with coproducts.}
  \label{fig:fgcbv}
\end{figure*}
The full FGCBV (with coproducts) is obtained by extending the type syntax~\eqref{eq:types}
with binary products $A\times B$,
and by replacing the rules in~\autoref{fig:simple} with the rules in~\autoref{fig:fgcbv}.
We now assume that variable contexts $\Gamma$ are (possibly empty) lists 
$(x_1\c A_1,\ldots,x_n\c A_n)$ with non-repetitive $x_1,\ldots,x_n$.
To interpret the resulting language, again, we need 
an identity-on-objects functor~$J\c\BV\to\BC$, an action of $\BV$
on $\BC$, and $J$ to preserve this action.
\begin{definition}[Actegory~\cite{JanelidzeKelly01}]\label{def:act}
Let $(\BV,\ten, I)$ be a monoidal category. Then an \emph{action} of~$\BV$ on a 
category $\BC$ is a bifunctor $\oslash\c\BV\times\BC\to\BC$ together with the 
\emph{unitor} and the \emph{actor} natural isomorphisms $\upsilon\c I\oslash X\iso X$,
$\alpha\c X\oslash (Y\oslash Z)\iso (X\ten Y)\oslash Z$, satisfying the following 
coherence conditions
\begin{equation*}
\qquad\begin{tikzcd}[column sep=2em, row sep=normal]
I\oslash (X\oslash Y)
	\rar["\ups"]\ar[dr,"\alpha"']& 
X\oslash Y
\\               
&(I\ten X)\oslash Y
	\uar["\iso"']                 
\end{tikzcd}\hspace{6ex}
\begin{tikzcd}[column sep=2em, row sep=normal]
X\oslash Y & 
X\oslash (I\oslash Y)
	\lar["\id\oslash\ups"']
	\ar[dl,"\alpha"]
\\               
(X\ten I)\oslash Y 
	\uar["\iso"]             
\end{tikzcd}\qquad
\end{equation*}%

\medskip
\begin{equation*}
\begin{tikzcd}[column sep=0em, row sep=normal]
 X\oslash (Y\oslash (Z\oslash V))
  \rar["\id\oslash\alpha"]
   \dar["\alpha"'] &[3em]
 X\oslash ((Y\ten Z)\oslash V)
  \arrow[r, "\alpha"] &[3ex]
  (X\ten (Y\ten Z))\oslash V
  \arrow[d, "\iso"] 
 \\
 (X\ten Y)\oslash (Z\oslash V)
  \arrow[rr, "\alpha"'] &
  & ((X\ten Y)\ten Z)\oslash V
\end{tikzcd}
\end{equation*}%
(eliding the names of canonical isomorphisms). Then $\BC$ is called an \emph{($\BV$-)actegory}.
\end{definition}
Note that every monoidal category trivially acts on itself via $\oslash=\ten$.
In the sequel, we will only consider \emph{Cartesian} categories, i.e.\ actegories
w.r.t.\ $(\BV,\times,1)$.
\begin{definition}[Freyd Category~\cite{Levy04}]\label{def:freyd}
A \emph{Freyd category} $(\BV,\BC,J(\argument),\oslash)$ 
consists of the following data:
\begin{enumerate}%
  \item a Cartesian category~$\BV$;
  \item a category $\BC$ with $|\BV|=|\BC|$;
  \item an identity-on-objects functor~$J\c\BV\to\BC$;
  \item an action of~$\BV$ on~$\BC$, such that $J$ preserves the $\BV$-action,
  i.e.\ $J(f\times g) = f\oslash Jg$ for all $f\in\BV(X,X')$, $g\in\BV(Y,Y')$ (entailing $X\times Y = X\oslash Y$ for all $X,Y\in|\BV|$),
  $\ups = J\snd$ and $\alpha = J\brks{\id\times\fst,\snd\comp\snd}$.
\end{enumerate}
\end{definition}
Let us reformulate this definition slightly more explicitly.
\begin{lemma}\label{eq:freyd_def}
A tuple $(\BV,\BC,J(\argument),\oslash)$ is a Freyd category iff
\begin{enumerate}%
  \item $\BV$ is a Cartesian category;
  \item $\BC$ is a category, such that $|\BV|=|\BC|$;
  \item $J$ is an identity-on-objects functor $\BV\to\BC$;
  \item $\oslash$ is a bifunctor $\BV\times\BC\to\BC$, such that 
  \begin{enumerate}
    \item $X\oslash Y=X\times Y$ for all $X,Y\in |\BV|$, 
    \item $J(f\times g) = f\oslash Jg$ for all $f\in\BV(X,X')$,
  $g\in\BV(Y,Y')$, and
    \item every $J\snd\c 1\times X\to X$ is natural in $X$ (w.r.t.\ $\BC$-morphisms) and every 
    $J\brks{\id\times\fst\comma\snd\comp\snd}\c X\times {(Y\times Z)}\to (X\times Y)\times Z$
    is natural in $X,Y$ (w.r.t.\ $\BV$-morphisms) and $Z$ (w.r.t.\ $\BC$-morphisms).
  \end{enumerate}
  \end{enumerate}
\end{lemma}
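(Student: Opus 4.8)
The plan is to prove \autoref{eq:freyd_def} by unfolding \autoref{def:freyd} and \autoref{def:act} and checking that the two lists of conditions say the same thing, the only subtlety being that the coherence axioms of the action become automatic once we impose the stated equalities $\ups = J\snd$ and $\alpha = J\brks{\id\times\fst,\snd\comp\snd}$ on top of a Cartesian base. Both directions share the same bookkeeping, so I would prove them together: a tuple $(\BV,\BC,J(\argument),\oslash)$ is a Freyd category in the sense of \autoref{def:freyd} exactly when it satisfies (1)--(4) of the lemma.

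First I would observe that clauses (1)--(3) of the two formulations are literally identical, so nothing is to be done there. For clause (4), note that an action of the Cartesian monoidal category $(\BV,\times,1)$ on $\BC$ consists of a bifunctor $\oslash$ together with \emph{some} natural isomorphisms $\ups$ and $\alpha$ satisfying the triangle and pentagon of \autoref{def:act}; in a Freyd category these are not free data but are \emph{prescribed} to be $\ups = J\snd$ and $\alpha = J\brks{\id\times\fst,\snd\comp\snd}$, and moreover $J$ is required to preserve the action strictly, i.e.\ $J(f\times g) = f\oslash Jg$. Preservation on objects, applied with $f,g$ identities, forces $J(X\times Y) = X\times Y$, which is exactly (4a); preservation on morphisms is exactly (4b). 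So the content of the lemma's (4) versus \autoref{def:freyd}'s (4) reduces to: given (4a), (4b), and that $J$ is identity-on-objects over a Cartesian $\BV$, the morphisms $J\snd$ and $J\brks{\id\times\fst,\snd\comp\snd}$ automatically (i) have the right types, (ii) are isomorphisms, (iii) are natural precisely under the naturality demand in (4c), and (iv) satisfy the coherence diagrams of \autoref{def:act}.

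For (i) and (ii): since $J$ is identity-on-objects and $J(X\times Y)=X\times Y$, the morphism $J\snd$ has type $1\oslash X = J(1\times X) \to JX = X$ as required (here using (4a) to read $1\oslash X$ as $J(1\times X)$ via (4b) with identities), and similarly for the associator; they are isomorphisms because $\snd\c 1\times X\to X$ and $\brks{\id\times\fst,\snd\comp\snd}$ are isomorphisms in the Cartesian category $\BV$ (with inverses $\brks{\bang,\id}$ and the evident re-bracketing) and functors preserve isomorphisms. For (iii), the point is that naturality of $\ups$ and $\alpha$ as transformations of the relevant functors $\BV\times\BC\to\BC$ is, by (4b), naturality in the value arguments that is already inherited from $\BV$ (functoriality of $J$ plus naturality of $\snd$ and of the associator in $\BV$), so the only residual naturality conditions are those in the computation argument — and these are collected precisely in clause (4c). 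Finally, for (iv): the triangle and pentagon of \autoref{def:act}, once $\ups$ and $\alpha$ are substituted by $J\snd$ and $J\brks{\id\times\fst,\snd\comp\snd}$ and all the elided canonical isomorphisms are likewise $J$-images of the corresponding isomorphisms in $\BV$, become $J$-images of diagrams in $\BV$; and those diagrams are the triangle and pentagon identities for the Cartesian monoidal structure on $\BV$ acting on itself, which hold automatically. Hence the coherence axioms need not be listed separately in the lemma. Conversely, starting from (1)--(4) of the lemma, one \emph{defines} $\ups \ass J\snd$ and $\alpha \ass J\brks{\id\times\fst,\snd\comp\snd}$; the same observations show these are natural isomorphisms satisfying \autoref{def:act}, and (4a)--(4b) give strict preservation, so $(\BV,\BC,J,\oslash)$ is a Freyd category.

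I expect the only mildly delicate step to be (iv): being careful that every ``canonical isomorphism'' appearing (elided) in the triangle and pentagon of \autoref{def:act} is, in this Cartesian-and-strict setting, literally the $J$-image of the corresponding structural isomorphism of $\BV$, so that the whole diagram collapses to $J$ applied to a commuting diagram in $\BV$. Once that identification is made explicit, commutativity is immediate from functoriality of $J$ together with the fact that the product structure on $\BV$ is (coherently) monoidal; no genuinely new calculation is required. Everything else is a direct transcription between the two formulations.
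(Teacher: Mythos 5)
Your proposal is correct and follows essentially the same route as the paper: the paper's proof is a one-sentence observation that setting $\ups = J\snd$ and $\alpha = J\brks{\id\times\fst,\snd\comp\snd}$ yields an action precisely when these are natural, the coherence conditions of \autoref{def:act} holding automatically. You simply spell out the details (typing, invertibility, and the fact that the triangle and pentagon are $J$-images of commuting diagrams in the Cartesian category $\BV$) that the paper leaves implicit.
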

\begin{proof}
The claim follows from the observation that defining $\ups$ and $\alpha$ as $J\snd$ 
and $J\brks{\id\times\fst\comma\snd\comp\snd}$ correspondingly, yields an action of 
$\BV$ on $\BC$ iff $\ups$ and $\alpha$ are natural -- the coherence conditions 
from \autoref{def:act} hold automatically.
\end{proof}

\begin{definition}[Distributive Freyd Category~\cite{Staton14}]
A Freyd category $(\BV,\BC,J(\argument),\oslash)$ is~\emph{distributive} if\/ $\BV$ is distributive, 
$\BC$ is co-Cartesian, and~$J$ strictly preserves coproducts.
\end{definition}
Note that it follows from the above definition that the action $\oslash$ preserves
coproducts in the second argument. Indeed, by applying $J$ to the isomorphism 
$X\times (Y+Z)\iso X\times Y+X\times Z$ in $\BV$, we obtain that $X\oslash (Y+Z)\iso X\oslash Y+X\oslash Z$
is in $\BC$.

Given a distributive Freyd category $(\BV,\BC,J(\argument),\oslash)$, we update the semantics
from~\autoref{sec:simp} by extending the semantics of types with the clauses $\sem{A\times B} = 
\sem{A}\times\sem{B}$, $\sem{x_1\c A_1\comma\ldots\comma x_n\c A_n}=\sem{A_1}\times\ldots\times\sem{A_n}$,
and by defining the semantics of terms as in~\autoref{fig:fgcbv-sem}, where 
$\proj_i\c X_1\times\ldots\times X_n\to X_i$ denotes the $i$-th projection.
\begin{figure*}[t]
\begin{gather*}
\frac{}{\sem{x_1\c A_1,\ldots,x_n\c A_n\ctxv x_i\c A_i} = \proj_i} 
\\[2ex]
\frac{h=\sem{\G\ctxv v\c A}}{\sem{\G\ctxv f(v)\c B} = \sem{f}\comp h}
\qquad\quad  
\frac{h=\sem{\G\ctxv v\c A}}{\sem{\G\ctxc f(v)\c B} = \sem{f}\comp Jh}
\\[2ex] 
\frac{h=\sem{\G\ctxv v\c A}}{\sem{\G\ctxc \ret v\c A} = Jh}
\qquad\quad
\frac{h_1=\sem{\G,x\c A\ctxc q\c B}\qquad h_2=\sem{\G\ctxc p\c A}}{\sem{\G\ctxc \mbind{x\gets p}{q}\c B} = h_1\comp (\id\oslash h_2)\comp J\Delta}
\\[2ex]
\frac{}{\sem{\G\ctxc\oname{init} v\c A} = \bang}
\qquad
\frac{h=\sem{\G\ctxv v\c A}}{\sem{\G\ctxv\linj v\c A+B} = \inl\comp h}
\qquad
\frac{h=\sem{\G\ctxv v\c B}}{\sem{\G\ctxv\rinj v\c A+B} = \inr\comp h}
\\[2ex]
\frac{h=\sem{\G\ctxv v\c A+B}\qquad h_1=\sem{\G,x\c A\ctxc p\c C}\qquad h_2=\sem{\G,y\c B\ctxc q\c C}}{\sem{\G\ctxc\case{v}{\linj x\mto p}{\rinj y\mto q}\c C} = [h_1,h_2]\comp J\dist\comp (\id\oslash Jh)\comp J\Delta}
\\[2ex]
\frac{
		h_1=\sem{\G\ctxv v\c A}\qquad 
		h_2=\sem{\G\ctxv w\c B} 
}{
	\sem{\G\ctxv \brks{v,w}\c A\times B} = \brks{h_1,h_2}
}
\\[2ex]
  \frac{%
    h_1=\sem{\G\ctxv p\c A\times B} \qquad
    h_2=\sem{\G,x\c A,y\c B \ctxc q\c C}
  }{%
    \sem{\G \ctxc \pcase{p}{\brks{x,y} \mto q}\c C} = h_2\comp (\id\oslash Jh_1)\comp J\Delta
  }
\end{gather*}
 \caption{Denotational semantics of FGCBV with coproducts.}
  \label{fig:fgcbv-sem}
\end{figure*}
Freyd categories are to strong monads as identity-on-objects functors to monads.
\begin{proposition}\label{prop:freyd}%
Let~$(\BV,\BC,J(\argument),\oslash)$ be a Freyd category. Then~$J$ is a left adjoint 
iff~$\BC$ is isomorphic to a Kleisli category of some strong monad\/~$\BBT$ on~$\BV$ and 
$J f = H(\eta\comp f)$ for all~$f\in\BV(X,Y)$ where~$H\c\BV_\BBT\iso\BC$ is the 
relevant isomorphism.
\end{proposition}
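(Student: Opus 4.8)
The plan is to mirror the proof of \autoref{prop:kleisli}, using the fact (which that proposition already gives us) that a left adjoint identity-on-objects functor $J\c\BV\to\BC$ makes $\BC$ isomorphic to the Kleisli category of a monad $\BBT$ with $Jf = H(\eta\comp f)$. What remains is to upgrade that monad to a \emph{strong} monad, and to show that the strength is uniquely determined by the action $\oslash$ and transported isomorphically to $\BC$. So the real content is the correspondence between $\BV$-actions on $\BC$ (compatible with $J$ in the Freyd sense) and strengths on the monad $\BBT$.

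For the forward direction, assume $J$ is a left adjoint. By \autoref{prop:kleisli} we get a monad $\BBT$ and an isomorphism $H\c\BV_\BBT\iso\BC$ identity-on-objects with $Jf = H(\eta\comp f)$. I transport the action $\oslash$ back along $H$ to an action $\oslash'$ of $\BV$ on $\BV_\BBT$; concretely $f\oslash' g := H^{\mone}(f\oslash Hg)$. I then \emph{read off} the strength: define $\tau_{X,Y}\c X\times TY\to T(X\times Y)$ as the Kleisli morphism $\id_X\oslash' \id_{TY}\c X\times TY\to X\times TY$ reinterpreted — more precisely, evaluate the action on the identity Kleisli morphism $\id_{TY}=\eta_{TY}\c TY\to TY$, viewed as a morphism $TY\to Y$ in $\BV_\BBT$, to obtain a Kleisli morphism $X\oslash' TY = X\times TY \to X\times Y$, i.e.\ a $\BV$-morphism $X\times TY\to T(X\times Y)$. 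The Freyd coherence conditions $\ups=J\snd$, $\alpha = J\brks{\id\times\fst,\snd\comp\snd}$ and $J(f\times g)=f\oslash Jg$ translate, under $H$, precisely into the two strength diagrams of \autoref{sec:prelim} plus naturality of $\tau$; and that $\eta,\mu$ are strong follows from $J$ being a functor and from $f\oslash'(-)$ being functorial in the second argument together with the actor coherence. This is where care is needed: one must check that the transported $\oslash'$ really does factor as $f\oslash' g$ depending on $g$ only through its Kleisli data in the way a strength does, i.e.\ that $f\oslash' g = (\tau_{X',Y'})^\star\comp(f\times g)$ for $g\c Y\to TY'$ — this is forced by the functoriality of $\oslash'$ in the second argument and the equation $f\oslash' Jh = f\times h$ coming from $J(f\times h)=f\oslash Jh$.

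For the converse, suppose $\BBT$ is a strong monad, $H\c\BV_\BBT\iso\BC$, and $Jf=H(\eta\comp f)$. Define $\oslash$ on $\BC$ by transporting the canonical Kleisli action of a strong monad along $H$: on objects $X\oslash Y=X\times Y$, and on a morphism $g\c Y\to Y'$ in $\BC$ corresponding under $H$ to $\bar g\c Y\to TY'$, set $f\oslash g := H((\tau_{X',Y'})^\star\comp(f\times\bar g))$. One checks this is a bifunctor (using the strength diagrams and the monad laws), that it satisfies the Freyd compatibility conditions with $J$ (the verifications are exactly the images under $H$ of the standard identities defining the Kleisli action of a strong monad), and hence $(\BV,\BC,J,\oslash)$ is a Freyd category whose action agrees with the given one — but in fact we are \emph{given} the Freyd structure, so strictly we must show the given $\oslash$ coincides with this transported one. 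That uniqueness is again forced: a Freyd action is determined on objects, on morphisms of the form $Jh$ by $J(f\times h)=f\oslash Jh$, and every $\BC$-morphism is a composite of a $Jh$ with Kleisli-data, so the defining equations pin $\oslash$ down. Finally, $J$ is a left adjoint by \autoref{prop:kleisli}.

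The main obstacle I anticipate is the bookkeeping in the first (forward) direction: showing that the transported action $\oslash'$ on $\BV_\BBT$ is \emph{exactly} the Kleisli action induced by the strength $\tau$ we extract, rather than merely some action. This amounts to proving $f\oslash' g = (\tau^\star)\comp(f\times g)$ for all Kleisli $g$, and for that I will factor an arbitrary $g\c Y\to TY'$ (an object-map computation) through $\eta$ and use bifunctoriality of $\oslash'$ together with the identity $f\oslash' Jh = f\times h$; the two strength coherence diagrams then follow by applying $H^{\mone}$ to $\ups=J\snd$ and $\alpha=J\brks{\id\times\fst,\snd\comp\snd}$, and strength of $\eta$ and $\mu$ from functoriality. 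Once this identification is in place, everything else is a routine translation across the isomorphism $H$.
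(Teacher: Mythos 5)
Your proposal follows essentially the same route as the paper's own proof: \autoref{prop:kleisli} supplies the monad, the strength is read off as the action applied to the counit-like Kleisli morphism ($\tau=\id_X\oslash\id_{TY}$), the identity $f\oslash g=\tau^\klstar\comp\eta\comp(f\times g)$ is forced by bifunctoriality of $\oslash$ together with $J(f\times g)=f\oslash Jg$ exactly as you anticipate, and the strength axioms follow from naturality of $J\snd$ and $J\brks{\id\times\fst,\snd\comp\snd}$. One notational slip worth fixing: the $\BV_\BBT$-morphism $TY\to Y$ you evaluate the action on is $\id_{TY}\in\BV(TY,TY)$, not $\eta_{TY}$ (which is the Kleisli \emph{identity} $TY\to TY$); your subsequent typing of the resulting morphism makes clear you intend the former.
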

\begin{proof}
Freyd categories are initially designed to generalize Kleisli categories of
strong monads~\cite{PowerThielecke99}, in particular, we obtain the \emph{`If'} direction of
the claim.  

For the \emph{`Only if'} direction, suppose that $J$ is a left adjoint, and show that 
the requested strong monad exists. Indeed, we obtain a monad $\BBT$ by \autoref{prop:kleisli}.
W.l.o.g.\ suppose that $\BC=\BV_{\BBT}$. Let $Jf=\eta\comp f$ for every $f\c X\to Y$
from $\BV$. Let us define strength $\tau$ as $\id_X\oslash\id_{TY}\c X\times TY\to T(X\times Y)$,
which is clearly natural in $X$ and~$Y$. If follows that $f\oslash g = \tau\comp (f\times g)$.
Indeed, 
\begin{align*}
f\oslash g 
	=&\; f\oslash(\id^\klstar\comp\eta\comp g)\\*
	=&\; (\id\oslash\id)^\klstar\comp (f\oslash \eta\comp g)\\
	=&\; (\id\oslash\id)^\klstar\comp (f\oslash Jg)\\
	=&\; (\id\oslash\id)^\klstar\comp J(f\times g)\\
	=&\; \tau^\klstar\comp \eta\comp(f\times g)\\
	=&\; \tau\comp (f\times g).
\end{align*}
Let 
$\gamma=\brks{\id\times\fst,\snd\comp\snd}\c X\times (Y\times Z)\iso (X\times Y)\times Z$. 
The axioms of strength are verified as follows.
\begin{enumerate}
  \item Using \autoref{eq:freyd_def} (4.c): $(T\snd)\comp\tau = (J\snd)^\klstar\comp (\id\oslash\id) = \id^\klstar\comp J\snd= \id^\klstar\comp \eta\comp\snd = \snd$.
  \item Using \autoref{eq:freyd_def} (4.c): $T\gamma\comp \tau\comp (\id\times\tau) = (J\gamma)^\klstar\comp (\id\oslash(\id\oslash\id)) = ((\id\times\id)\oslash\id)^\klstar\comp J\gamma = \tau^\klstar\comp\eta\comp\gamma = \tau\comp\gamma$.
  \item $\tau\comp(\id\times\eta) = \tau^\klstar\comp\eta\comp(\id\times\eta) = (\id\oslash\id)^\klstar\comp J(\id\times\eta) = (\id\oslash\id)^\klstar\comp (\id\oslash J\eta) = \id\oslash (\id^\klstar\comp J\eta) 
  = \id\oslash (\id^\klstar\comp\eta\comp\eta) =\id\oslash\eta =\id\oslash J\id = J(\id\times\id) = \eta$.
  \item $(\tau\comp (f\times g))^\klstar\comp\tau = (f\oslash g)^\klstar\comp (\id\oslash\id) = f\oslash g^\klstar = \tau\comp (f\times g^\klstar)$.\qed
\end{enumerate}
\noqed\end{proof}
\autoref{prop:freyd} allows us to refactor the existing characterization of
\emph{closed Freyd categories}~\cite[Theorem 7.3]{LevyPowerEtAl02} along the following lines. 
In order to include higher-order types in the language, we would need to add~${A\to B}$ as a 
new type former and the following term formation rules:
\begin{flalign*}
&&
\frac{%
	\G, x\c A\ctxc p\c B
}{%
	\Gamma \ctxv \lambda x.\, p\c A\to B
}
&&
\frac{%
  \Gamma \ctxv w\c A\qquad \Gamma \ctxv v\c A\to B
}{%
  \G \ctxc v w\c B
}
&&
\end{flalign*}
We would then need to provide the following additional semantic clauses:
\begin{gather*}
\frac{h=\sem{\G,x\c A \ctxc p\c B}}{\sem{\Gamma \ctxv \lambda x.\, p\c A\to B} = \curry h}
\qquad\quad
\frac{h_1=\sem{\Gamma \ctxv v\c A\to B}\qquad h_2=\sem{\Gamma \ctxv w\c A}}{\sem{\Gamma \ctxc v w\c B} = (\curry^\mone h_1)\comp (\id\oslash Jh_2)\comp J\Delta}
\end{gather*}
where $\sem{A\to B} = \sem{A}\multimap\sem{B}$, $\multimap\c |\BV|\times|\BC|\to|\BC|$,
and $\curry$ is an isomorphism 
\begin{align}\label{eq:curry}
\curry\c\BC(J(X\times A),B)\iso\BV(X,A\multimap B)
\end{align}
natural in $X$. In particular, this says that $J$ is left adjoint to $1\multimap (\argument)$, 
which, as we have seen in~\autoref{prop:kleisli}, means that $\BC$ is isomorphic to the Kleisli category of a strong monad~$\BBT$,
and hence~\eqref{eq:curry} amounts to $\BV(X\times A,TB)\iso\BV(X,A\multimap B)$, i.e.\ to 
the existence of \emph{Kleisli exponentials}, which are exponentials of the form $(TB)^{A}$.
We thus obtain the following 
\begin{corollary}\label{cor:str-freyd}
Let~$(\BV,\BC,J(\argument),\oslash)$ be a Freyd category. The following are equivalent:
\begin{itemize}%
  \item an isomorphism~\eqref{eq:curry} natural in $X$ exists;
  \item for all $A\in |\BV|$, $J(\argument\times A)\c\BV\to\BC$ is a left adjoint;
  \item $\BC$ is isomorphic to a Kleisli category of a strong monad, and 
  Kleisli exponentials exist.
\end{itemize}
\end{corollary}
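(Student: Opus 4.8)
The strategy is to establish a cycle of implications between the three conditions, leaning on \autoref{prop:freyd} and \autoref{prop:kleisli} to convert the Freyd-categorical data into a strong monad whenever a left-adjoint situation is in play. First I would prove that the existence of the natural isomorphism \eqref{eq:curry} implies, for every fixed $A\in|\BV|$, that the functor $J(\argument\times A)\c\BV\to\BC$ is a left adjoint. This is essentially a reading-off argument: \eqref{eq:curry} says $\BC(J(X\times A),B)\iso\BV(X,A\multimap B)$ naturally in $X$, which is precisely the statement that $A\multimap(\argument)\c\BC\to\BV$ is right adjoint to $J(\argument\times A)$, provided one checks that the bijection is also natural in $B$. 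The naturality in $B$ is not assumed in \eqref{eq:curry}, so the mild obstacle here is to derive it; I would do so by using the bifunctoriality of $\oslash$ together with the Freyd-category equation $J(f\times g)=f\oslash Jg$, which pins down how $\curry$ must interact with postcomposition by morphisms of $\BC$, and forces naturality in $B$ from naturality in $X$ on the generating morphisms.

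Next I would show that "$J(\argument\times A)$ is a left adjoint for all $A$" implies "$\BC$ is (isomorphic to) the Kleisli category of a strong monad with Kleisli exponentials". Taking $A=1$ and using $J(\argument\times 1)\iso J$, the hypothesis in particular makes $J$ itself a left adjoint, so by \autoref{prop:freyd} we get a strong monad $\BBT$ with $\BC\iso\BV_\BBT$; without loss of generality $\BC=\BV_\BBT$. Then the left adjoint $J(\argument\times A)\c\BV\to\BV_\BBT$ has a right adjoint, call its object part $A\multimap(\argument)$, and the adjunction bijection unfolds to $\BV(X\times A,TB)\iso\BV(X,A\multimap B)$ natural in $X$; this is exactly the defining property of the Kleisli exponential $(TB)^A\ass A\multimap B$. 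One has to check these exponentials are genuine exponentials in $\BV$, i.e.\ that the bijection is natural in $X$ in the cartesian sense, which it is since it comes from an adjunction in which the left adjoint extends $\argument\times A$.

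Finally, for the implication from "$\BC\iso\BV_\BBT$ for a strong monad with Kleisli exponentials" back to "\eqref{eq:curry} exists", I would unwind the definitions: a Kleisli exponential gives $\BV(X\times A,TB)\iso\BV(X,(TB)^A)$ natural in $X$, and since $\BC=\BV_\BBT$ we have $\BC(J(X\times A),B)=\BV(X\times A,TB)$, so setting $A\multimap B\ass(TB)^A$ yields \eqref{eq:curry}. The only thing to verify is that the semantic clauses for $\lambda$-abstraction and application are well-typed, i.e.\ that $\curry$ and $\curry^\mone$ land in the stated hom-sets and that $\curry$'s naturality in $X$ matches the substitution of values into the context $\Gamma$; this is routine given the explicit form of the Freyd-category semantics in \autoref{fig:fgcbv-sem}. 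The main obstacle throughout is the bookkeeping around naturality directions — distinguishing naturality in the "value-variable" $X$ (which is all \eqref{eq:curry} asserts) from naturality in $B$ (needed for a genuine adjunction) — and the cleanest way to handle it is to phrase everything in terms of the adjunction $J(\argument\times A)\dashv(A\multimap\argument)$ and invoke \autoref{prop:freyd} and \autoref{prop:kleisli} as black boxes for the passage to monads.
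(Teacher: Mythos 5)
Your proposal is correct and follows essentially the same route the paper sketches in the discussion preceding the corollary: read \eqref{eq:curry} as saying that each $J(\argument\times A)$ is a left adjoint, specialise to $A=1$ to invoke \autoref{prop:kleisli}/\autoref{prop:freyd} for the strong monad, and then unwind the adjunction bijections into Kleisli exponentials. The one quibble is your justification of naturality in $B$: this is just the standard parametrized-representability fact (a family of representations of $\BC(J(\argument\times A),B)$ natural in $X$, one for each $B$, automatically upgrades $A\multimap(\argument)$ to a functor making the bijection natural in $B$), so no appeal to $\oslash$ or the equation $J(f\times g)=f\oslash Jg$ is needed.
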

A yet another way to express~\eqref{eq:curry} is to state that the presheaves
\begin{displaymath}
	\BC(J(\argument\times A),B)\c\BV^\op\to\Set
\end{displaymath} 
are representable. We will use this 
formulation in our subsequent analysis of guardedness.

\section{Guarded Freyd Categories}\label{sec:gfreyd}
We proceed to recall the formal notion of guardedness~\cite{GoncharovSchroderEtAl17,LevyGoncharov19}.
\begin{definition}[Guardedness]\label{def:grd}
A \emph{guardedness predicate} on a co-Cartesian category~$\BC$ provides for all 
$X\comma Y\comma Z\in |\BC|$ a subset~$\GHom*{\BC}(X,Y,Z) \subseteq \BC(X, Y+Z)$,
whose elements we write as $f \c X\to Y \gsep Z$ and call guarded (in~$Z$), such that 
{\allowdisplaybreaks[0]\upshape
\begin{gather*}
\textbf{(trv$_\gtag$)}\quad\frac{f\c X\to Y}{~\dcomp{f}{\inl}\c X\to Y\gsep Z~}\qquad
\textbf{(par$_\gtag$)}\quad\frac{~f\c X\to V\gsep W
\qquad~g\c Y\to V\gsep W}
{~[f,g]\c X+Y\to V\gsep W}\\[2ex]
\qquad\quad\textbf{(cmp$_\gtag$)}\quad\frac{~f\c X\to Y\gsep Z\qquad g\c Y\to V\gsep W\qquad h\c Z\to V+W}{\dcomp{f}{[g,h]}\c X\to V\gsep W} %
\end{gather*}
}

\noindent A \emph{guarded (co-Cartesian) category} is a category equipped with a 
guardedness predicate. 
A \emph{guarded functor} between two guarded categories 
is a functor $F\c\BC\to\BD$ that strictly preserves coproducts, and preserves 
guardedness in the following sense: $f\in\GHom*{\BC}(X,Y,Z)$ entails 
$f\in\GHom*{\BD}(FX,FY,FZ)$. 
\end{definition}
It follows from the axioms of guardedness that $\GHom*{\BC}$ is a functorial operator.
\begin{proposition}\label{pro:guard-funct}
$\GHom*{\BC}$ extends to a functor $\BC^\op\times\BC\times\BC\to\Set$.
\end{proposition}
\begin{proof}
The map $X,Y,Z\mapsto \BC(X,Y+Z)$ is obviously functorial. We are left to check 
that given $f\c X\to Y\gsep Z$, $g\c X'\to X$, $h\c Y\to Y'$ and $u\c Z\to Z'$,
$(h+u)\comp f\comp g\c X'\to Y'\gsep Z'$. Observe first that $f\comp g\c X'\to Y\gsep Z$.
Indeed, $f\comp g = [f,f]\comp \inl\comp g\to Y\gsep Z$ using~\textbf{(trv$_\gtag$)} 
and~\textbf{(cmp$_\gtag$)}. Next, again by~\textbf{(cmp$_\gtag$)}, $[\inr\comp h,\inl\comp u]\comp f\comp g = (h+u)\comp f\comp g\c X'\to Y'\gsep Z'$.
\end{proof} 
In the sequel, we regard $\gsep$ as an operator that binds the weakest. 
Intuitively, $\GHom*{\BC}(X,Y,Z)$ axiomatically distinguishes those morphisms $X\to Y+Z$ for which 
the program flow from~$X$ to $Z$ is guarded, in particular, if $X=Z$ then the corresponding guarded loop 
can be safely closed. Note that the standard (totally defined) iteration is an 
instance with $\GHom*{\BC}(X,Y,Z) = \BC(X,Y+Z)$. Consider other instances.
\begin{example}[Vacuous Guardedness~\cite{GoncharovSchroder18}]
The least guardedness predicate is as follows: $\GHom*{\BC}(X,Y,Z) = \{\inl\comp f\c X\to Y + Z\mid f\in\BC(X, Y)\}$. 
Such~$\BC$ is called \emph{vacuously guarded}.
\end{example}
The following class of examples abstracts the monad of synchronization trees from~\autoref{exa:resump}: 
$\BBT$ can capture arbitrary ``branching'' computational effects besides
$T=\FSet$ for nondeterminism, and $H$ can capture arbitrary ``action'' functors besides $HX = A\times X$
for standard process algebra actions.   
\begin{example}[Coalgebraic Resumptions]\label{exa:resump}
Let~$\BBT$ be a monad on a co-Cartesian category~$\BV$, and let
$H\c\BV\to\BV$ be an endofunctor such that all fixpoints~$T_H
X = \nu \gamma.\,T(X + H \gamma)$ exist. These jointly yield a monad
$\BBT_{H}$, called the \emph{(generalized) coalgebraic resumption monad
(transform of~$\BBT$)}~\cite{PirogGibbons14,GoncharovSchroderEtAl17}. Then the Kleisli category of
$\BBT_{H}$ is guarded with~$f\c X\to {Y\gsep Z}$ if 
\begin{equation}\label{eq:resump-guard}
\begin{tikzcd}[column sep = 10ex,row sep = 3ex]
X
  \ar[r, "g"] 	
  \ar[d, "f"'] & 
T(Y+H T_{H}(Y+Z))
  \ar[d,"T(\inl+\id)"]\\
T_{H}(Y+Z)
  \ar[r, "\out"] & 
T((Y+Z)+HT_{H}(Y+Z))
\end{tikzcd}
\end{equation}
for some~$g\c X\to T(Y+HT_{H}(Y+Z))$. 
Guarded iteration operators canonically extend from $\BBT$ to~$\BBT_H$~\cite{LevyGoncharov19}.
\end{example}
The next example is interesting in that the notion of guardedness is defined
essentially the same way, but fixpoints of guarded morphisms need not exist. 
\begin{example}[Algebraic Resumptions]\label{exa:fin-resump}
A simple variation of the previous example involves least fixpoints 
$T^HX=\mu\gamma.\,T(X + H \gamma)$ instead of the greatest ones,
and $\inm^\mone$ instead of~$\out$, where $\inm\c T(X+HT^HX)\to T^HX$ is the initial algebra structure of $T^HX$,
which is an isomorphism by Lambek's lemma.
However, we can no longer generally induce non-trivial (guarded) iteration operators for $\BBT^H$.\sgnote{Add a diagram.}
\end{example}
\begin{example}\label{exa:guard}
Let us describe natural guardedness predicates on the Kleisli categories of monads 
from \autoref{exa:monad}.
\begin{enumerate}[wide]
  \item $TX=\nu\gamma.\,\FSet((X+1)+A\times\gamma)$ is a special case of 
  \autoref{exa:resump}. The 
  guardedness condition~\eqref{eq:resump-guard} 
  instantiates as follows: $f\c X\to \nu\gamma.\,\FSet((Y+Z+1)+A\times\gamma)$ is 
  guarded if $\out\comp f\c X\to \FSet((Y+Z+1)+A\times T(Y+Z))$ factors through $\FSet((Y+1)+A\times T(Y+Z))$,
  i.e.\ the only allowed way to terminate through $Z$ is that which is preceded by an action from $A$.
  \item\label{it:guard2} For $TX=\PSet(A^\star\times (X+1) + A^\star)$, let $f\c X\to Y\gsep Z$
  if for every $x\in X$, $\inl (w\comma\inl(\inr y))\in f(x)$ entails~$w\neq\eps$. 
  \item For $TX=\PSet(A^\star\times (X+1) + (A^\star+A^\omega))$ guardedness 
  is defined as in clause~\eqref{it:guard2}. %
  \item For $TX=(\nu\gamma.\,X\times S+\gamma\times S)^S$, recall that $\Set_{\BBT}$
  is isomorphic to a full subcategory of the Kleisli category of $\nu\gamma.\,(\argument+\gamma\times S)$,
  which is again an instance of \autoref{exa:resump} with $TX = X$ and $HX=X\times S$.
  The guardedness predicate for~$\BBT$ thus restricts accordingly.
  \item For $TX=(\realp\times X)+\realpe$ let $f\c X\to Y\gsep Z$ if~$f(x) = \inl\, (r,\inr z)$ 
implies~$r>0$. 
\end{enumerate}
\end{example}
We proceed to extend the language in~\autoref{fig:fgcbv} with guardedness data. 
As before,~$\Sigv$ consists of constructs of the 
form~$f\c A\to B$, while~$\Sigc$ consists of constructs of the form $f\c A\to B\gsep C$,
indicating guardedness in $C$.
The new formation rules are then given in~\autoref{fig:gfgcbv}.
The rule for $\ret$ now introduces a coproduct summand $B$ with respect
to which the computation is vacuously guarded, thus adhering to~\textbf{(trv$_\gtag$)}. 
The rule for binding now must incorporate~\textbf{(cmp$_\gtag$)}, which requires 
the following modification of the syntax:\sgnote{recover the original one.}
\begin{displaymath}
  \docase{p}{\linj x\mto q}{\rinj y\mto r}
\end{displaymath}
The latter construct is meant to be equivalent to $\mbind{z\gets p}{\case{z}{\linj x\mto q}{\rinj y\mto r}}$.
modulo guardedness information.
Finally,~\textbf{(par$_\gtag$)} is captured by the formation rule for~$\oname{case}$, which is essentially unchanged w.r.t.~\autoref{fig:fgcbv}.
An analogue of the iteration operator~\eqref{eq:iter} in the new setting would be the rule:
\begin{align*}
  \frac{\G\ctxc p\c A\gsep\iobj\qquad \G,x\c A\ctxc q\c B\gsep C+A}{\G\ctxc\ibind{x\gets p}{q}\c B\gsep C}
\end{align*}
\begin{figure*}[t]
\begin{equation*} 
\begin{gathered}
\frac{x\c A\text{~~in~~}\G}{~\G\ctxv x\c A~}\qquad
\frac{f\c A\to B\in\Sigv\quad\G\ctxv v\c A}{~\G\ctxv f(v)\c B~}\qquad
\frac{f\c A\to B\gsep C\in\Sigc\qquad\G\ctxv v\c A}{~\G\ctxc f(v)\c B\gsep C~}\\[2ex]
\frac{~\G\ctxv v\c A~}{\G\ctxc \ret v\c A\gsep B}\quad~~
\frac{~\G\ctxc p\c A\gsep B\quad~ \G,x\c A\ctxc q\c C\gsep D\quad~ \G,y\c B\ctxc r\c C+D\gsep 0}{\G\ctxc\docase{p}{\linj x\mto q}{\rinj y\mto r}\c C\gsep D}\\[2ex]
\frac{\G\ctxv v\c 0}{\G\ctxc\oname{init} v\c A}\qquad\quad
\frac{~\G\ctxv v\c A~}{\G\ctxv \linj v\c A+B}\qquad\quad
\frac{~\G\ctxv v\c B~}{\G\ctxv \rinj v\c A+B}\\[2ex]
 \frac{%
   \G\ctxv v\c A+B\qquad \G,x\c A\ctxc p\c C\gsep D\qquad \G,y\c B\ctxc q\c C\gsep D
  }{%
   \G\ctxc\case{v}{\linj x\mto p}{\rinj y\mto q}\c C\gsep D
  }
\\[2ex]  
\frac{%
		\G\ctxv v\c A\qquad \G\ctxv w\c B 
	}{%
		\G\ctxv \brks{v,w}\c A\times B
	}\qquad
\frac{%
  \G\ctxv p\c A\times B \qquad
  \G,x\c A,y\c B \ctxc q\c C\gsep D
}{%
  \G \ctxc \pcase{p}{\brks{x,y} \mto q}\c C\gsep D
}
\end{gathered}
\end{equation*}
 \caption{Term formation rules of guarded FGCBV.}
  \label{fig:gfgcbv}
\end{figure*}
\begin{example}[Weakening]
One can expect that the judgement $f\c X\to Y\gsep Z+W$ entails $f\c X\to Y+Z\gsep W$, 
meaning that if a morphism is guarded w.r.t.\ an object $Z+W$, then it is guarded
w.r.t.\ to its part $W$. The corresponding \emph{weakening} principle 
\begin{gather*}
\textbf{(wkn$_\gtag$)}\quad\frac{f\c X\to Y\gsep Z+W}{~f\c X\to Y+Z\gsep W~}\qquad
\end{gather*}
is indeed derivable from~\textbf{(trv$_\gtag$)},~\textbf{(par$_\gtag$)} and ~\textbf{(cmp$_\gtag$)}.
In terms of guarded FGCBV, this corresponds to constructing the following term
from a given $\G\ctxc p\c A\gsep B+C$:
\begin{align*}
  \G\ctxc\docase{p}{\linj x\mto&\;\ret(\linj x)}{\\*\rinj z\mto&\;\case{z}{\linj x\mto\ret (\linj(\rinj x)) }{\\*
  &\;\hspace{3.6em}\rinj y\mto\ret (\rinj y)}}\c A+B\gsep C.
\end{align*}
\end{example}
\begin{example}
The updated effectful signature of \autoref{exa:bpl} now involves $a\c {1\to 0\gsep 1}$
and $\oname{toss}\c 1\to 2\gsep 0$, indicating that actions guard everything, 
while nondeterminism guards nothing. 
The signature $\Sigc$ from \autoref{exa:itrace} can be refined to
$\{\oname{put}\c S\to 0\gsep 1,\oname{get}\c 1\to S\gsep 0\}$, meaning again that $\oname{put}$ guards everything and $\oname{get}$ guards nothing. 
\autoref{exa:hybrid} is more subtle since $\oname{wait}\c{\realp\to\realp}$ 
is meant to be guarded only for non-zero inputs. We thus can embed the involved 
case distinction into $\oname{wait}$ by redefining it as $\oname{wait}\c{\realp\to\realp\gsep\realp}$.
\end{example}

\begin{definition}[Guarded Freyd Category]
A distributive Freyd category $(\BV,\BC,J(\argument),\oslash)$ is guarded
if $\BC$ is guarded and the action of\/ $\BV$ on $\BC$ preserves guardedness in the following sense:
Given $f\in\BV(A,B)$, $g\in\GHom*{\BC}(X,Y,Z)$, $J\dist\comp (f\oslash g)\in\GHom*{\BC}(A\times X, B\times Y, B\times Z)$.
\end{definition}
The semantics of~$(\Sigv,\Sigc)$ over a guarded Freyd category $(\BV,\BC,J(\argument),\oslash)$ 
interprets types and operations from $\Sigv$ as before and 
sends each~$f\c A\to B\gsep C\in\Sigc$ to $\sem{f}\in\GHom*{\BC}(\sem{A},\sem{B},\sem{C})$.
Terms in context are now interpreted as $\sem{\G\ctxv v\c B}\in\BV(\sem{\G},\sem{B})$
and $\sem{\G\ctxc p\c B\gsep C}\in\BC(\sem{\G},\sem{B}+\sem{C})$, according to the 
rules in \autoref{fig:gfgcbv-sem}.
\begin{figure*}
\begin{gather*}
\frac{}{\sem{x_1\c A_1,\ldots,x_n\c A_n\ctxv x_i\c A_i} = \proj_i}
\\[2ex]
\frac{h=\sem{\G\ctxv v\c A}}{\sem{\G\ctxv f(v)\c B} = \sem{f}\comp h}\qquad
\frac{h=\sem{\G\ctxv v\c A}}{\sem{\G\ctxc f(v)\c B\gsep C} = \sem{f}\comp J h}
\\[2ex]
\frac{h=\sem{\G\ctxv v\c A}}{\sem{\G\ctxc \ret v\c A\gsep B} = J\inl\comp Jh}
\\[2ex]
\frac{h = \sem{\G\ctxc p\c A\gsep B}\quad~~ h_1=\sem{\G,x\c A\ctxc q\c C\gsep D}\quad~~ h_2=\sem{\G,y\c B\ctxc r\c C+D\gsep \iobj}}{\sem{\G\ctxc\docase{p}{\linj x\mto q}{\rinj y\mto r}\c C\gsep D} =[h_1, [\id,\bang]\comp h_2]\comp J\dist\comp (\id\oslash h)\comp J\Delta}
\\[2ex]
\frac{}{\sem{\G\ctxc\oname{init} v\c A} = \bang}
\qquad
\frac{h=\sem{\G\ctxv v\c A}}{\sem{\G\ctxv\linj v\c A+B} = \inl\comp h}\qquad
\frac{h=\sem{\G\ctxv v\c B}}{\sem{\G\ctxv\rinj v\c A+B} = \inr\comp h}
\\[2ex]
\frac{h=\sem{\G\ctxv v\c A+B}\quad h_1=\sem{\G,x\c A\ctxc p\c C\gsep D}\quad h_2=\sem{\G,y\c B\ctxc q\c C\gsep D}}{\sem{\G\ctxc\case{v}{\linj x\mto p}{\rinj y\mto q}\c C\gsep D} = [h_1, h_2]\comp J\dist\comp (\id\oslash J h)\comp J\Delta}
\\[2ex]
\frac{h_1=\sem{\G\ctxv v\c A}\qquad h_2=\sem{\G\ctxv w\c B}}{\sem{\G\ctxv \brks{v,w}\c A\times B} = \brks{h_1,h_2}}
\end{gather*}
 \caption{Denotational semantics of guarded FGCBV over guarded Freyd categories.}
  \label{fig:gfgcbv-sem}
\end{figure*}
This is well-defined, which can be easily shown 
by structural induction:
\begin{proposition}
For any derivable $\G\ctxc p\c A\gsep B$, 
$\sem{\G\ctxc p\c A\gsep B}\in\GHom*{\BC}(\sem{\G},\sem{A},\sem{B})$.
\end{proposition}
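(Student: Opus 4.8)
The plan is to prove the statement by induction on the derivation of $\G\ctxc p\c A\gsep B$, taking as inductive hypothesis precisely the asserted membership in $\GHom*{\BC}$ for every subterm, and reading the denotations off \autoref{fig:gfgcbv-sem}.

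Before the case analysis I would isolate one auxiliary fact that is invoked in almost every case: \emph{guardedness is closed under precomposition with an arbitrary $\BC$-morphism}, that is, $g\in\GHom*{\BC}(X,Y,Z)$ and $e\in\BC(W,X)$ imply $g\comp e\in\GHom*{\BC}(W,Y,Z)$. This is derivable from the axioms of \autoref{def:grd}: \textbf{(trv$_\gtag$)} applied to $e$ gives $\inl\comp e\in\GHom*{\BC}(W,X,\iobj)$, and \textbf{(cmp$_\gtag$)} applied to $\inl\comp e$, to $g$, and to $\bang\c\iobj\to Y+Z$ then yields $[g,\bang]\comp\inl\comp e=g\comp e$. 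Since the denotation of any value term is a $\BV$-morphism, its $J$-image is a $\BC$-morphism, and all the ``structural glue'' appearing in \autoref{fig:gfgcbv-sem} ($J$-images of projections, of $\Delta$ and of $\dist$, the coproduct injections, and actions $\id\oslash(\argument)$ of $\BC$-morphisms) is assembled from $\BC$-morphisms, this single fact absorbs every occurrence of pure data.

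With that in hand the shallow cases are mechanical. For $\G\ctxc f(v)\c B\gsep C$ with $f\in\Sigc$, the denotation is $\sem f\comp Jh$ with $\sem f\in\GHom*{\BC}(\sem A,\sem B,\sem C)$ by the interpretation of the signature, so it stays guarded by the precomposition fact. For $\ret v$, the denotation $J\inl\comp Jh$ is a coproduct injection (strictness of $J$ on coproducts) after a $\BC$-morphism, hence guarded in any object by \textbf{(trv$_\gtag$)}. For $\case{v}{\linj x\mto p}{\rinj y\mto q}$ the two branch denotations are guarded in $\sem D$ by the inductive hypothesis, so \textbf{(par$_\gtag$)} makes their copairing guarded in $\sem D$, and we precompose with the pure morphism $J\dist\comp(\id\oslash Jh)\comp J\Delta$; the pair-pattern-matching case has the same shape with a single guarded factor, and (under the reading of an unannotated computation judgement as vacuous guardedness) $\oname{init}$ is again immediate from \textbf{(trv$_\gtag$)} and precomposition.

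The only case carrying genuine content, and the one I expect to be the main obstacle, is $\docase{p}{\linj x\mto q}{\rinj y\mto r}$, whose denotation is $[h_1,[\id,\bang]\comp h_2]\comp J\dist\comp(\id\oslash h)\comp J\Delta$ with $h,h_1,h_2$ the denotations of $p,q,r$. Here I would first apply the action-preservation axiom of a guarded Freyd category with $f:=\id_{\sem\G}$ and $g:=h$ (guarded in $\sem B$ by the inductive hypothesis), obtaining $J\dist\comp(\id\oslash h)\in\GHom*{\BC}(\sem\G\times\sem\G,\;\sem\G\times\sem A,\;\sem\G\times\sem B)$; then apply \textbf{(cmp$_\gtag$)} to this morphism, to $h_1\in\GHom*{\BC}(\sem\G\times\sem A,\sem C,\sem D)$ (inductive hypothesis), and to the plain $\BC$-morphism $[\id,\bang]\comp h_2\c\sem\G\times\sem B\to\sem C+\sem D$, which gives that $[h_1,[\id,\bang]\comp h_2]\comp J\dist\comp(\id\oslash h)$ is guarded in $\sem D$; finally precompose with $J\Delta$. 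The work is pure bookkeeping: one must check that the ``unguarded'' summand $\sem\G\times\sem B$ delivered by the action-preservation axiom is exactly the domain of $[\id,\bang]\comp h_2$, that $h_1$ carries its guardedness in the correct summand $\sem D$, and that taking $f=\id$ is what makes the summands of $J\dist$ align with those of the copairing so that \textbf{(cmp$_\gtag$)} applies verbatim; no argument beyond the three guardedness axioms and the action-preservation axiom is needed.
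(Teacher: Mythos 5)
Your proof is correct and follows exactly the route the paper intends: the paper offers no written proof beyond the remark that the claim ``can be easily shown by structural induction'', and your induction, with the derived closure of guardedness under precomposition (from \textbf{(trv$_\gtag$)} and \textbf{(cmp$_\gtag$)}) and the application of the action-preservation axiom followed by \textbf{(cmp$_\gtag$)} in the $\oname{docase}$ case, is precisely the omitted argument.
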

\section{Representing Guardedness}\label{sec:repr}
In \autoref{sec:mon} we explored the combination of strength (i.e., multivariable contexts) 
and the representability of presheaves $\BC(J(\argument),X)\c{\BV^\op\to\Set}$, sticking to 
the bottom face of the cube in~\autoref{fig:cube}. 
Our plan now is to obtain additional concepts by examining 
the representability of~$\GHom*{\BC}(J(\argument),X,Y)\c\BV^\op\to\Set$.
Note that representability of guardedness together with function spaces amounts to representability 
of~$\GHom*{\BC}(J(\argument\times X),Y,Z)\c\BV^\op\to\Set$, i.e.\ to the
existence of an endofunctor $\multimap\c\BV^\op\times\BC\times\BC\to\BC$, such that
$\GHom*{\BC}(J(\argument\times X),Y,Z)\iso\BV(\argument,X\multimap_Z Y)$.
This is exactly the structure one would need to extend~\autoref{fig:gfgcbv}
with function spaces as follows:
\begin{flalign*}
&&
\frac{%
		\G, x\c A\ctxc p\c B\gsep C
	}{%
		\Gamma \ctxv \lambda x.\, p\c A\to_C B
	}
&&
  \frac{%
    \Gamma \ctxv w\c A\qquad \Gamma \ctxv v\c A\to_C B
  }{%
    \G \ctxc v w\c B\gsep C
  }
&&
\end{flalign*}
The decorated function spaces $A\to_C B$ can then be interpreted as 
${\sem{A}\multimap_{\sem{C}} \sem{B}}$, which is a subobject of the 
Kleisli exponential $\sem{A}\to T(\sem{B}+\sem{C})$, consisting of guarded morphisms.
\begin{definition}
Given~$J\c\BV\to\BC$, where~$\BC$ is guarded, we call the guardedness 
predicate~$\GHom*{\BC}$~$J$-representable if for all~$X,Y\in |\BC|$ the presheaf 
$\GHom*{\BC}(J(\argument), X, Y)\c\BV^{\op}\to\Set$
is representable, i.e.\ for all~$X,Y\in |\BC|$ there is ${U(X,Y)\in|\BV|}$ such that
\begin{align}\label{eq:ghom-iso}
\GHom*{\BC}(JZ,X,Y) \iso \BV(Z,U(X,Y)) 
\end{align} 
naturally in~$Z$. A guardedness predicate $\GHom*{\BC}$ is called
$J$-guarded if it is equipped with a~$J$-representable guardedness predicate.
\end{definition}
\begin{lemma}\label{lem:cof}
Given an identity-on-objects functor~${J\c\BV\to\BC}$, 
$\GHom*{\BC}$ is~$J$-representable~iff 
\begin{itemize}
  \item there is a family of objects $(U(X,Y)\in |\BV|)_{X,Y\in |\BC|}$;
  \item there is a family of guarded morphisms $(\eps_{X,Y}\c U(X,Y)\to X\gsep Y)_{X,Y\in |\BC|}$;
  \item there is an operator~$(\argument)^\natural\c\GHom*{\BC}(Z,X,Y)\to\BV(Z,U(X,Y))$ sending 
   each~${f\c Z\to X\gsep Y}$ to the unique morphism $f^\natural$
    for which 
the diagram
\begin{equation*}
\begin{tikzcd}[column sep = 12ex,row sep = 3ex]
& U(X,Y)
  \ar[d,"\eps_{X,Y}"]\\
Z
	\ar[ur, "Jf^\natural"] 	
	\ar[r, "f"'] & X + Y
\end{tikzcd}
\end{equation*}
commutes.
\end{itemize}
These conditions entail that $U$ is a bifunctor and that $\eps_{X,Y}$ is natural 
in $X$ and $Y$.
\end{lemma}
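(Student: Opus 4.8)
The statement is a standard ``universal arrow / free object'' reformulation of representability of the family of presheaves $\GHom*{\BC}(J(\argument),X,Y)$, so the proof is a bookkeeping exercise using the Yoneda lemma, carried out uniformly in the two parameters $X,Y$. First I would unwind the definition: $J$-representability says that for each pair $X,Y\in|\BC|$ there is an object $U(X,Y)\in|\BV|$ together with a natural isomorphism $\theta_{X,Y}\c\GHom*{\BC}(J(\argument),X,Y)\iso\BV(\argument,U(X,Y))$ of presheaves $\BV^\op\to\Set$. By the Yoneda lemma such a natural isomorphism corresponds bijectively to an element $\eps_{X,Y}\in\GHom*{\BC}(JU(X,Y),X,Y)$ — concretely $\eps_{X,Y}=\theta_{X,Y}^{-1}(\id_{U(X,Y)})$ — which is \emph{universal} in the sense that every $f\in\GHom*{\BC}(JZ,X,Y)$ factors as $f=\eps_{X,Y}\comp Jf^\natural$ for a \emph{unique} $f^\natural\c Z\to U(X,Y)$ in $\BV$, namely $f^\natural=\theta_{X,Y}(f)$. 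Here I use that $\BV$ is vacuously guarded and $J$ is an identity-on-objects guarded functor, so $Jf^\natural$ is automatically guarded and the composite in $\GHom*{\BC}$ makes sense; thus $\eps_{X,Y}\c U(X,Y)\to X\gsep Y$ really does live in the guardedness predicate. This gives the ``only if'' direction: one reads off the family $(U(X,Y))$, the family $(\eps_{X,Y})$, and the operator $(\argument)^\natural$, and the triangle commutes by construction.

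For the converse, I would assume the three bulleted data are given and reverse the computation: define $\theta_{X,Y}\c\GHom*{\BC}(JZ,X,Y)\to\BV(Z,U(X,Y))$ to be $(\argument)^\natural$, and define its inverse by $g\mapsto\eps_{X,Y}\comp Jg$ for $g\c Z\to U(X,Y)$. Uniqueness of $f^\natural$ in the factorisation gives $\theta_{X,Y}$ injective; the fact that $(\eps_{X,Y}\comp Jg)^\natural=g$ — which follows because $g$ itself witnesses the required (unique) factorisation of $\eps_{X,Y}\comp Jg$ through $\eps_{X,Y}$ — gives surjectivity, so $\theta_{X,Y}$ is a bijection. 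Naturality in $Z$ is the identity $(f\comp Jh)^\natural=f^\natural\comp h$ for $h\c Z'\to Z$ in $\BV$ (equivalently $Jh\c JZ'\to JZ$ in $\BC$), which again follows from uniqueness: $\eps_{X,Y}\comp J(f^\natural\comp h)=\eps_{X,Y}\comp Jf^\natural\comp Jh=f\comp Jh$, so $f^\natural\comp h$ is \emph{the} morphism whose $J$-image postcomposed with $\eps_{X,Y}$ yields $f\comp Jh$. Hence $\GHom*{\BC}(J(\argument),X,Y)\iso\BV(\argument,U(X,Y))$ naturally, i.e.\ the presheaf is representable.

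Finally, for the closing sentence ``these conditions entail that $U$ is a bifunctor and that $\eps_{X,Y}$ is natural in $X$ and $Y$'': since each $U(X,Y)$ represents $\GHom*{\BC}(J(\argument),X,Y)$, for a pair of morphisms $u\c X\to X'$, $v\c Y\to Y'$ \emph{in $\BC$} postcomposition with $u+v$ (or rather the appropriate reindexing sending a guarded $f\c JZ\to X\gsep Y$ to a guarded $JZ\to X'\gsep Y'$ via $\textbf{(cmp}_\gtag)$) gives a natural transformation of presheaves $\GHom*{\BC}(J(\argument),X,Y)\To\GHom*{\BC}(J(\argument),X',Y')$, hence by Yoneda a unique $\BV$-morphism $U(u,v)\c U(X,Y)\to U(X',Y')$; functoriality (preservation of identities and composites) is forced by the uniqueness half of Yoneda, and naturality of $\eps$ in both variables is exactly the statement that $\eps_{X',Y'}\comp JU(u,v)$ equals the reindexing of $\eps_{X,Y}$, which is again the defining universal property applied to a single guarded morphism. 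The only mild subtlety — and the one place that deserves a careful line rather than a wave of the hand — is checking that the reindexing operation on $\GHom*{\BC}(J(\argument),-,-)$ induced by a $\BC$-morphism $u+v$ is well-defined, i.e.\ lands in the guardedness predicate; this is precisely an instance of $\textbf{(cmp}_\gtag)$ with $g=u$, $h=\inr\comp v$ (or the evident variant), so it is immediate from \autoref{def:grd}. Everything else is routine Yoneda bookkeeping carried out parametrically in $X$ and $Y$.
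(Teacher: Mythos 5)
Your proposal is correct and follows essentially the same route as the paper's proof: both directions reduce to the Yoneda correspondence between natural isomorphisms $\GHom*{\BC}(J(\argument),X,Y)\iso\BV(\argument,U(X,Y))$ and universal elements $\eps_{X,Y}$, and bifunctoriality of $U$ together with naturality of $\eps$ is extracted from the universal property exactly as in the paper (which merely places that step first rather than last). Your extra care in checking that $g\mapsto\eps_{X,Y}\comp Jg$ lands in the guardedness predicate via \textbf{(trv$_\gtag$)} and \textbf{(cmp$_\gtag$)} is a point the paper leaves implicit, and is welcome.
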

\begin{proof}
First, we argue that the declared characterization entails that $U$ is a bifunctor
and $\eps_{X,Y}$ is natural in $X$ and $Y$. Let~$g\c X\to X'$, 
$h\c Y\to Y'$, and note that~$(g+h)\comp\eps_{X,Y}\c JU(X,Y)\to X'\gsep Y'$. 
Then the diagram 
\begin{equation*}
\begin{tikzcd}[column sep = 12ex,row sep = 4ex]
JU(X,Y)
  \ar[d,"\eps_{X,Y}"']
  \ar[r, "{JU(g,h)}", dotted] &
JU(X',Y') 
  \ar[d,"\eps_{X',Y'}"] \\
X + Y
  \ar[r,"g+h"]&
X' + Y' 
\end{tikzcd}
\end{equation*}
commutes for some~$U(g,h)\c U(X,Y)\to U(X',Y')$, uniquely determined by~$g$ and 
$h$. The fact that thus defined~$U(\argument,\argument)$ is functorial is obvious 
by definition. Moreover, the above diagram establishes the naturality of~$\eps_{X,Y}$
in~$X$ and~$Y$.

Observe that, by Yoneda lemma, for any bifunctor~$U$, a natural transformation~$\xi\c\BV(\argument\comma U(X,Y)) \to \GHom*{\BC}(J(\argument),X,Y)$
is uniquely determined by a morphism~$\eps_{X,Y}\c UJ(X,Y)\to X\gsep Y$. We proceed to show that
componentwise isomorphic~$\xi$ correspond to those~$\eps_{X,Y}$ for which the 
above-described maps~$(\argument)^\natural$ exist. Note that~$\xi$ and~$\eps$
are connected as follows: 
\begin{align*}
\eps_{X,Y} =&\; \xi_{U(X,Y)}(\id\c U(X,Y)\to U(X,Y)),\\*
\xi_X(f\c Z\to U(X,Y)) =&\; \eps_{X,Y}\comp Jf. 
\end{align*}
The map~$\xi_X\c\BV(Z,U(X,Y)) \to \GHom*{\BC}(JZ,X,Y)$ is a bijection iff 
every~$f\c JZ\to JU(X,Y)$ is of the form~$\eps_{X,Y}\comp Jg$ for some 
$g\c Z\to U(X,Y)$, which is uniquely identified by~$f$, in other words, for 
every~$f\c JZ\to JU(X,Y)$ there is a unique~$f^\natural\c Z\to U(X,Y)$, such that
$f = \eps_{X,Y}\comp Jf^\natural$. 
\end{proof}
\begin{lemma}\label{lem:repr-to-adj}
If\/~$\BC$ is~$J$-guarded, then~${J\dashv U(\argument,\iobj)}$ with
$U$ as in~\autoref{lem:cof}.
\end{lemma}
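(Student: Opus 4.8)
The goal is to show that under $J$-representability, the functor $U(\argument,\iobj)\c\BC\to\BV$ (with the second argument fixed at the initial object $\iobj$) is right adjoint to $J$. The strategy is to recognize $U(X,\iobj)$ as a representing object of a presheaf that is \emph{isomorphic} to the ordinary hom-functor $\BC(J(\argument),X)$, so that the adjunction $J\dashv U(\argument,\iobj)$ drops out of $J$-representability together with the axioms \textbf{(trv$_\gtag$)} and \textbf{(par$_\gtag$)} of the guardedness predicate.

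\textbf{Key steps.} First I would exhibit, for each $X\in|\BC|$, a natural isomorphism of presheaves $\BC(J(\argument),X)\iso\GHom*{\BC}(J(\argument),X,\iobj)\c\BV^\op\to\Set$. The forward map sends $f\c JZ\to X$ to $\dcomp{f}{\inl}\c JZ\to X\gsep\iobj$, which is guarded by \textbf{(trv$_\gtag$)} since $X+\iobj\iso X$; naturality in $Z$ is immediate. For the inverse, given $g\c JZ\to X\gsep\iobj$, i.e.\ $g\c JZ\to X+\iobj$, postcompose with the canonical iso $X+\iobj\iso X$ (equivalently $[\id,\bang]$) to get back a morphism $JZ\to X$. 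These two assignments are mutually inverse because $X+\iobj\iso X$ canonically, and the composite $\inl$ followed by $[\id,\bang]$ is the identity on $X$ while the other composite is the identity on any $g$ since $\iobj$ is strict initial — here one uses that in a distributive category $X+\iobj\iso X$ via $[\id,\bang]$, with $\inl$ as inverse. (Strictly one checks $[\id,\bang]\comp\inl=\id$ and $\inl\comp[\id,\bang]=\id_{X+\iobj}$, the latter by initiality applied to the second summand.)

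\textbf{Assembling the adjunction.} By hypothesis $\GHom*{\BC}$ is $J$-representable, so for each $X$ there is $U(X,\iobj)\in|\BV|$ with $\GHom*{\BC}(JZ,X,\iobj)\iso\BV(Z,U(X,\iobj))$ naturally in $Z$; by \autoref{lem:cof} this makes $U(\argument,\iobj)$ functorial and the counit components $\eps_{X,\iobj}$ natural. Composing with the presheaf isomorphism from the previous step yields $\BC(JZ,X)\iso\BV(Z,U(X,\iobj))$, natural in both $Z$ and $X$ (naturality in $X$ being inherited from functoriality of $U(\argument,\iobj)$ and naturality of $\eps$, together with naturality of the canonical iso $X+\iobj\iso X$). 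This is exactly the hom-set bijection defining the adjunction $J\dashv U(\argument,\iobj)$; the unit is $\id^\natural$ in the sense of \autoref{lem:cof} read through the presheaf iso, and the triangle identities follow formally from those of \autoref{lem:cof}.

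\textbf{Main obstacle.} The only genuinely non-formal point is the handling of the initial summand: verifying that $X+\iobj\iso X$ canonically and that this iso is natural and compatible with the guardedness-forgetting map, so that the presheaf isomorphism in the first step is both well-defined (lands in guarded morphisms via \textbf{(trv$_\gtag$)}) and genuinely inverse to the forgetful map. This hinges on $\iobj$ being strict initial in $\BV$ (hence, via strict coproduct preservation of $J$, the relevant identities hold in $\BC$ as well) and on the coherence of $\dist$ with the initial object — all available in a distributive Freyd category. Everything after that is a routine transport of the adjunction $J\dashv U(\argument,\iobj)$ through a natural isomorphism of representing presheaves.
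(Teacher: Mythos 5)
Your proof is correct and follows essentially the same route as the paper: establish the natural isomorphism $\BC(JZ,X)\iso\GHom*{\BC}(JZ,X,\iobj)$ via $f\mapsto\inl\comp f$ (guarded by \textbf{(trv$_\gtag$)}) and $g\mapsto[\id,\bang]\comp g$, then compose with the representability isomorphism $\GHom*{\BC}(JZ,X,\iobj)\iso\BV(Z,U(X,\iobj))$. The only cosmetic difference is that you invoke strict initiality and distributivity where plain initiality of $\iobj$ (any two morphisms out of $\iobj$ coincide) already gives $\inl\comp[\id,\bang]=\id_{X+\iobj}$.
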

\begin{proof}
Suppose that~$\BC$ is~$J$-representable with~$J\c\BV\to\BC$. Observe 
that~$\GHom*{\BC}(JX,A,\iobj)$ is isomorphic to~$\BC(JX, A)$ naturally in~$X$:
the components of the isomorphism are the maps~$f\mto\inl\comp f$ for 
$f\in\BC(JX, A)$ and~$g\mto [\id,\bang]\comp g$ for~$g\in\GHom*{\BC}(JX, A,\iobj)$.
Using~\eqref{eq:ghom-iso}, we thus arrive at
\begin{align*}
\BV(X,U(A,\iobj)) \iso \GHom*{\BC}(JX, A,\iobj)\iso \BC(JX, A),
\end{align*}    
i.e.~${J\dashv U(\argument,\iobj)}$.
\end{proof}
By~\autoref{lem:repr-to-adj}, representability fails already if $J$ has no right 
adjoint. Instructive examples of non-representability are thus only those where~$J$ \emph{does} have a right adjoint.
\begin{proposition}\label{prop:rep-in-set}
Let $\BBT$ be a monad over the category of sets $\Set$ with
the axiom of choice. If\/ $\Set_{\BBT}$ is guarded, the guardedness predicate 
is representable iff every $f\c X\to T(Y+Z)$ is guarded whenever all 
the compositions $1\ito X\xto{f} T(Y+Z)$ are guarded.
\end{proposition}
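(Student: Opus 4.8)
The plan is to produce, when $\GHom*{\BC}$ is representable, an explicit representing object, and conversely to recognise the stated pointwise condition as exactly what makes the obvious candidate representing object work. Write $\BC = \Set_\BBT$, identify $\BC(1,A+B)$ with $T(A+B)$, and call $t\in T(A+B)$ \emph{guarded} if the corresponding morphism $1\to A+B$ lies in $\GHom*{\BC}(1,A,B)$; let $G_{A,B}\subseteq T(A+B)$ collect the guarded elements, so $G_{A,B} = \GHom*{\BC}(J1,A,B)$ since $J1 = 1$. The one general fact I will use repeatedly is that guardedness is closed under arbitrary precomposition: if $h\c W\to X$ is any morphism of $\BC$ and $f\c X\to Y\gsep Z$ is guarded, then $f\comp h$ is guarded. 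Indeed, by \textbf{(trv$_\gtag$)} the morphism $h$, read along $X\iso X+\iobj$, lies in $\GHom*{\BC}(W,X,\iobj)$, and \textbf{(cmp$_\gtag$)} applied with $[f,\bang]\c X+\iobj\to Y+Z$ then yields $f\comp h = [f,\bang]\comp h\in\GHom*{\BC}(W,Y,Z)$. In particular, whenever $f\c X\to Y\gsep Z$ is guarded, so is each $f(x) = f\comp Jx$ for $x\c 1\to X$ -- the easy half of the pointwise characterisation, valid unconditionally.

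For the implication from pointwise closure to representability, assume every $f\c X\to T(Y+Z)$ is guarded once all of $1\ito X\xto{f}T(Y+Z)$ are. Put $U(X,Y)\ass G_{X,Y}$ and let $\eps_{X,Y}\c U(X,Y)\to X+Y$ be the morphism of $\BC$ given by the set-inclusion $G_{X,Y}\ito T(X+Y)$. Each point $1\to U(X,Y)$ of $\eps_{X,Y}$ picks out an element of $G_{X,Y}$, which is guarded by definition, so the hypothesis gives $\eps_{X,Y}\in\GHom*{\BC}(U(X,Y),X,Y)$. For any guarded $f\c Z\to X\gsep Y$, the last observation above shows $f(z)\in G_{X,Y}$ for all $z\in Z$, hence $f$ corestricts to a function $f^\natural\c Z\to G_{X,Y}$ in $\Set$; this $f^\natural$ is the unique morphism with $\eps_{X,Y}\comp Jf^\natural = f$, uniqueness being clear since a set-inclusion is monic. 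By \autoref{lem:cof} this data exhibits $J$-representability.

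For the converse, assume $J$-representability, with $U$, $\eps$ and $(\argument)^\natural$ as in \autoref{lem:cof}, and let $f\c X\to T(Y+Z)$ be such that each $f(x)\c 1\to Y\gsep Z$ is guarded. Then every $f(x)$ factors uniquely through $\eps_{Y,Z}$, say $f(x) = \eps_{Y,Z}\comp J(u_x)$ with $u_x = (f(x))^\natural\in U(Y,Z)$; uniqueness makes $x\mapsto u_x$ a genuine function $g\c X\to U(Y,Z)$. Unfolding the Kleisli composition, $\eps_{Y,Z}\comp Jg$ is the function $x\mapsto\eps_{Y,Z}(u_x) = f(x)$, that is $f = \eps_{Y,Z}\comp Jg$ in $\BC$; as $\eps_{Y,Z}$ is guarded, the precomposition closure makes $f$ guarded.

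I expect the part needing the most care is the bookkeeping between $\BC = \Set_\BBT$ and $\Set$ -- in particular verifying that the set-inclusion of $G_{X,Y}$, transported along $\BC(1,X+Y) = T(X+Y)$, really is the universal guarded morphism of \autoref{lem:cof} -- together with the derivation of the precomposition-closure property from the three guardedness axioms; neither is deep. Incidentally, the axiom of choice is not essential to this argument, the witnesses $u_x$ being uniquely determined, though it is the standing assumption of the section.
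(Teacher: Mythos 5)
Your proof is correct, and it takes a genuinely more direct route than the paper's. The paper invokes an external characterization of representability (Proposition~12 of the cited work \cite{GoncharovRauchEtAl21}, valid in any category with image factorizations), which reduces the problem to (i) the existence of a greatest guarded subobject of each $T(X+Y)$ and (ii) descent of guardedness along regular epis; the axiom of choice enters precisely to dispose of clause (ii) via sections of surjections. You instead verify the conditions of \autoref{lem:cof} directly, taking as representing object the set $G_{X,Y}$ of guarded elements of $T(X+Y)$ --- which coincides with the paper's union of all guarded subobjects, since by your precomposition-closure observation every point of a guarded subobject is guarded, while every guarded point spans a guarded singleton. Your derivation of precomposition closure from \textbf{(trv$_\gtag$)} and \textbf{(cmp$_\gtag$)} is exactly right, and both directions of the equivalence go through as you describe. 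What your route buys is self-containedness (only \autoref{lem:cof} and the guardedness axioms are used) and, as you correctly note, the elimination of the axiom of choice: the witnesses $u_x$ in the converse direction are uniquely determined, so the map $g$ exists by unique choice alone. What the paper's route buys is generality, since the image-factorization criterion applies well beyond $\Set$, at the price of the choice hypothesis appearing in the statement.
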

\begin{proof}
It follows from previous results~\cite[Proposition~12]{GoncharovRauchEtAl21} that 
in any category, where every morphism admits an image factorization (specifically in
$\Set$), representability of guardedness in the Kleisli category of a monad $\BBT$ is 
equivalent to the following conditions. 
\begin{enumerate}[wide]
 \item for all sets $X$ and $Y$, there is a greatest subobject $Z\ito T(X+Y)$, which is 
 guarded as a morphism;
 \item for every regular epic $e\c X'\to X$ and every morphism $f\c X\to T(Y+Z)$,
$f\comp e\c X'\to Y\gsep Z$ implies $f\c X\to Y\gsep Z$.
\end{enumerate}
The second clause follows for $\Set$: by the axiom of choice, every $e\c X'\twoheadrightarrow X$
has a section, say $m$, and then $f\comp e\c X'\to Y\gsep Z$ implies $f = f\comp e\comp m\c X'\to Y\gsep Z$.
The second clause is equivalent to the property that the injection $\bigcup_{Z\ito X\gsep Y} Z\ito T(X+Y)$ 
is guarded (and hence is the largest guarded subobject by construction).
Precomposing this map with any map whose source is $1$ yields a guarded map 
by definition, hence the condition from the proposition's statement 
is sufficient. Let us show that it is necessary. Let $f\c X\to T(Y+Z)$, and suppose 
that all the compositions $1\ito X\xto{f} T(Y+Z)$ are guarded. If the largest
guarded subobject exists, it must be the union of all such maps. Since this 
union is precisely the original map $f$, it is guarded. 
\end{proof}
\begin{example}[Failure of Representability]
In $\Set$, let $f\c X\to Y+Z$ be guarded in $Z$ if ${\{z\in Z\mid f^\mone(\inr z)\neq\emptyset\}}$ is finite.
The axioms of guardedness are easy to verify.
By~\autoref{prop:rep-in-set}, this predicate is not $\Id$-representable, 
as any $1\ito X\xto{\inr} \iobj + X$ 
is guarded, but $\inr$ is not, unless $X$ is finite.
\end{example}
In what follows, we will use $\pm$ as a binary operation that binds stronger
than monoidal products ($\ten$, $+$, $\ldots$), so, e.g.\ $X\ten Y\pm Z$ will read as 
$X\ten (Y\pm Z)$.
\begin{theorem}\label{thm:gpm}
Given an identity-on-objects guarded $J\c\BV\to\BC$, 
$\GHom*{\BC}$ is $J$-representable iff
  \begin{itemize}
    \item there is a bifunctor $\IB{}{}\c\BV\times\BV\to\BV$, such that $\argument\pm 0$ is a monad and $\BC\iso\BV_{\argument\pm 0}$;
      \item\label{it:gpm2} there is a family of guarded 
  morphisms (w.r.t.\ the guardedness predicate, induced by $\BC\iso\BV_{\argument\pm 0}$) $(\eps_{X,Y}\c\IB{X}{Y}\to X\gsep Y)_{X,Y\in |\BV|}$, natural in $X$ and $Y$;
  \item\label{it:gpm3} for every guarded $f\c X\to Y\gsep Z$, there is
   unique $f^\sharp\c X\to \IB{Y}{Z}$, such that %
   the diagram 
  \begin{equation*} %
\begin{tikzcd}[column sep = 12ex,row sep = 3ex]
   & \IB{Y}{Z}\dar["\eps_{Y,Z}"]\\
  X\ar[ur, "f^\sharp"]\ar[r,"f"'] & \IB{(Y+Z)}{\iobj}               
  \end{tikzcd}
  \end{equation*}
  commutes.
\end{itemize}
\end{theorem}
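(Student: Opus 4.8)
The plan is to translate the abstract representability statement from \autoref{lem:cof} into the concrete structure described here, where the bifunctor $U$ is exhibited explicitly as $X \pm Y := U(X,Y)$, and the monad structure on $\argument \pm 0$ is pinned down via \autoref{lem:repr-to-adj} and \autoref{prop:kleisli}. The two directions are largely bookkeeping, but the forward direction requires extracting the monad and checking that the Kleisli category it induces is isomorphic to $\BC$ \emph{as a guarded category}.

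For the \emph{`only if'} direction, I would start from $J$-representability and invoke \autoref{lem:cof} to obtain the bifunctor $U$, the counit $\eps_{X,Y}\c JU(X,Y) \to X \gsep Y$, and the operator $(\argument)^\natural$. Setting $\IB{X}{Y} := U(X,Y)$, by \autoref{lem:repr-to-adj} we have $J \dashv U(\argument,\iobj)$, so by \autoref{prop:kleisli} there is a monad $\BBT$ with $\BC \iso \BV_{\BBT}$ and $T = U(\argument,\iobj) = \argument \pm 0$; transporting the Kleisli triple along this isomorphism gives the monad structure on $\argument \pm 0$ and the guardedness predicate induced on $\BV_{\argument\pm 0}$ matches the original one on $\BC$ by construction. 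Naturality of $\eps_{X,Y}$ in both arguments is already part of \autoref{lem:cof}. For item~\eqref{it:gpm3}, I need to observe that $f\c X \to Y \gsep Z$ is, under $\BC \iso \BV_{\argument\pm 0}$, precisely a morphism $X \to \IB{(Y+Z)}{\iobj}$ landing in the guarded part, so the universal property of $\eps_{Y,Z}$ supplied by $(\argument)^\natural$ gives the unique $f^\sharp := f^\natural$ making the displayed triangle commute — here one must check that the bottom edge $f\c X \to \IB{(Y+Z)}{\iobj}$ in the diagram is the same morphism as the Kleisli morphism $f\c X\to Y\gsep Z$ under the identification $T(Y+Z) = \IB{(Y+Z)}{\iobj}$, and that $Jf^\natural$ composed with $\eps$ reconstructs $f$.

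For the \emph{`if'} direction, given the three bullet points I would run \autoref{lem:cof} in reverse: set $U(X,Y) := \IB{X}{Y}$, take the given $\eps_{X,Y}$, and define $(\argument)^\natural := (\argument)^\sharp$. The hypothesis that $\argument \pm 0$ is a monad with $\BC \iso \BV_{\argument \pm 0}$ makes $J$ a left adjoint and identifies morphisms $JZ \to \IB{(X+Y)}{\iobj}$ in $\BC$ with morphisms $Z \to T(X+Y)$ in $\BV$; the diagram in item~\eqref{it:gpm3} then says exactly that every guarded $f$ factors uniquely through $\eps_{X,Y}$, which is the defining condition of \autoref{lem:cof}. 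Hence $\GHom*{\BC}$ is $J$-representable.

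The main obstacle I anticipate is not any single hard lemma but the careful matching of the guardedness predicate across the isomorphism $\BC \iso \BV_{\argument\pm 0}$: one must verify that the predicate induced on the Kleisli category $\BV_{\argument\pm 0}$ by item~\eqref{it:gpm3} (or rather, the one presupposed in item~\eqref{it:gpm2}) coincides with the original $\GHom*{\BC}$, so that ``guarded'' means the same thing on both sides. A secondary subtlety is that the object $\IB{(Y+Z)}{\iobj}$ appearing as the codomain in the triangle must genuinely be the monad's functor $T$ applied to $Y+Z$; this forces $\IB{X}{0} \iso T X$ to be compatible with the bifunctorial action of $\pm$, and one should confirm that the bifunctor structure of $\pm$ restricted to the second argument being $0$ recovers the functorial action of $T$ — this uses functoriality of $U$ from \autoref{lem:cof} together with the transport of structure. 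Once these identifications are set up cleanly, the rest is a direct application of \autoref{lem:cof}, \autoref{lem:repr-to-adj}, and \autoref{prop:kleisli}.
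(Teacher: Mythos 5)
Your proposal is correct and follows essentially the same route as the paper's proof: both directions reduce to \autoref{lem:cof}, after using \autoref{lem:repr-to-adj} and \autoref{prop:kleisli} in the forward direction to identify $\BC$ with the Kleisli category of $\argument\pm\iobj$. The only cosmetic difference is that the paper makes the bifunctorial action on morphisms explicit as $\IB{f}{g}=U(Jf,Jg)$, a bookkeeping point you flag but do not spell out.
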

\begin{proof}
($\Rightarrow$) By \autoref{lem:repr-to-adj} and \autoref{prop:kleisli}, 
assume w.l.o.g.\ that $\BC=\BV_{\BBT}$ for $T=U(J(\argument),\iobj)$.
Let $\IB{f}{g} = U(Jf,Jg)$. \autoref{lem:cof} yields the desired 
guarded morphisms $\eps_{X,Y}\c\IB{X}{Y}\to \IB{(X+Y)}{\iobj}$, which satisfy
the requisite universal property, obtained by interpreting the corresponding property
of \autoref{lem:cof} in~$\BC=\BV_{\BBT}$.

($\Leftarrow$) Conversely, given a bifunctor $\IB{}{}\c\BV\times\BV\to\BV$ with
the described properties, let~$\BBT$ be the monad on $\IB{\argument}{\iobj}$ and apply 
\autoref{lem:cof} with $\BC=\BV_{\BBT}$, $J$ being the free functor $J\c\BV\to\BV_{\BBT}$
and $U(X,Y)=\IB{X}{Y}$.
\end{proof}
\autoref{thm:gpm} provides a bijective correspondence between morphisms 
$f\c X\to Y\gsep Z$ in~$\BC$ and the morphisms $f^\sharp\c X\to Y\pm Z$ in $\BV$, representing 
them. Uniqueness of the $f^\sharp$ is easily seen to 
be equivalent to the monicity of the $\eps_{X,Z}$. 

\section{Guarded Parameterized Monads}\label{sec:gpm}
\autoref{thm:gpm} describes guardedness as a 
certain bifunctor $\IB{}{}\c\BV\times\BV\to\BV$ and a family of morphisms 
$(\eps_{X,Y}\c \IB{X}{Y}\to X\gsep Y)_{X,Y\in |\BV|}$, so that the 
guardedness predicate is derivable.
However, the guardedness laws are still 
formulated in terms of this predicate, and not in terms of
$\IB{}{}$ and $\eps$.
To make the new definition of guardedness self-contained, we must 
identify a collection of canonical morphisms and a complete set of equations 
relating them, in the sense that the guardedness laws for all derived
guarded morphisms follow. For example, by applying~$(\argument)^\sharp$
to the composition
\begin{align*}
X\pm (Y + Z)  \xto{\eps_{X,Y+Z}} (X+(Y+Z))\pm\iobj\iso ((X+Y)+Z))\pm\iobj
\end{align*}
we obtain a morphism $\ups_{X,Y,Z}\c X\pm (Y + Z) \to (X+Y)\pm Z$, which represents
weakening of the guardedness guarantee: in $X\pm (Y + Z)$ the guarded part is~$Y + Z$,
while in $(X+Y)\pm Z$ the guarded part is only $Z$. It should not make a difference
though if starting from $X\pm (Y + (Z + V))$ we apply $\ups$ twice or rearrange 
$Y + (Z + V)$ by associativity and subsequently apply $\ups$ only once -- the results must 
be canonically isomorphic, which is indeed provable. Similarly to this case we 
introduce further morphisms and derive laws relating them. We then prove
that the resulting axiomatization enjoys a coherence property (\autoref{thm:coherence}) 
in the style of Mac Lane's coherence theorem for (symmetric) monoidal 
categories~\cite{Mac-Lane71}. In what follows, we switch from coproducts to 
general symmetric tensor products, as coherence can only hold if the corresponding 
structure is not involved. 
\begin{definition}[Guarded Parameterized Monad]\label{def:pm}
A \emph{guarded parameterized monad} on a symmetric monoidal category $(\BV,\ten,I)$
consists of a bifunctor $\IB{}{}\c\BV\times\BV\to\BV$ and natural transformations 
\begin{flalign*}
  &&\eta\c&\; A \to A \pm I,&&\\* 
  &&\xi\c&\;  (A\pm B) \pm C \to A\pm (B\ten C),  	&\ups\c&\; A\pm (B \ten C)  \to (A\ten B)\pm C,&&\\* 
  &&\zeta\c&\;  A\pm (B\pm C) \to A\pm (B \ten C),    	&\chi\c&\; A\pm B \ten C\pm D\to (A \ten C)\pm (B \ten D) .&&%
\end{flalign*}
such that the following diagrams commute, where $\iso$ refers to the obvious canonical 
isomorphisms 
{\allowdisplaybreaks %
\begin{eqnarray*}
\begin{tikzcd}[column sep=1em, row sep=2ex]
(A\pm I)\pm B \ar[rr,"\xi"] && A\pm (I\ten B)\\                
&A\pm B
	\ar[ul,"\eta\pm\id"]
	\ar[ur, "\iso"'] 
\\
(A\pm B)\pm I 
	\ar[rr,"\xi"] && 
A\pm (B\ten I)\\                
&A\pm B
	\ar[ul,"\eta"]
	\ar[ur, "\iso"'] 
\end{tikzcd}
\quad~
\begin{tikzcd}[column sep=2em, row sep=normal]
((A\pm B)\pm C)\pm D
	\rar["\xi"]
	\dar["\xi\pm\id"'] & 
(A\pm B)\pm (C\ten D)
	\ar[dd,"\xi"]
\\
(A\pm (B\ten C))\pm D
	\dar["\xi"'] & 
\\
A\pm ((B\ten C)\ten D)
	\rar["\iso"] & 
A\pm (B\ten(C\ten D))
\end{tikzcd}
\end{eqnarray*}

\begin{eqnarray*}
\begin{tikzcd}[column sep=.5em, row sep=2ex]
A\pm (I \ten B) 
	\ar[rr,"\ups"] & & 
(A\ten I)\pm B
 \\
&  
A\pm B\ar[ur,"\iso"']\ar[ul,"\iso"]  &   
\end{tikzcd}
\quad
\begin{tikzcd}[column sep=1.5em, row sep=normal]
  A\pm (B\ten (C\ten D))
  	\rar["\iso"]
  	\dar["\ups"']& 
  A\pm ((B \ten C)\ten D)
  	\ar[dd,"\ups"]
  \\
  (A\ten B)\pm (C\ten D)
  	\dar["\ups"']& 
  \\
  ((A\ten B)\ten C)\pm D
  	\rar["\iso"]& 
  (A\ten (B\ten C))\pm D  
\end{tikzcd}
\hspace{2em} 
\end{eqnarray*}
\begin{eqnarray*}
\begin{tikzcd}[column sep=1.5em, row sep=normal]
  A \ten B 
	\dar["\eta"']
  	\rar["\eta\ten\eta"]& 
  A\pm I\ten B\pm I
  	\dar["\chi"]\\
  (A\ten B)\pm I
  	\rar["\iso"]& 
  (A\ten B)\pm (I\ten I)   
\end{tikzcd}
\quad
\begin{tikzcd}[column sep=.5em, row sep=normal]
  A\pm B\ten C\pm D 
  	\dar["\chi"']
  	\rar["\iso"]&[1em] 
  C\pm D\ten A\pm B
  	\dar["\chi"]\\
  (A\ten C)\pm (B \ten D)
  	\rar["\iso"] & 
  (C\ten A)\pm (D \ten B)   
\end{tikzcd}\quad
\end{eqnarray*}
\begin{eqnarray*}
\qquad\quad\begin{tikzcd}[column sep=2em, row sep=normal]
  A\pm B\ten (C \pm D\ten E \pm F)
  	\dar["\id\pm\chi"']
  	\rar["\iso"] &
  (A\pm B\ten C \pm D)\ten E \pm F
  	\dar["\chi\pm\id"]\\
  A\pm B\ten (C \ten E)\pm (D \ten F)
  	\dar["\chi"'] & 
  (A\ten C)\pm (B \ten D)\ten E \pm F
  	\dar["\chi"]\\
  (A\ten (C\ten E))\pm (B\ten (D\ten F))
  	\rar["\iso"] & 
  ((A\ten C)\ten E)\pm ((B\ten D)\ten F) 
\end{tikzcd}\qquad\\[-1ex]
\end{eqnarray*}
\begin{eqnarray*}
\begin{tikzcd}[column sep=-1em, row sep=2ex]
(A\pm B)\pm (I\pm I) 
	\ar[rr,"\chi"] && 
(A\ten I)\pm (B\ten I)\\                
&
(A\pm B)\ten I
	\ar[ul,"\id\ten\eta"]
	\dar[ur, iso] 
\end{tikzcd}
\quad
\begin{tikzcd}[column sep=1em, row sep=2ex]
A\pm (B\pm I) \ar[rr,"\zeta"] && A\pm (B\ten I)\\                
&A\pm B\ar[ul,"\id\pm\eta"]\dar[ur, iso] 
\end{tikzcd}\qquad\\[-4ex]
\end{eqnarray*}
\begin{eqnarray*}
\begin{tikzcd}[column sep=8em, row sep=normal]
&[-7em] A\pm ((B\pm C)\pm (D\pm E))
  \ar[dl,"\zeta"']\ar[dr,"\id\pm\zeta"] &[-7em]\\[-4ex]
A\pm ((B\pm C)\ten (D\pm E))\dar["\id\pm\chi"']&   & A\pm ((B\pm C)\pm (D\ten E))\dar["\id\pm\xi"]\\
A\pm ((B\ten D)\pm (C\ten E))\dar["\zeta"']&  & A\pm (B\pm(C\ten (D \ten E)))\dar["\zeta"] \\
A\pm ((B\ten D)\ten (C\ten E))\ar[rr, "\cong"] &  & A\pm (B\ten (C\ten (D \ten E)))
\end{tikzcd}
\end{eqnarray*}
\begin{eqnarray*}
\begin{tikzcd}[column sep=8em, row sep=normal]
&[-7em] (A\pm (B\pm C))\pm (D\pm E) 
  \ar[dl,"\xi"']\ar[dr,"\zeta\pm\id"] &[-7em]\\[-4ex]
A\pm ((B\pm C)\ten (D\pm E))\dar["\id\pm\chi"']&& 	(A\pm (B\ten C))\pm (D\pm E)\dar["\zeta"]\\
A\pm ((B\ten D)\pm (C\ten E))\dar["\zeta"']&&		(A\pm (B\ten C))\pm (D\ten E)\dar["\xi"] \\
A\pm ((B\ten D)\ten (C\ten E))\ar[rr, "\cong"]&&	A\pm ((B\ten C)\ten (D \ten E))
\end{tikzcd}
\end{eqnarray*}
\begin{eqnarray*}
&\begin{tikzcd}[column sep=2em, row sep=normal]
((A\pm B) \pm C) \ten ((D\pm E) \pm F)
  \rar["\chi"]
  \dar["\xi\ten\xi"']&
(A\pm B\ten D\pm E) \pm (C\ten F)
  \dar["\chi\pm\id"] \\
A\pm (B \ten C)\ten D\pm (E\ten F)
  \dar["\chi"'] & 
((A\ten D)\pm (B\ten E)) \pm (C\ten F) 
  \dar["\xi"]\\
(A\ten D)\pm ((B \ten C)\ten (E\ten F))
  \rar["\iso"] & 
(A\ten D)\pm ((B\ten E)\ten (C\ten F))
\end{tikzcd}
\end{eqnarray*}
\begin{eqnarray*}
&\begin{tikzcd}[column sep=2em, row sep=normal]
(A\pm (B\pm C))\ten (D\pm (E\pm F))
  \dar["\zeta\ten\zeta"']
  \rar["\chi"]&
(A\ten D)\pm (B\pm C \ten E\ten F)
  \dar["\id\pm\chi"] \\
A\pm (B \ten C)\ten D\pm (E\ten F)
  \dar["\chi"'] & 
(A\ten D)\pm ((B\ten E)\pm (C\ten F)) 
  \dar["\zeta"]\\
(A\ten D)\pm ((B \ten C)\ten (E\ten F))
  \rar["\iso"] & 
(A\ten D)\pm ((B\ten E)\ten (C\ten F))
\end{tikzcd}
\end{eqnarray*}
\begin{eqnarray*}
&\hspace{0em}\begin{tikzcd}[column sep=2em, row sep=normal]
(A\pm B)\pm (C\pm D \ten E\pm F) 
  \rar["\ups"]
  \dar["\id\pm\chi"'] &
(A\pm B\ten C\pm D) \pm (E\pm F)
  \dar["\chi\pm\id"]
\\
(A\pm B)\pm ((C\ten E)\pm (D\ten F)) 
  \dar["\zeta"'] & 
((A\ten C) \pm (B\ten D)) \pm (E\pm F)
  \dar["\zeta"]
\\
(A\pm B)\pm ((C\ten E)\ten (D\ten F)) 
  \dar["\xi"'] & 
((A\ten C) \pm (B\ten D))\ten (E\pm F)
  \ar[dd,"\xi"]
\\
A\pm (B\ten ((C\ten D) \ten (E\ten F))) 
  \dar["\iso"'] & 
  \\
A \pm (C\ten ((B\ten D)\ten (E\ten F)))
  \rar["\ups"] & 
(A\ten C) \pm ((B\ten D)\ten (E\ten F))
\end{tikzcd}
\end{eqnarray*}
\begin{eqnarray*}
\begin{tikzcd}[column sep=3em, row sep=normal]
A\pm B\ten C\pm (D \ten E) 
  \ar[r,"\id\ten\ups"]
  \dar["\chi"'] &[-1em]  
A\pm B\ten (C\ten D) \pm E 
  \dar["\chi"]
\\
(A\ten C)\pm (B\ten (D \ten E))
  \dar["\iso"']  & 
(A\ten (C\ten D))\pm (B\ten E) 
  \dar["\iso"]
\\
(A\ten C)\pm (D\ten (B\ten E))
  \rar["\ups"] &
((A\ten C)\ten D))\pm (B\ten E)
\end{tikzcd}
\\[2ex]
\begin{tikzcd}[column sep=2em, row sep=normal]
(A\pm (B \ten C)) \pm D
  \rar["\ups\pm\id"]
  \dar["\xi"'] 
  & 
((A\ten B) \pm C) \pm D
  \ar[dd, "\xi"]
  \\
A\pm ((B\ten C)\ten D)
  \dar["\iso"'] 
&
\\
A\pm (B\ten (C\ten D)) 
  \rar["\ups"] 
&
(A\ten B)\pm (C\ten D) 
\end{tikzcd}\qquad\quad
\end{eqnarray*}
\begin{eqnarray*}
&\hspace{-1em}
\begin{tikzcd}[column sep=2em, row sep=normal]
A\pm (B\pm (C \ten D)) 
  \ar[r,"\id\pm\ups"]
  \dar["\zeta"'] & 
A\pm ((B\ten C) \pm D) 
  \dar["\zeta"]
\\
A\pm (B\ten (C\ten D))
  \rar["\iso"] & 
A\pm ((B\ten C)\ten D)
\end{tikzcd}
\end{eqnarray*}
} %
\end{definition}
\begin{remark}\label{rem:grad}
The first three laws (relating $\eta$ and $\xi$) identify guarded parameterized 
monads as \emph{parametric monads} in the sense of Melli{\'e}s~\cite{Mellies17}, and subsequently
renamed to \emph{graded monads}~\cite{FujiiKatsumataEtAl16}. In our case, more specifically, 
$\pm$ is a $\BV$-graded monad on $\BV$.
\end{remark}
The relevance of the presented axiomatization is certified by the following
\begin{theorem}[Coherence]\label{thm:coherence}
Let $\CE_1$, $\CE_2$ and $\CE_2'$ be expressions, built from~$\ten$, $\pm$ and~$I$ over 
a set of letters, in such a way that $\CE_1$ and $\CE_2\pm \CE_2'$ contain every letter at 
most once and neither~$\CE_2$ nor $\CE_2'$ contain $\pm$. Let $f$ and $g$ be two expressions
built with $\ten$ and $\pm$ over identities, $\eta$, $\ups$, 
$\xi$, $\zeta$, $\chi$, associators, unitors, braidings and inverses of associators and 
unitors, in such a way that the judgements $f\c\CE_1\to\CE_2\pm\CE_2'$
and $g\c\CE_1\to\CE_2\pm\CE_2'$ are formally valid. Then $f=g$ follows from the axioms of guarded parameterized monads. 
\end{theorem}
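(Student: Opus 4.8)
The plan is to follow Mac Lane's strictification argument, reducing the statement to coherence for symmetric monoidal categories~\cite{Mac-Lane71} together with a rigidity analysis of the $\pm$-structure. First I would normalise the purely monoidal parts: by Mac Lane's coherence theorem applied inside $\BV$ and transported through functoriality of $\pm$, every subexpression built from $\ten$ and $I$ alone is joined by a unique canonical isomorphism to the right-bracketed, unit-free word of its letters, so we may assume $\CE_2,\CE_2'$ are words $W,W'$, and that all associators, unitors and braidings occurring in $f,g$ have been absorbed into a fixed skeleton. The argument then rests on one combinatorial observation: none of $\eta,\xi,\zeta,\ups,\chi$ (nor the monoidal isomorphisms) duplicates or deletes a letter, and $\ups$ is the only generator that moves a letter from a right-hand argument of a $\pm$ to a left-hand one --- and only in that direction. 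Consequently $f$ and $g$ must both act as the identity on the underlying set of letters, the letter-sets of $\CE_1$ and $W\pm W'$ coincide, and each letter of $W'$ occurs in $\CE_1$ under a right-hand $\pm$-argument.

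Next I would establish a normal form for objects. Call a letter of an object expression a \emph{parameter letter} if the path from the root to it passes through a right-hand argument of some $\pm$, and a \emph{value letter} otherwise. By induction on the expression, every object $\CE$ carries a derivable morphism $r_\CE\c\CE\to\nf(\CE)$, built only from identities, $\eta,\xi,\zeta,\ups,\chi$ and monoidal isomorphisms via $\ten$ and $\pm$, where $\nf(\CE)$ is the word of the letters of $\CE$ if $\CE$ is $\pm$-free and is $V\pm P$, with $V$ and $P$ the words of value- resp.\ parameter-letters of $\CE$, otherwise: $\xi$ collapses a $\pm$ nested in a left-hand argument, $\zeta$ (after exposing it by a $\ups$) collapses one nested in a right-hand argument, $\chi$ (after an $\eta$ on the sibling tensor factor) pulls a $\pm$ out of a $\ten$, and the monoidal isomorphisms flatten and reorder each layer. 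On the normal forms $W\pm W'$ of the theorem, $\nf$ is the identity; for $\CE_1$ containing a $\pm$, one has $\nf(\CE_1)=V_1\pm P_1$ with the letters of $W'$ among those of $P_1$, the remaining letters of $P_1$ being exactly the letters of $W$ absent from $V_1$.

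The heart of the matter is a rigidity lemma: for the restricted source and target of the theorem, \emph{every} derivable $h\c\CE_1\to W\pm W'$ equals the canonical composite $\CE_1\xto{r_{\CE_1}}\nf(\CE_1)\to W\pm W'$ whose second leg moves the finitely many parameter letters of $\CE_1$ lying in $W$ across the $\pm$ by $\ups$'s and reorders by monoidal isomorphisms. I would prove this by orienting the diagrams of \autoref{def:pm} as rewrite rules on formal composites of generators and showing the system is terminating --- via a well-founded order refining the $\pm$-nesting depth --- and locally confluent, with each critical pair closed by exactly one coherence condition of \autoref{def:pm} (the purely monoidal overlaps being closed by Mac Lane coherence); the hypotheses that $\CE_2,\CE_2'$ are $\pm$-free and that each letter occurs at most once are what force the resulting normal form to depend only on the type. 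Applying the lemma to $f$ and to $g$ gives $f=g$. I expect this last step to be the main obstacle: checking that the numerous coherence conditions of \autoref{def:pm} are exactly enough to discharge all critical pairs --- in particular the overlaps of $\ups$ with $\chi$, of $\zeta$ with $\chi$, and of $\xi$ with $\zeta$, whose resolution is the reason the corresponding compatibility squares appear in \autoref{def:pm} --- and organising the termination measure and the induction so that the ever-present possibility of inserting a spurious $\pm I$ via $\eta$ and the shifting of the value/parameter boundary by $\ups$ stay under control.
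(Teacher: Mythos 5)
Your overall architecture coincides with the paper's: the normal form $V\pm P$ separating value-letters from parameter-letters is exactly the paper's $\nf(\CE)=\nf_1(\CE)\pm\nf_2(\CE)$, your canonical morphism $r_\CE$ is its normalization expression $\nm(\CE)$, and your residual ``second leg'' (finitely many $\ups$'s and monoidal isomorphisms between normal forms) is what the paper calls a \emph{simple} morphism expression, whose uniqueness it likewise reduces to Mac Lane coherence for symmetric monoidal categories. Where you genuinely diverge is in how the rigidity lemma is established. You propose to orient the diagrams of \autoref{def:pm} as a rewriting system and prove termination plus local confluence, closing every critical pair by one axiom; the paper instead proves a weaker, more targeted statement by structural induction on the morphism expression: for each single generator $f\c\CE\to\CE'$ it exhibits an explicit diagram showing $\nm(\CE')\comp f\equiv g\comp\nm(\CE)$ with $g$ an isomorphism expression (when $f$ is not $\ups$) or a simple expression (when $f$ is $\ups$), and then chains these squares along a decomposition of a general $f$ into single-generator layers. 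That route avoids enumerating and resolving all critical pairs of the full system --- including the interchange overlaps and the infinite family of monoidal isomorphisms --- which in your plan is precisely the step you flag as the main obstacle and leave unexecuted; since that confluence check is where essentially all of the content of the theorem lives, your proposal is a sound and standard strategy but not yet a proof. Your preliminary observations (no generator duplicates or deletes letters; only $\ups$ moves a letter across $\pm$, and only from the parameter side to the value side) are correct and are implicitly what makes the shared normal form well-defined; your worry about spurious $\eta$-insertions is real and is handled in the paper by the dedicated $\eta$-case of the induction, using the unit axioms relating $\eta$, $\zeta$ and $\xi$.
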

\begin{proof}
For the sake of the present proof, let us introduce some nomenclature. We will use 
the following names correspondingly for associators, right unitors and braidings:
\begin{gather*}
\alpha\c A\ten (B\ten C)\to (A\ten B)\ten C,\qquad
\rho\c A\ten I\to A,\qquad
\gamma\c A\ten B\to B\ten A.
\end{gather*}
We dispense with the left unitor, since our monoidal structure is symmetric. 

Let us refer to the expressions built from~$\ten$, $\pm$ and~$I$ over some alphabet 
of \emph{object names}, fixed globally from now on, in such a way that every 
object name occurs at most once, as \emph{object expressions}.
We refer to the expressions built with~$\ten$ and $\pm$ over $\id$, $\eta$, $\ups$, $\xi$, $\zeta$, $\chi$,
$\alpha$, $\alpha^\mone$, $\rho$, $\rho^\mone$, $\gamma$ as \emph{morphism expressions}. 
An \emph{isomorphism expression} is then a morphism expression that does not involve 
$\eta$, $\ups$, $\xi$, $\zeta$, $\chi$.
Every morphism expression $f$ unambiguously identifies object expressions $\CE_1$
and $\CE_2$ for which the judgement $f\c\CE_1\to\CE_2$ is formally valid (that is,
in any category, where we can interpret $f$, $\CE_1$ and $\CE_2$, $f$ is a morphism 
from $\CE_1$ to~$\CE_2$).
For two morphism expressions $f,g\c\CE_1\to\CE_2$, let $f\equiv g$ denote \emph{`$f=g$ follows from the axioms of guarded parameterized 
monads`}. An object expression is \emph{normal} if it is of the form $\CE\pm \CE'$ and $\CE$ and $\CE'$ do not contain $\pm$. 
A morphism expression is \emph{simple} if it is a composition of isomorphism expressions 
between normal form expressions and instances of~$\ups$.

For an object expression~$\CE$, we define object expressions $\nf_1(\CE)$ and $\nf_2(\CE)$ recursively with the clauses:
\begin{itemize}[wide]
  \item $\nf_1(\CE) = \CE$, $\nf_2(\CE) = I$ if $\CE=I$ or $\CE$ is an object name;
  \item $\nf_1(\CE \ten \CE') = \nf_1(\CE) \ten \nf_1(\CE')$, $\nf_2(\CE \ten \CE') = \nf_2(\CE) \ten \nf_2(\CE')$; 
  \item $\nf_1(\CE \pm \CE') = \nf_1(\CE)$, $\nf_2(\CE \pm \CE') = \nf_2(\CE) \ten (\nf_1(\CE') \ten \nf_2(\CE'))$. 
\end{itemize} 
Let $\nf(\CE) = \nf_1(\CE)\pm\nf_2(\CE)$, so $\nf(\CE)$ is normal. For any object expression $\CE$ we also define
a \emph{normalization morphism expression} $\nm(\CE)\c \CE\to\nf(\CE)$, by induction as follows:
\begin{itemize}[wide]
  \item $\nm(\CE) = \eta$ if $\CE=I$ or $\CE$ is an object name;
  \item $\nm(\CE \ten \CE') = \chi\comp (\nm(\CE) \ten\nm(\CE'))$; 
  \item $\nm(\CE \pm \CE') = \xi\comp\zeta\comp (\nm(\CE) \pm\nm(\CE'))$. 
\end{itemize} 
The statement of the theorem will follow from the following subgoals.
\begin{enumerate}[wide]
  \item\label{it:coh1} If a morphism expression $f\c \CE\to \CE'$ does not contain $\ups$
  then $\nm(\CE') \comp f \equiv \nm(\CE)\comp g$ for some isomorphism expression $g$.
  \item\label{it:coh2} If a morphism expression $f\c \CE\to \CE'$ does not contain $\eta$, 
  $\xi$, $\zeta$ and $\chi$, then there exists a simple morphism expression $g\c \nf(\CE)\to \nf(\CE')$, 
  such that $\nm(\CE')\comp f \equiv g\comp\nm(\CE)$.
  \item\label{it:coh3} For any two simple morphism expressions $f\c\CE\to\CE'$ and $g\c\CE\to\CE'$, 
  $f\equiv g$.
  \item\label{it:coh4} For every normal object expression $\CE$, $\nm(\CE)\c\CE\to\nf(\CE)$
  is an isomorphism expression.
\end{enumerate}
Indeed, given $f,g\c \CE\to \CE'$ with normal $\CE'$, to prove $f\equiv g$, it suffices 
to prove that $f$ is equal to $\CE\xto{\nm(\CE)}\nf(\CE)\xto{f'} \CE'$ for some simple 
$f'$  -- the analogous statement would be true for~$g$, and we would be done by~\eqref{it:coh3}.
In order to construct $f'$, let us represent $f$ as a composition $f_n\comp\ldots\comp f_1$ where every~$f_i$ with even 
$i$ contains precisely one occurrence of $\ups$ and every $f_i$ with odd $i$ contains 
no occurrences of $\ups$. We obtain 
\begin{equation*}
\begin{tikzcd}[column sep=4em, row sep=normal]
\CE
  \dar["\nm(\CE)"']\rar["f_1"] & 
\CE_1
  \dar["\nm(\CE_1)"']\rar["f_2"] & 
\CE_2\dar["\nm(\CE_2)"']\rar[phantom,"\dots"] & \CE_n\dar["\nm(\CE_n)"]\\
\nf(\CE)
  \rar["\iso"] & 
\nf(\CE_1)\rar["f_2'"] & 
\nf(\CE_2)\rar[phantom,"\dots"] & \nf(\CE_n) 
\end{tikzcd}
\end{equation*}
where $\CE=\CE_n$, every odd diagram commutes by~\eqref{it:coh1} and every even diagram commutes 
by~\eqref{it:coh2}. Note that $\nm(\CE_n)$ is an isomorphism expression 
by~\eqref{it:coh1}, and therefore we obtain the desired presentation for $f$,
by composing the left vertical arrow, the bottom horizontal sequence of arrows
and the inverse of the right vertical arrow. 
It remains to show the subgoals \eqref{it:coh1}--\eqref{it:coh4}. 
\begin{enumerate}[wide]
  \item[\eqref{it:coh1}] 
  We strengthen the claim by demanding the requisite isomorphism $g$ to be of 
  the form $g_1\pm g_2$ and proceed by structural induction on $f$.

  \emph{Induction Base:} $f\in \{\id,\eta\comma \xi\comma\zeta\comma \chi\comma \rho\comma\rho^\mone\comma\alpha\comma\alpha^\mone\comma\gamma\}$.
  If $f=\id$, we are done trivially by taking $g=f$. Consider $f=\eta\c\CE\to\CE\pm I$.
  Then the following diagram commutes, and we obtain the requisite isomorphism 
  $g$ as the bottom horizontal morphism:
    {\small
  \begin{equation*}
  \begin{tikzcd}[column sep=2em, row sep=normal]
  \CE\ar[rr,"\eta"]\ar[ddd,"\nm(\CE)"'] &&  \CE\pm I\dar["\nm(\CE) \pm\eta"]\ar[ddd, shiftarr = {xshift=82}, "\nm(\CE\pm I)"] \\
   &  &  (\nf_1(\CE)\pm\nf_2(\CE))\pm (I\pm I)\dar["\zeta"]\\
   & (\nf_1(\CE)\pm\nf_2(\CE))\pm I\ar[ur,"\id\pm\eta"]\ar[r,iso]\dar["\xi"] &  (\nf_1(\CE)\pm\nf_2(\CE))\pm (I\ten I)\dar["\xi"]\\
  \nf_1(\CE)\pm\nf_2(\CE)\ar[ur,"\eta"]\rar["\iso"] &\nf_1(\CE)\pm(\nf_2(\CE)\ten I) \ar[r, iso]&  \nf_1(\CE)\pm (\nf_2(\CE) \ten (I \ten I))                            
  \end{tikzcd}
  \end{equation*}}

  \noindent The remaining cases are handled by producing analogous commutative diagrams, 
  which are given below. We do not treat $f=\rho^\mone$ and $f=\alpha^\mone$, as these 
  cases are obtained from the corresponding cases $f=\rho$ and $f=\alpha$ by 
  flipping the corresponding diagrams.
  To save space and maintain readability, we write $\CA,\CB,\CC,\CD$ for object 
  expressions, $\CA_1,\CB_1,\CC_1,\CD_1$ for $\nf_1(\CA),\nf_1(\CB),\nf_1(\CC),\nf_1(\CD)$
  and $\CA_2,\CB_2,\CC_2,\CD_2$ for $\nf_2(\CA),\nf_2(\CB),\nf_2(\CC),\nf_2(\CD)$  
  respectively.
  {\small
  \begin{equation*}
  \begin{tikzcd}[column sep=-6em, row sep=normal]
  (\CA\pm\CB) \pm \CC
  	\dar["(\nm\pm\nm)\pm\nm"'] 
  	\ar[rr,"\xi"]\ar[d, ] && 
  \CA\pm (\CB\ten\CC)
    	\dar["\nm\pm (\nm\ten\nm)"]  
  \\
  ((\CA_1\pm\CA_2)\pm(\CB_1\pm\CB_2)) \pm (\CC_1\pm\CC_2)
  	\dar["\zeta\pm\id"'] 
  	\ar[rr,"\xi" ] &  &
  (\CA_1\pm\CA_2)\pm ((\CB_1\pm\CB_2)\ten(\CC_1\pm\CC_2))
  	\dar["\id\pm\chi"]  
  \\
  ((\CA_1\pm\CA_2)\pm(\CB_1\ten\CB_2)) \pm (\CC_1\pm\CC_2) 
  	\ar[dr,"\zeta"]
  	\ar[dd,"\xi\pm\id"']&& 
  (\CA_1\pm\CA_2)\pm ((\CB_1\ten\CC_1)\pm(\CB_2\ten\CC_2))
  	\ar[ddd,"\zeta"] 
  \\
   &((\CA_1\pm\CA_2)\pm(\CB_1\ten\CB_2)) \pm (\CC_1\ten\CC_2)
  	\ar[ddd,"\xi"]
  	\ar[ddl, "\xi\pm\id"', bend left=25] & 
  \\[-2ex]
  (\CA_1\pm (\CA_2\ten (\CB_1\ten\CB_2))) \pm (\CC_1\pm\CC_2)  
  	\ar[d, "\zeta"']&  	& 
  \\
  (\CA_1\pm (\CA_2\ten (\CB_1\ten\CB_2))) \pm (\CC_1\ten\CC_2)
  \ar[ddd, "\xi"'] &
  	& 
  (\CA_1\pm\CA_2)\pm ((\CB_1\ten\CC_1)\ten(\CB_2\ten\CC_2))
  	\ar[ddd,"\xi"]
  \\[-1ex]
 	 &(\CA_1\pm\CA_2)\pm((\CB_1\ten\CB_2) \ten (\CC_1\ten\CC_2))
  	\ar[ur, "\cong"]
  	\ar[d, "\xi"] & 
  \\
  &
  \CA_1\pm(\CA_2\ten((\CB_1\ten\CB_2) \ten (\CC_1\ten\CC_2)))
  	\ar[dr, "\cong"]
  	\ar[dl, "\cong"', bend left=10]&  
  \\[-1ex]
  \CA_1\pm (((\CA_2\ten (\CB_1\ten\CB_2))) \ten (\CC_1\ten\CC_2)) 
  	\ar[rr,"\cong"]	  && 
  \CA_1\pm(\CA_2\ten ((\CB_1\ten\CC_1)\ten(\CB_2\ten\CC_2))) 
  \end{tikzcd}
  \end{equation*}
}

{\small
  \begin{equation*}
  \begin{tikzcd}[column sep=normal, row sep=3.75ex]
  \CA\pm (\CB\pm\CC) 
  	\rar["\zeta"]	
  	\dar["\nm\pm(\nm\pm\nm)"']& 
  \CA\pm (\CB\ten\CC)
  	\dar["\nm\pm(\nm\ten\nm)"]    
  \\       
  (\CA_1\pm\CA_2)\pm ((\CB_1\pm\CB_2)\pm(\CC_1\pm\CC_2)) 
  	\rar["\zeta"]
  	\dar["\id\pm\zeta"']& 
  (\CA_1\pm\CA_2)\pm ((\CB_1\pm\CB_2)\ten(\CC_1\pm\CC_2))      
  	\dar["\id\pm\chi"]     
  \\ 
  (\CA_1\pm\CA_2)\pm ((\CB_1\pm\CB_2)\pm(\CC_1\ten\CC_2)) 
  	\dar["\id\pm\xi"']& 
  (\CA_1\pm\CA_2)\pm ((\CB_1\ten\CC_1)\pm(\CB_2\ten\CC_2))
  	\ar[dd,"\zeta"]           
  \\       
  (\CA_1\pm\CA_2)\pm (\CB_1\pm(\CB_2\ten(\CC_1\ten\CC_2))) 
  	\dar["\zeta"']&          
  \\       
  (\CA_1\pm\CA_2)\pm (\CB_1\ten(\CB_2\ten(\CC_1\ten\CC_2))) 
  	\dar["\xi"']
  	\rar["\iso"]&   
  (\CA_1\pm\CA_2)\pm ((\CB_1\ten\CC_1)\ten(\CB_2\ten\CC_2)) 
  	\dar["\xi"]     
  \\       
  \CA_1\ten(\CA_2\pm (\CB_1\ten(\CB_2\ten(\CC_1\ten\CC_2)))) 
  	\rar["\iso"]&   
  \CA_1\pm(\CA_2\ten ((\CB_1\ten\CC_1)\ten(\CB_2\ten\CC_2)))     
  \end{tikzcd}
  \end{equation*}
}

{\small
  \begin{equation*}
  \begin{tikzcd}[column sep=-9em, row sep=2.65ex]
  \CA\pm\CB \ten \CC\pm \CD
  	\ar[rr,"\chi"]
  	\dar["(\nm\pm\nm)\ten(\nm\pm\nm)"']&[-.5em] &[1.5em] 
  (\CA \ten \CC)\pm (\CB \ten \CD)
  	\dar["(\nm\ten\nm)\pm(\nm\ten\nm)"]
  \\[2ex]
  \parbox{4cm}{\centering $((\CA_1\pm\CA_2)\pm(\CB_1\pm\CB_2)) \ten ((\CC_1\pm\CC_2)\pm (\CD_1\pm\CD_2))$}
  	\ar[dddd,"\zeta\ten\zeta"']
  	\ar[rr,"\chi"]& &
  \parbox{4cm}{\centering $((\CA_1\pm\CA_2) \ten (\CC_1\pm\CC_2))\pm ((\CB_1\pm\CB_2) \ten (\CD_1\pm\CD_2))$}
  	\ar[dd,"\chi\pm\chi"]    
  	\ar[dl, "\id\pm\chi"']  
  \\
  & ((\CA_1\pm\CA_2) \ten (\CC_1\pm\CC_2))\pm ((\CB_1\ten\CD_1)\pm(\CB_2\ten\CD_2))
  \ar[dd, "\zeta"] & 
  \\
  &  &   
  \parbox{4cm}{\centering $((\CA_1\ten \CC_1)\pm(\CA_2\ten\CC_2))\pm ((\CB_1\ten\CD_1)\pm(\CB_2\ten\CD_2))$} 
  	\ar[dd,"\zeta"]
  \\
  &((\CA_1\pm\CA_2) \ten (\CC_1\pm\CC_2))\pm ((\CB_1\ten\CD_1)\ten(\CB_2\ten\CD_2))
  	\ar[dr, "\chi\pm\id"']
  	\ar[dd, "\cong"'] &
  \\
  \parbox{4cm}{\centering $((\CA_1\pm\CA_2)\pm(\CB_1\ten\CB_2)) \ten ((\CC_1\pm\CC_2)\pm (\CD_1\ten\CD_2))$}
  	\ar[ddd,"\xi\ten\xi"']
  	\ar[dr,"\chi", bend right=8] & & 
  \parbox{4cm}{\centering $((\CA_1\ten \CC_1)\pm(\CA_2\ten\CC_2))\pm ((\CB_1\ten\CD_1)\ten(\CB_2\ten\CD_2))$}
  	\ar[dddd, "\xi"]
  	\ar[ddl, "\cong"', bend left=26]
  \\
  & ((\CA_1\pm\CA_2)\ten (\CC_1\pm\CC_2)) \pm ((\CB_1\ten\CB_2)\ten (\CD_1\ten\CD_2))
    \dar["\chi\ten\id"] &    
  \\[5ex]
    & ((\CA_1\ten\CC_1)\pm (\CA_2\ten\CC_2)) \pm ((\CB_1\ten\CB_2)\ten (\CD_1\ten\CD_2))
    \ar[ddd, "\xi"] &
  \\
  \parbox{4cm}{\centering $(\CA_1\pm(\CA_2\ten(\CB_1\ten\CB_2))) \ten (\CC_1\pm(\CC_2\ten (\CD_1\ten\CD_2)))$}
  	\dar["\chi"']& &
  \\[1ex]
  \parbox{6cm}{\centering $(\CA_1\ten\CC_1)\pm ((\CA_2\ten(\CB_1\ten\CB_2)) \ten (\CC_2\ten (\CD_1\ten\CD_2)))$}
  	\ar[rd,"\cong", bend right=8]& &
  \parbox{6cm}{\centering $(\CA_1\ten \CC_1)\pm ((\CA_2\ten\CC_2)\ten ({(\CB_1\ten\CD_1)}\ten(\CB_2\ten\CD_2)))$}
  	\ar[dl, "\cong"', bend left=7]
  \\
  & (\CA_1\ten\CC_1)\pm ((\CA_2\ten\CC_2) \ten ((\CB_1\ten\CB_2)\ten (\CD_1\ten\CD_2))) &
  \end{tikzcd}
  \end{equation*}
}

{\small
  \begin{equation*}
  \begin{tikzcd}[column sep=2em, row sep=6ex]
  \CA\ten I
  	\dar["\nm\ten\id"']  
  	\ar[rr, "\rho"]
  	&[2em] & 
  \CA\dar["\nm"]
  \\
  (\CA_1\pm\CA_2)\ten I 
  	\ar[rr,"\rho"]
  	\ar[d,"\id\ten\eta"']&& 
  \CA_1\pm\CA_2
  	\ar[dd, equals]
  \\ 
  (\CA_1\pm\CA_2)\ten (I\pm I)
  	\dar["\chi"'] & & 
  \\
  (\CA_1\ten I)\pm (\CA_2\ten I)
  	\ar[rr, "\iso"]& 
& 
  \CA_1\pm\CA_2
  \end{tikzcd}
  \end{equation*}
}

{\small
\begin{equation*}
\begin{tikzcd}[column sep=1em, row sep=normal]
  \CA\ten (\CB\ten\CC)
  	\ar[rr,"\alpha"]
  	\dar["\nm\tensor(\nm\ten\nm)"'] & &[3em] 
  (\CA\ten \CB)\ten \CC
  	\dar["(\nm\ten\nm)\ten\nm"]
\\
  (\CA_1\pm\CA_2)\ten ((\CB_1\pm\CB_2)\ten(\CC_1\pm\CC_2))
  	\dar["\id\ten\chi"']
  	\ar[rr,"\alpha"]
  	& &  
  ((\CA_1\pm\CA_2)\ten (\CB_1\pm\CB_2))\ten(\CC_1\pm\CC_2)
  	\dar["\chi\ten\id"]
\\
  (\CA_1\pm\CA_2)\ten ((\CB_1\ten\CC_1)\pm(\CB_2\ten\CC_2))
  	\dar["\chi"']& &
  ((\CA_1\ten\CB_1)\pm (\CA_2\ten\CB_2))\ten(\CC_1\pm\CC_2)
  	\dar["\chi"]
\\
  (\CA_1\ten(\CB_1\ten\CC_1)) \pm(\CA_2\ten(\CB_2\ten\CC_2))
  	\ar[rr, "\cong"]& &
  ((\CA_1\ten\CB_1)\ten\CC_1) \pm ((\CA_2\ten\CB_2)\ten\CC_2)
  \end{tikzcd}
  \end{equation*}
}

{\small
  \begin{equation*}
  \begin{tikzcd}[column sep=normal, row sep=normal]
  \CA\ten\CB
  	\rar["\gamma"]
  	\dar["\nm\ten\nm"']& 
  \CB\ten\CA
  	\dar["\nm\ten\nm"]\\
  (\CA_1\pm\CA_2)\ten(\CB_1\pm\CB_2)
  	\rar["\gamma"]
  	\dar["\chi"']& 
  (\CB_1\pm\CB_2)\ten(\CA_1\pm\CA_2)
  	\dar["\chi"]\\
  (\CA_1\ten\CB_1)\ten(\CA_2\ten\CB_2)
  	\rar["\iso"]& 
  (\CB_1\ten\CA_1)\ten(\CB_2\ten\CA_2) 
  \end{tikzcd}
  \end{equation*}
}

\noindent
\emph{Induction Step.} Consider $f=f_1\pm f_2\c\CE_1\pm\CE_2\to\CE'_1\pm\CE'_2$.
We obtain the requisite isomorphism from the diagram:
{\small
  \begin{equation*}
  \begin{tikzcd}[column sep=5em, row sep=normal]
  \CE_1\pm\CE_2
  	\rar["f_1\pm f_2"]
  	\dar["\nm(\CE_1)\pm\nm(\CE_2)"']& 
  \CE_1'\pm\CE_2'
  	\dar["\nm(\CE_1')\pm\nm(\CE_2')"]\\
  (\nf_1(\CE_1)\pm\nf_2(\CE_1))\pm (\nf_1(\CE_2)\pm\nf_2(\CE_2))
  	\rar["(e_1\pm u_1)\pm(e_2\pm u_2)"]
  	\dar["\zeta"']& 
  (\nf_1(\CE_1')\pm\nf_2(\CE_1'))\pm (\nf_1(\CE_2')\pm\nf_2(\CE_2'))
  	\dar["\zeta"]\\
  (\nf_1(\CE_1)\pm\nf_2(\CE_1))\pm (\nf_1(\CE_2)\ten\nf_2(\CE_2))
  	\dar["\xi"']& 
  (\nf_1(\CE_1')\pm\nf_2(\CE_1'))\pm (\nf_1(\CE_2')\ten\nf_2(\CE_2'))
  	\dar["\xi"]
  \\
  \nf_1(\CE_1)\ten(\nf_2(\CE_1)\pm (\nf_1(\CE_2)\ten\nf_2(\CE_2)))
  	\rar["\iso"]& 
  \nf_1(\CE_1')\ten(\nf_2(\CE_1')\pm (\nf_1(\CE_2')\ten\nf_2(\CE_2'))) 
  \end{tikzcd}
  \end{equation*}
}

\noindent
Here, the top cell commutes by induction hypothesis, with some isomorphisms 
$e_1$, $e_2$, $u_1$ and~$u_2$, and the bottom cell commutes by naturality of $\zeta$
and $\xi$. The case $f=f_1\ten f_2$ is handled analogously.
  \item[\eqref{it:coh2}] The given morphism expression $f$ can be decomposed as $f_n\comp\ldots\comp f_1$
  in such a way that every $f_i$ contains at most one morphism name (not including $\id$).
  It thus suffices to establish the claim for every $i$. If the involved name is not 
  $\ups$, we are done by the previous clause~\eqref{it:coh1}. We thus continue with the proof 
  of~\eqref{it:coh2} w.l.o.g.\, assuming that $f$ is built from $\id$, $\ups$, $\ten$ and 
  $\pm$, and $\ups$ occurs in $f$ precisely once. 
  
  Consider $f=\ups$. Using the naming conventions from clause~(1), we obtain the following
  commutative diagram:
{\small
  \begin{equation*}
  \begin{tikzcd}[column sep=5em, row sep=normal]
  \CA\pm (\CB\ten\CC) 
  	\rar["\ups"]	
  	\dar["\nm\pm(\nm\ten\nm)"']& 
  (\CA\ten\CB)\pm\CC
  	\dar["(\nm\ten\nm)\pm\nm"]    
  \\       
  (\CA_1\pm\CA_2)\pm ((\CB_1\pm\CB_2)\ten(\CC_1\pm\CC_2)) 
  	\rar["\ups"]
  	\dar["\id\pm\chi"']& 
  ((\CA_1\pm\CA_2)\ten (\CB_1\pm\CB_2))\pm(\CC_1\pm\CC_2)      
  	\dar["\chi\pm\id"]     
  \\ 
  (\CA_1\pm\CA_2)\pm ((\CB_1\ten\CC_1)\pm(\CB_2\ten\CC_2)) 
  	\dar["\zeta"']& 
  ((\CA_1\ten\CB_1)\pm (\CA_2\ten\CB_2))\pm(\CC_1\pm\CC_2)
  	\ar[d,"\zeta"]           
  \\       
  (\CA_1\pm\CA_2)\pm ((\CB_1\ten\CC_1)\ten(\CB_2\ten\CC_2))
  	\dar["\xi"']& 
  ((\CA_1\ten\CB_1)\pm (\CA_2\ten\CB_2))\pm(\CC_1\ten\CC_2)
  	\dar["\xi"]         
  \\       
  \CA_1\pm(\CA_2\ten ((\CB_1\ten\CC_1)\ten(\CB_2\ten\CC_2))) 
  	\rar["\ups\comp(\id\pm e)"]&   
  (\CA_1\pm\CB_1)\pm ((\CA_2\ten\CB_2)\ten(\CC_1\ten\CC_1)) 
  \end{tikzcd}
  \end{equation*}
}
which yields $g=\ups\comp(\id\pm e)$. Notably, $e$ is an isomorphism. Next,
we handle the cases $f=\ups\pm\id$, $f=\id\pm\ups$, $f=\ups\ten\id$ and $f=\id\ten\ups$.
For $f=\ups\pm\id$, we have 
{\small
  \begin{equation*}
  \begin{tikzcd}[column sep=-5.5em, row sep=normal]
  \CE_1\pm\CE_2
  	\ar[rr, "\ups\pm\id"]
  	\dar["\nm(\CE_1)\pm\nm(\CE_2)"']& &
  \CE_1'\pm\CE_2
  	\dar["\nm(\CE_1')\pm\nm(\CE_2)"]
  \\
  (\nf_1(\CE_1)\pm\nf_2(\CE_1))\pm (\nf_1(\CE_2)\pm\nf_2(\CE_2))
  	\ar[rr, "(\ups\comp (\id\pm e))\pm\id"]
  	\dar["\zeta"']& &
  (\nf_1(\CE_1')\pm\nf_2(\CE_1'))\pm (\nf_1(\CE_2)\pm\nf_2(\CE_2))
  	\dar["\zeta"]
  \\
  (\nf_1(\CE_1)\pm\nf_2(\CE_1))\pm (\nf_1(\CE_2)\ten\nf_2(\CE_2))
  	\ar[dd,"\xi"']
  	\ar[dr, "(\id\pm e)\pm\id"']
  	\ar[rr, "(\ups\comp (\id\pm e))\pm\id"]& &
  (\nf_1(\CE_1')\pm\nf_2(\CE_1'))\pm (\nf_1(\CE_2)\ten\nf_2(\CE_2))
  	\ar[dd,"\xi"]
  \\
  & (\nf_1(\CE_1)\pm\CE)\pm (\nf_1(\CE_2)\ten\nf_2(\CE_2))
  	\ar[ru,"\ups\pm\id"']
  	\ar[dd,"\xi"] &
  \\[-2ex]
  \nf_1(\CE_1)\pm(\nf_2(\CE_1)\ten (\nf_1(\CE_2)\ten\nf_2(\CE_2)))
  	\ar[dr, "\id\pm(e\ten\id)"'] & &
  \nf_1(\CE_1')\pm(\nf_2(\CE_1')\ten (\nf_1(\CE_2)\ten\nf_2(\CE_2))) 
  \\
  	& 
  \nf_1(\CE_1)\pm(\CE\ten (\nf_1(\CE_2)\ten\nf_2(\CE_2)))
  	\ar[ur, "\ups\comp e'"'] &
  \end{tikzcd}
  \end{equation*}
}
where $e$ and $e'$ are isomorphisms, and $e$ is the one that we inherit from the 
case $f=\ups$.

For $f=\id\pm\ups$, we have construct the requisite $g$ analogously: 
{\small
  \begin{equation*}
  \begin{tikzcd}[column sep=-5.5em, row sep=normal]
  \CE_1\pm\CE_2
  	\ar[rr, "\id\pm\ups"]
  	\dar["\nm(\CE_1)\pm\nm(\CE_2)"']& &
  \CE_1\pm\CE_2'
  	\dar["\nm(\CE_1)\pm\nm(\CE_2')"]
  \\
  (\nf_1(\CE_1)\pm\nf_2(\CE_1))\pm (\nf_1(\CE_2)\pm\nf_2(\CE_2))
  	\ar[dd,"\zeta"']
  	\ar[dr, "\id\pm(\id\pm e)"']
  	\ar[rr, "\id\pm(\ups\comp (\id\pm e))"]& &
  (\nf_1(\CE_1)\pm\nf_2(\CE_1))\pm (\nf_1(\CE_2')\pm\nf_2(\CE_2'))
  	\ar[dd,"\zeta"]
  \\
  & (\nf_1(\CE_1)\pm\nf_2(\CE_1))\pm (\nf_1(\CE_2)\pm\CE)
  	\ar[ru,"\id\pm\ups"']
  	\ar[dd,"\zeta"] &
  \\[-2ex]
  (\nf_1(\CE_1)\pm\nf_2(\CE_1))\pm (\nf_1(\CE_2)\ten\nf_2(\CE_2))
  	\ar[dr, "\id\pm(\id\ten e)"']
  	\ar[dd, "\xi"'] & &
  (\nf_1(\CE_1)\pm\nf_2(\CE_1))\pm (\nf_1(\CE_2')\ten\nf_2(\CE_2')) 
  	\ar[dd, "\xi"] 
  \\
  & (\nf_1(\CE_1)\pm\nf_2(\CE_1))\pm (\nf_1(\CE_2)\ten\CE)
  	\ar[ur, "\cong"']
  	\ar[dd, "\xi"] &
  \\[-2ex]
  \nf_1(\CE_1)\pm(\nf_2(\CE_1)\ten (\nf_1(\CE_2)\ten\nf_2(\CE_2)))
  	\ar[dr, "\id\pm(\id\ten(\id\ten e))"'] 
  	& &
  (\nf_1(\CE_1)\pm\nf_2(\CE_1))\pm (\nf_1(\CE_2')\ten\nf_2(\CE_2')) 
  \\
  	& 
  \nf_1(\CE_1)\pm(\nf_2(\CE_1)\ten (\nf_1(\CE_2)\ten\CE))
  	\ar[ur, "\cong"'] &
  \end{tikzcd}
  \end{equation*}
}

\noindent
For $f=\ups\ten\id$, we have 
{\small
  \begin{equation*}
  \begin{tikzcd}[column sep=-5.5em, row sep=normal]
  \CE_1\ten\CE_2
  	\ar[rr, "\ups\ten\id"]
  	\dar["\nm(\CE_1)\ten\nm(\CE_2)"']& &
  \CE_1'\ten\CE_2
  	\dar["\nm(\CE_1')\ten\nm(\CE_2)"]
  \\
  (\nf_1(\CE_1)\pm\nf_2(\CE_1))\ten (\nf_1(\CE_2)\pm\nf_2(\CE_2))
  	\ar[rr, "(\ups\comp (\id\pm e))\ten\id"]
  	\ar[dd, "\chi"']
  	\ar[dr, "(\id\pm e)\ten\id"']& &
  (\nf_1(\CE_1')\pm\nf_2(\CE_1'))\ten (\nf_1(\CE_2)\pm\nf_2(\CE_2))
  	\ar[dd, "\chi"]
  \\
  & (\nf_1(\CE_1)\pm\CE)\ten (\nf_1(\CE_2)\pm\nf_2(\CE_2))
  	\ar[ur, "\ups\ten\id"']
  	\ar[dd, "\chi"] &
  \\[-2ex]
  (\nf_1(\CE_1)\ten\nf_1(\CE_2))\pm (\nf_2(\CE_1)\ten\nf_2(\CE_2))
  	\ar[dr, "\id\pm (e\ten\id)"'] & &
  (\nf_1(\CE_1')\ten\nf_1(\CE_2))\pm (\nf_2(\CE_1')\ten\nf_2(\CE_2))
  \\
  & (\nf_1(\CE_1)\ten\nf_1(\CE_2))\pm (\CE\ten\nf_2(\CE_2))
  	\ar[ru,"e''\comp\ups\comp e'"'] &
  \end{tikzcd}
  \end{equation*}
}

\noindent
where, again, $e$ is the one that we inherit from the 
case $f=\ups$, and $e'$ and $e''$ are isomorphisms. The argument for
$f=\id\ten\ups$ is symmetric.

The general case now follows by structural induction on $f$ with the above cases 
serving as the induction base. 

  \item[\eqref{it:coh3}] 
  Let us first argue that w.l.o.g., $f$ and $g$ are of the form 
  $(e_2\pm\id)\comp\ups\comp(\id\pm e_1)$ and $(u_2\pm\id)\comp\ups\comp(\id\pm u_1)$
  with some isomorphism expressions $e_1,e_2,u_1,u_2$. Indeed, by assumption,~$f$ is a composition
  of morphism expressions of the form: $\ups$, $\id\pm e$ and $e\pm\id$ where 
  $e$ ranges over isomorphism expressions. We ensure that $\ups$ occurs at least
  once by using the fact that $\id_{\CA\pm\CB} \equiv (\rho_{\CA}\pm\id_\CB)\comp\ups_{\CA,I,\CB}\comp (\id_{\CA}\pm\gamma_{\CB,I})\comp (\id_{\CA}\pm\rho^\mone_{\CB})$.
  Since $(\id\pm e)\comp\ups\equiv\ups\comp (\id\pm (\id\ten e))$
  and $\ups\comp(e\pm\id)\equiv ((\id\ten e)\pm\id)\comp\ups$, we can rearrange~$f$
  equivalently in such a way that all components of the form $\id\pm e$
  are gathered on the right of the expression and all components of the form $e\pm\id$
  are gathered on the left of the expression. Using the axioms of guarded parameterized 
  monads, we can subsequently replace compositions $\ups\comp\ups$ with a single $\ups$,
  and thus arrive at the requisite form. The same reasoning applies to $g$.
  
  Now the equality $f\equiv g$ follows from the diagram:
  \begin{equation*}
  \begin{tikzcd}[column sep=1em, row sep=2ex]
  & 
  \CA\pm(\CB_1\ten\CD)
  	\rar["\ups"]
  	\ar[dd, "\iso"] &[3em] 
  (\CA\ten\CB_1)\pm\CD
  	\ar[dr, "e_2\pm\id"]
  	\ar[dd, "\iso"] &
  \\               
  \CA\pm\CB
  	\ar[ur, "\id\pm e_1"]
  	\ar[dr, "\id\pm u_1"'] & & & 
  \CC\pm\CD
  \\               
  & \CA\pm(\CB_2\ten\CD)
  	\rar["\ups"] & 
  (\CA\ten\CB_2)\pm\CD
  	\ar[ur, "u_2\pm\id"'] &
  \end{tikzcd}
  \end{equation*}
  using the fact that $\CB_1$ and $\CB_2$ are necessarily isomorphic. That the 
  triangles commute follows from the original Mac Lane's coherence theorem for 
  symmetric monoidal categories~\cite{Mac-Lane71}. 
  \item[\eqref{it:coh4}] Since $\CE$ is assumed to be normal, $\CE=\CE_1\pm\CE_2$,
  where $\CE_1$ and $\CE_2$ do not contain $\pm$. 
  Let us show first, by induction over $\CE_1$, that $\nm(\CE_1)\equiv (\id\pm e_1)\comp\eta$ for some
  isomorphism~$e_1$ between $I$ and a tensor product of some number of copies of $I$. The induction
  base is trivial. For the induction step, let $\CE_1 = \CE_1'\ten\CE_1''$. Then, using the induction hypothesis, $\nm(\CE_1) = \chi\comp(\nm(\CE_1') \ten\nm(\CE_1''))\equiv\chi\comp((\id\pm e_1')\comp\eta\tensor(\id\pm e_1'')\comp\eta)\equiv(\id\pm (e_1'\ten e_1''))\comp\chi\comp(\eta\tensor\eta)\equiv(\id\pm (e_1'\ten e_1'')\comp e)\comp\eta$ where $e\c I\iso I\tensor I$. 
Analogously, we obtain $\nm(\CE_2)\equiv (\id\pm e_2)\comp\eta$. 
  
Now, $\nm(\CE) = \nm(\CE_1\pm\CE_1) = \xi\comp\zeta\comp (\nm(\CE_1) \pm\nm(\CE_2))\equiv
\xi\comp\zeta\comp ((\id\pm e_1)\comp\eta\pm(\id\pm e_2)\comp\eta)\equiv
(\id\ten(e_1\ten(\id\ten e_2)))\comp \xi\comp\zeta\comp (\eta\pm\eta)$. Using the 
axioms of guarded parameterized monads, observe that $\xi\comp\zeta\comp (\eta\pm\eta)$  
is an isomorphism expression, which finishes the proof. \qed
\end{enumerate}
\noqed\end{proof}
It is an open question if a stronger version of the above coherence theorem with 
general $f,g\c\CE_1\to\CE_2$ can be proven. 
In the sequel, we will only deal with guarded parameterized monads over $(\BV,+,0)$.
Recall that a parameterized monad (in the sense of Uustalu~\cite{Uustalu03}) is a bifunctor $T\c\BV\times\BV\to\BV$, 
such that each $T(\argument,X)$ is a monad and each $T(\argument,f)$ is a monad morphism.
Of course, a guarded parameterized monad is meant to be a parameterized monad in this sense. 
This follows from~\autoref{rem:grad} and the following general fact.  
\begin{proposition}\label{prop:gpar-par}
Every $\BV$-graded monad on $(\BV,+,0)$ is a parameterized monad. 
\end{proposition}
\begin{proof}%
A $\BV$-graded monad on $\BV$ is equivalently a lax monoidal functor from $\BV$
to the monoidal category of endofunctors $([\BV,\BV],\comp,\Id)$. Let this lax monoidal functor 
send each $X\in |\BV|$ to $T(\argument, X)\c\BV\to\BV$. Lax monoidal functors preserve monoids, which unravels as follows. Every object $X$
in $\BV$ is a monoid under $\bang\c\iobj\to X$ and $\nabla\c X+X\to X$, and monoids 
in $([\BV,\BV],\comp,\Id)$ are precisely monads. Therefore, $T(\argument, X)$ is a monad for every~$X$. 
Likewise, every morphism $f\c X\to Y$ in $\BV$ is a monoid morphism, and hence induces 
a monoid morphism from $T(\argument, X)$ to $T(\argument, Y)$, i.e.\ a monad morphism.
\end{proof}
Explicitly, for a guarded parameterized monad $\pm$ we obtain parameterized  
multiplication transformation:
\begin{align*}
\mu_{X,Y} =\bigl( (X\pm Y)\pm Y&\;\xto{\xi} X\pm (Y+Y)\xto{\id\pm\nabla} X\pm Y\bigr)\\[-6ex]
\end{align*}
\begin{theorem}\label{thm:grep}
Given co-Cartesian $\BV$ and an identity-on-object functor 
$J\c\BV\to\BC$ strictly preserving coproducts, $\BC$ is guarded and 
$\GHom*{\BC}$ is representable iff\/
  $\BC\iso\BV_{\argument\pm 0}$ for a guarded parameterized monad $(\pm,\eta,\ups,\chi,\xi,\zeta)$,
  the compositions 
  $\ups_{X,Y,\iobj}\comp (\id\pm\inl)$
  are all monic and $f\c X\to Y\gsep Z$ iff $f$ factors through $Y\pm (Z+0) \xto{\ups} (Y+ Z)\pm\iobj$.
\end{theorem}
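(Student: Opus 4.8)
The plan is to read off \autoref{thm:grep} from \autoref{thm:gpm}, \autoref{prop:gpar-par} and \autoref{thm:coherence}. \autoref{thm:gpm} already reduces $J$\dash representability of $\GHom*{\BC}$ to the bare data of a bifunctor $\IB{}{}$ with $\IB\argument\iobj$ a monad, an isomorphism $\BC\iso\BV_{\IB\argument\iobj}$, a family of guarded morphisms $\eps_{X,Y}\c\IB XY\to X\gsep Y$, and an operator $(\argument)^\sharp$ satisfying the universal triangle (with monicity of the $\eps_{X,Y}$ equivalent to uniqueness of $(\argument)^\sharp$). So it suffices to show that this data is interderivable with the equational structure of a guarded parametrized monad over $(\BV,+,0)$ plus the monicity condition, and that under the correspondence $\eps_{X,Y}$ is identified with $\ups_{X,Y,\iobj}\comp(\id\pm\inl)$.

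For the $(\Rightarrow)$ direction: take $\pm\ass\IB{}{}$ and let $\eta$ be the unit of the monad $\IB\argument\iobj$; define $\ups$, $\xi$, $\zeta$, $\chi$ by applying $(\argument)^\sharp$ to the evident canonical composites built from $\eps$, coproduct injections and codiagonals, and the monad structure --- e.g.\ $\ups_{X,Y,Z}=\bigl(\IB{X}{(Y+Z)}\xto{\eps}\IB{(X+(Y+Z))}{\iobj}\xto{\iso}\IB{((X+Y)+Z)}{\iobj}\bigr)^\sharp$ as already sketched before \autoref{def:pm}, and analogously for the rest (for $\xi$ one first weakens $\eps_{X,Y}$ into a guarded morphism $\IB XY\to X\gsep(Y+Z)$ via \textbf{(cmp$_\gtag$)} and postcomposes $\eps_{\IB XY,Z}$ with it). Naturality of all five is automatic from naturality of $\eps$ and uniqueness of $(\argument)^\sharp$ (Yoneda). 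It then remains to verify each coherence diagram of \autoref{def:pm}: both legs land in some $P\pm Q$, so one postcomposes with the monic $\eps_{P,Q}$, uses the defining triangles to rewrite $\eps\comp(\text{--})^\sharp$ as the underlying canonical composite, and the diagram collapses to an identity between two canonically built guarded morphisms of $\BC$, discharged from \textbf{(trv$_\gtag$)}, \textbf{(par$_\gtag$)}, \textbf{(cmp$_\gtag$)}, naturality of $\eps$, functoriality of $\pm$, and the coproduct/monad laws. Finally, a short computation --- using that $\eps_{W,\iobj}=T(\inl_{W,\iobj})$ is invertible since $\iobj$ is initial, together with naturality of $\eps$ and the definition of $\ups$ --- identifies $\ups_{X,Y,\iobj}\comp(\id\pm\inl)$ with $\eps_{X,Y}$, so the characterization "$f\c X\to Y\gsep Z$ iff $f$ factors through $\eps_{Y,Z}$" from \autoref{thm:gpm} becomes the stated one (a morphism factoring through a guarded $\eps_{Y,Z}$ is itself guarded by \textbf{(cmp$_\gtag$)}).

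For the $(\Leftarrow)$ direction: given a guarded parametrized monad $(\pm,\eta,\ups,\chi,\xi,\zeta)$ with every $\ups_{X,Y,\iobj}\comp(\id\pm\inl)$ monic, \autoref{prop:gpar-par} makes $\IB\argument\iobj$ a monad; set $\BC\ass\BV_{\IB\argument\iobj}$ with $J$ the free functor, and define $\GHom*{\BC}(X,Y,Z)$ to consist of those $f\c X\to\IB{(Y+Z)}{\iobj}$ factoring in $\BV$ through $\eps_{Y,Z}\ass\ups_{Y,Z,\iobj}\comp(\id\pm\inl)$; monicity makes the factor $f^\sharp$ unique. The crux is checking this is a guardedness predicate: \textbf{(par$_\gtag$)} is easy, since coproducts in $\BC$ are inherited from $\BV$ by \autoref{prop:kleisli} and copairing is componentwise, whence $[f,g]^\sharp=[f^\sharp,g^\sharp]$; \textbf{(trv$_\gtag$)} is a short computation with the unit laws for $\ups$ and naturality; and \textbf{(cmp$_\gtag$)} is the substantial case, where from $f^\sharp$ and Kleisli data for $g$, $h$ one must build $([g,h]\comp f)^\sharp$ out of $f^\sharp$, the Kleisli multiplication coming from $\xi$ (as in \autoref{prop:gpar-par}), and the reassociations $\ups$, $\zeta$, $\chi$, with well\dash definedness resting on the bulk of the \autoref{def:pm} axioms. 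Once $\GHom*{\BC}$ is a guardedness predicate, $\eps_{Y,Z}$ is guarded by construction, natural by naturality of $\ups$, and monic by hypothesis, so \autoref{lem:cof}/\autoref{thm:gpm} apply with $U(X,Y)=\IB XY$, giving $J$\dash representability; the characterization of $f\c X\to Y\gsep Z$ is the definition of $\GHom*{\BC}$.

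I expect the main obstacle to be exactly the \textbf{(cmp$_\gtag$)} bookkeeping in $(\Leftarrow)$ --- and, dually, the verification in $(\Rightarrow)$ of those \autoref{def:pm} diagrams encoding compatibility of $\xi$ and $\ups$ with composition --- since each is a sizeable chase through the axioms. I would lean on \autoref{thm:coherence} and the classical Mac Lane coherence it builds on to suppress all canonical\dash isomorphism bureaucracy, so that only the genuinely new squares --- those actually naming $\eta$, $\ups$, $\xi$, $\zeta$, $\chi$ --- have to be inspected by hand.
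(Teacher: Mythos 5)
Your proposal follows essentially the same route as the paper's proof: both directions reduce to \autoref{thm:gpm}, with $\ups,\xi,\zeta,\chi$ obtained by applying $(\argument)^\sharp$ to the very composites you describe, the coherence laws verified by postcomposing with the monic $\eps$, and the converse direction defining $\eps$ as $\ups_{X,Y,\iobj}\comp(\id\pm\inl)$ and then checking \textbf{(trv$_\gtag$)}, \textbf{(par$_\gtag$)}, \textbf{(cmp$_\gtag$)} for the induced predicate, with \textbf{(cmp$_\gtag$)} as the substantial case handled via $\xi$, $\chi$ and coherence. Your level of detail matches (and in places exceeds) the paper's own, and your identification of $\eps_{X,Y}$ with $\ups_{X,Y,\iobj}\comp(\id\pm\inl)$ via invertibility of $\eps_{W,\iobj}$ correctly supplies a step the paper leaves implicit.
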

\begin{proof}
($\Rightarrow$) Let $\GHom*{\BC}$ be $J$-representable. By \autoref{thm:gpm}, 
assume w.l.o.g.\ that $\BC=\BV_{\BBT}$ where $T=\argument\pm 0$, and let $\eps$
and $(\argument)^\sharp$ be the corresponding structure, belonging to $\pm$.
Let $\eta$ be the unit of~$\BBT$. We obtain the remaining transformations~$\ups,\xi,\zeta$ 
and $\chi$ by universality as follows:
\begin{align*}
\ups  =&\; \bigl(X\pm (Y + Z)  \xto{\eps} (X+(Y+Z))\pm\iobj\iso ((X+Y)+Z))\pm\iobj\bigr)^\sharp,\\
\xi   =&\; \bigl((X\pm Y) \pm Z  \xto{\eps} (X\pm Y+Z)\pm\iobj\xto{[T(\id+\inl)\comp\eps,\eta\comp\inr\comp\inr]^\klstar}  (X+(Y+Z))\pm\iobj\bigr)^\sharp,\\
\zeta =&\; \bigl(X\pm (Y\pm Z)  \xto{\eps} (X+Y\pm Z)\pm\iobj\xto{[\eta\comp\inl, T\inr\comp\eps]^\klstar}  (X+(Y+Z))\pm\iobj\bigr)^\sharp,\\
\chi  =&\; \bigl(X\pm Y + Z\pm V  \xto{\eps+\eps} (X+Y)\pm\iobj + (Z+V)\pm\iobj\\*&\hspace{6.85em}\xto{T[\inl+\inl,\inr+\inr]} ((X+Y)+(Z+V))\pm\iobj\bigr)^\sharp.
\end{align*}
It is clear by definition that $f^\sharp$ is mono as long as $f$ is mono, hence 
$\ups$ is mono. The characterization of the guardedness predicate follows from 
\autoref{thm:gpm}. The laws of guarded parameterized monad all follow by postcomposition 
with $\eps$ and using the fact that it is mono.

($\Leftarrow$) Let $\BC=\BV_{\argument\pm 0}$ for a guarded parameterized monad $(\pm,\eta,\ups,\chi,\xi,\zeta)$.
We define $\eps\c X\pm Y\to (X+Y)\pm 0$ as $X\pm Y\iso X\pm (Y+0) \xto{\ups} (X+Y)\pm 0$,
which is monic, since~$\ups$ is so by assumption. This yields a unique~$f^\sharp$
for every guarded~$f$, by definition of the guardedness predicate. The only non-trivial condition
of \autoref{thm:gpm}, which is left to verify,	 is that the guardedness predicate is well-defined.
\begin{itemize}[wide]
  \item \textbf{(trv$_\gtag$)} Given $f\c X\to Y$, $\eta\comp\inl\comp f = \eps\comp (\id\pm\bang)\comp\eta\comp f$
  is thus guarded.
  \item \textbf{(par$_\gtag$)} Given $f\c X\to \IB{V}{W}$ and $g\c Y\to \IB{V}{W}$,
  $[\eps\comp f,\eps\comp g] = \eps\comp [f,g]$ is guarded.
  \item \textbf{(cmp$_\gtag$)} Given $f\c X\to \IB{Y}{Z}$, $g\c Y\to \IB{V}{W}$ and $h\c Z\to \IB{(V+W)}{\iobj}$,
  observe first that
  \begin{align*}
  [\eps\comp g,h]^\klstar\comp\eps\comp f 
  &\; = \mu\comp (\IB{[\eps\comp g,h]}{\id})\comp\eps\comp f\\
  &\; = (\id\pm\nabla)\comp\xi\comp (\IB{[\eps\comp g,h]}{\id})\comp\eps\comp f\\
  &\; = (\id\pm\nabla)\comp\xi\comp (\IB{[\eps,\id]}{\id})\comp\eps\comp(\IB{g}{h})\comp f.
  \end{align*}  
  That is, we are left to show that $(\id\pm\nabla)\comp\xi\comp (\IB{[\eps,\id]}{\id})\comp\eps$
  factors through $\eps$. Observe that $(\nabla\pm\nabla)\comp\chi = \nabla$. Therefore,
  \begin{align*}
  (\id\pm&\nabla)\comp\xi\comp (\nabla\pm\id)\comp (\IB{(\eps+\id)}{\id})\comp\eps\\
  &\; = (\id\pm\nabla)\comp\xi\comp ((\nabla\pm\nabla)\comp\chi\pm\id)\comp (\IB{(\eps+\id)}{\id})\comp\eps\\
  &\; = (\id\pm\nabla)\comp (\nabla\pm(\nabla+\id))\comp\xi\comp(\chi\pm\id)\comp (\IB{(\eps+\id)}{\id})\comp\eps\\
  &\; = (\nabla\pm \nabla\comp(\nabla+\id))\comp\xi\comp(\chi\pm\id)\comp (\IB{(\eps+\id)}{\id})\comp\eps\\
  &\; = ((\nabla+\nabla)\pm \nabla\comp(\nabla+\id))\comp ([\inl+\inl,\inr+\inr]\pm\id)\comp\xi\comp (\IB{\chi\comp(\eps+\id)}{\id})\comp\eps.
  \end{align*}
  Using coherence, $([\inl+\inl,\inr+\inr]\pm\id)\comp\xi\comp (\IB{\chi\comp(\eps+\id)}{\id})\comp\eps\c (\IB{V}{W})\pm ((V + W)\pm\iobj)\to
  \IB{((V+V)+(W+W))}{((\iobj+\iobj)+\iobj)}$ can be factored through~$\ups$, and hence 
  the entire expression factors through~$\eps$.
  \qed
\end{itemize}
\noqed
\end{proof}
Vacuous guardedness is clearly representable and by \autoref{thm:grep} corresponds 
to those guarded parameterized monads $\pm$, which do not depend on the parameter,
i.e.\ to monads.
\begin{example}
Let us revisit \autoref{exa:resump}. Let $\IB{X}{Y} = T(X+HT_{H}(X+Y))$, and 
note that~$\IB{\argument}{\iobj}$ is isomorphic to $T_H$. 
Assuming the existence of some morphism~${p\c 1\to H1}$,   
for every $X$, we obtain the final map $\hat p\c 1\to T_HX$, induced by the 
coalgebra map $1\xto{\eta\comp\inr\comp p} T(X + H1)$. Now, $T(\inl+\id)$ is a section,
since $T[\id+H\hat p\comp p\comp\bang\comma\inr]\comp T(\inl+\id)$ is the identity.  
By \autoref{thm:gpm},~$\IB{}{}$ is a guarded parameterized monad.
\end{example}
\begin{example}
Let us revisit \autoref{exa:guard}. %
Let 
$\IB{X}{Y} =\realp\times X+\realsp\times Y+\realpe$.
Then $\IB{X}{\iobj}\iso \realp\times X+\realpe$ and there is an obvious injection
$\eps_{X,Y}$ from $\IB{X}{Y}$ to $\IB{(X+Y)}{\iobj}$. By definition, every guarded 
$f\c X\to \IB{Y}{Z}$ uniquely factors through $\eps_{Y,Z}$, and hence $\IB{}{}$
is a guarded parameterized monad.
\end{example}
\begin{definition}[Strong Guarded Parameterized Monad]\label{def:gspar}
A guarded parameterized monad $(\pm,\eta\comma\ups\comma\chi\comma\xi\comma\zeta)$ 
is strong, if\/ $\pm$ is strong as a monad in the first argument and as a functor in the second 
argument, and the diagram
\begin{equation*}
\begin{tikzcd}[column sep=1.75em, row sep=normal]
X\times (Y\pm Z)
  \rar["\id\times\eps"]
  \dar["\tau"'] &[.5em] 
X\times (Y+Z)\pm\iobj
  \rar["\tau"] &[-.75em] 
(X\times (Y+Z))\pm0
  \rar["\dist\pm 0"] &[.75em] 
(X\times Y + X\times Z)\pm 0
  \dar["(\id+\snd)\pm 0"]\\                
(X\times Y)\pm Z
  \ar[rrr,"\eps"] &&& (X\times Y + Z)\pm\iobj
\end{tikzcd}
\end{equation*}
commutes, where $\eps_{X,Y} = \ups_{X,Y,\iobj}\comp (\id\pm\inl)$ and $\tau$ is the monadic strength of\/ $\pm$.
\end{definition}
\begin{remark}
Strength is commonly referred to as a ``technical condition''. This is 
justified by the fact that in self-enriched categories, strength is equivalent to 
enrichment of the corresponding functor or a monad~\cite{Kock72}, and in foundational 
categories, like $\Set$, every functor and every natural transformation are canonically enriched
w.r.t.\ Cartesian closeness as the self-enrichment structure. Then the canonical strength 
$\rho_{X,Y}\c X\times FY\to F(X\times Y)$ for a functor~$F$ is defined by the expression 
$\rho_{X,Y} = \lambda (x,z).\,F(\lambda y.\,(x,y))(z)$. We conjecture that the
strengths involved in \autoref{def:gspar} are technical in the same sense, in particular,
the requested commutative diagram is entailed by enrichment of $\eps$.\sgnote{Prove.}
\end{remark}
Finally, let us establish the analogue of \autoref{thm:grep} for Freyd 
categories.
\begin{theorem}\label{thm:frey-rep} 
A Freyd category $(\BV,\BC,J(\argument),\oslash)$ is guarded 
and $\GHom*{\BC}$ is representable iff $\BC\iso\BV_{\argument\pm 0}$ for a strong 
guarded parameterized monad $(\pm,\eta,\ups,\chi,\xi,\zeta)$, the compositions 
$\ups_{X,Y,\iobj}\comp (\id\pm\inl)$
  are all monic and $f\c X\to Y\gsep Z$ iff $f$ factors through $Y\pm (Z+0) \xto{\ups} (Y+ Z)\pm\iobj$.
\end{theorem}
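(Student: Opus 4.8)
The plan is to bootstrap from \autoref{thm:grep} and \autoref{prop:freyd}. Since both sides of the claimed equivalence force $J$ to be a left adjoint (by \autoref{lem:repr-to-adj} on the left-hand side, and trivially on the right), \autoref{prop:kleisli} lets me assume throughout that $\BC=\BV_\BBT$ with $T=\argument\pm\iobj$, $J$ the Kleisli functor, and -- invoking \autoref{thm:grep} -- that we already have the bifunctor $\pm$, the monic natural family $\eps_{X,Y}=\ups_{X,Y,\iobj}\comp(\id\pm\inl)\c X\pm Y\to(X+Y)\pm\iobj$, and the bijection $f\mapsto f^\sharp$ between guarded $f\c X\to Y\gsep Z$ and $\BV$-morphisms $f^\sharp\c X\to Y\pm Z$ satisfying $\eps_{Y,Z}\comp f^\sharp=f$. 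By \autoref{prop:freyd} the residual Freyd datum -- the action $\oslash$ with $J(f\times g)=f\oslash Jg$ -- is equivalent to a monadic strength $\tau^{0}\c X\times(Y\pm\iobj)\to(X\times Y)\pm\iobj$ on $T$, with $f\oslash g=\tau^{0}\comp(f\times g)$ under the Kleisli identification. So the only thing left to prove is that, in the presence of this $\tau^{0}$, ``$\oslash$ preserves guardedness'' is equivalent to having the strengths $\tau$ (monad, first argument) and $\theta$ (functor, second argument) of \autoref{def:gspar} together with its coherence diagram.

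For the ($\Rightarrow$) direction, apply the guarded Freyd axiom to $\id_X$ and the universal guarded morphism $\eps_{Y,Z}\in\GHom*{\BC}(Y\pm Z,Y,Z)$: it says $J\dist\comp(\id_X\oslash\eps_{Y,Z})\c X\times(Y\pm Z)\to X\times Y+X\times Z$ is guarded in $X\times Z$, hence by the factorization property of \autoref{thm:grep} it equals $\eps_{X\times Y,X\times Z}\comp\bar\tau_{X,Y,Z}$ for a unique, and (by monicity of $\eps$ and naturality of the ingredients) natural, $\bar\tau_{X,Y,Z}\c X\times(Y\pm Z)\to(X\times Y)\pm(X\times Z)$. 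I would then set $\tau_{X,Y,Z}=(\id\pm\snd)\comp\bar\tau_{X,Y,Z}$ and $\theta_{X,Y,Z}=(\snd\pm\id)\comp\bar\tau_{X,Y,Z}$, and verify the monad-strength axioms for $\argument\pm Z$, the functor-strength axioms for $Y\pm\argument$, and the coherence diagram of \autoref{def:gspar}. Each such identity, being between $\BV$-morphisms into some $A\pm B$, follows once it holds after post-composition with the monic $\eps_{A,B}$; post-composing turns it into an equation between composites of $\oslash$, $J$, $\dist$, $\tau^{0}$ and canonical coproduct/product isomorphisms, which holds by the Freyd axioms (bifunctoriality of $\oslash$, $J(f\times g)=f\oslash Jg$, the strength laws for $\tau^{0}$), distributivity, and the guarded parametrized monad axioms already established in \autoref{thm:grep}.

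For ($\Leftarrow$), let $\BBT=\argument\pm\iobj$; its strength is the $Z=\iobj$ instance of the first strength, so $T$ is a strong monad and the ``If'' part of \autoref{prop:freyd} makes $\BV_\BBT$ with $f\oslash g=\tau_{\argument,\argument,\iobj}\comp(f\times g)$ into a distributive Freyd category. To see it is guarded, I would build the diagonal strength $\bar\tau_{X,Y,Z}=\theta_{X,X\times Y,Z}\comp(\id_X\times\tau_{X,Y,Z})\comp(\Delta_X\times\id)$ (up to associativity) and use the coherence diagram of \autoref{def:gspar} for $\tau$, the functor-strength laws of $\theta$, and naturality of $\eps$ to derive the key identity $\eps_{X\times Y,X\times Z}\comp\bar\tau_{X,Y,Z}=J\dist\comp(\id_X\oslash\eps_{Y,Z})$. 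Then, for $f\in\BV(A,B)$ and guarded $g=\eps_{Y,Z}\comp g^\sharp$, bifunctoriality of $\oslash$ and $J(f\times g)=f\oslash Jg$ give $J\dist\comp(f\oslash g)=J\dist\comp(\id_B\oslash\eps_{Y,Z})\comp J(f\times g^\sharp)=\eps_{B\times Y,B\times Z}\comp\bar\tau_{B,Y,Z}\comp(f\times g^\sharp)$, which exhibits a factorization of $J\dist\comp(f\oslash g)$ through $\eps_{B\times Y,B\times Z}$; by the characterization of guardedness in \autoref{thm:grep} this is precisely $J\dist\comp(f\oslash g)\in\GHom*{\BC}(A\times X,B\times Y,B\times Z)$, i.e.\ the guarded Freyd axiom. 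Applying \autoref{thm:grep} once more finishes this direction.

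I expect the main obstacle to be the ($\Rightarrow$) bookkeeping: although each strength and coherence identity of \autoref{def:gspar} collapses, under the monicity of $\eps$, to a routine consequence of the Freyd axioms and the coherences already in force, there are many of them, and one must check that splitting $\bar\tau$ into $\tau=(\id\pm\snd)\comp\bar\tau$ and $\theta=(\snd\pm\id)\comp\bar\tau$ is compatible with the parametrized monad operations $\eta,\xi,\zeta,\chi,\ups$ -- in particular, that the coherence diagram of \autoref{def:gspar} together with the two strengths really is equivalent to full preservation of guardedness by $\oslash$, and not merely to the information visible in the first argument.
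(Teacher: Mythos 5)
Your proposal is correct and follows essentially the same route as the paper: both reduce the statement to \autoref{thm:grep} and \autoref{prop:freyd}, identify the guarded-Freyd axiom for the action with the (unique) factorization of the diagonal map $X\times (Y\pm Z)\to T(X\times Y+X\times Z)$ through the monic $\eps$, and then obtain the two strengths of \autoref{def:gspar} by post-composing the resulting $\bar\tau$ with $\id\pm\snd$ and $\snd\pm\id$ (and conversely reassemble $\bar\tau$ from them to recover full preservation of guardedness). Your version is merely more explicit about the bookkeeping that the paper dispatches with ``checked routinely.''
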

\begin{proof}
\autoref{thm:grep} and \autoref{prop:freyd} jointly imply that 
$\BC$ is a representable guarded Freyd category~iff 
\begin{itemize}[wide]
  \item $\BC\iso\BV_{\argument\pm 0}$ for a guarded parameterized monad $(\pm,\eta,\ups,\chi,\xi,\zeta)$,
  \item $\ups$ is componentwise monic,
  \item $f\c X\to Y\gsep Z$ iff $f$ factors through $Y\pm (Z+0) \xto{\ups} (Y+ Z)\pm\iobj$,
  \item $T=(\argument)\pm\iobj$ is strong and $(T\dist)\comp\tau\comp (\id\times\eps)\c X\times (Y\pm Z)\to T(X\times Y+X\times Z)$
  uniquely factors through $\eps\c X\times Y\pm X\times Z\to T(X\times Y+X\times Z)$ where 
  $\tau$ is the strength of~$\BBT$.  
\end{itemize}
This yield strength for both sides of $\pm$ by 
  composition:
  \begin{align*}
  X\times (Y\pm Z)&\;\to (X\times Y)\pm (X\times Z)\xto{\id\pm\snd} (X\times Y)\pm Z,\\*
  X\times (Y\pm Z)&\;\to (X\times Y)\pm (X\times Z)\xto{\snd\pm\id} Y\pm (X\times Z).
  \end{align*} 
The diagram in \autoref{def:gspar} is thus satisfied by definition. The 
axioms of strength are checked routinely.
\end{proof}
For a strong guarded parameterized monad $\pm$, let $\wave\tau$ be the composition
\begin{align*}
X\times (Y\pm Z)\xto{\Delta\times\id}&\; (X\times X)\times (Y\pm Z)\iso X\times (X\times (Y\pm Z))\\* 
\xto{\id\times\rho} &\; X\times (Y\pm (X\times Z)) \xto{\tau} (X\times Y)\pm (X\times Z)
\end{align*}
where $\tau$ is the monadic strength of $\pm$ and $\rho$ is the functorial strength 
of $\pm$.
It is easy to check that~$\tau$ and $\rho$ are derivable from $\wave\tau$, and 
in the sequel, we will include it as the 
last element in a tuple $(\pm,\eta,\ups,\chi,\xi,\zeta,\wave\tau)$, defining a strong 
guarded parameterized monad.
\begin{figure*}[t]
\begin{gather*}
  \frac{}{\sem{x_1\c A_1,\ldots,x_n\c A_n\ctxv x_i\c A_i} = \proj_i}\\[2ex]  
  \frac{h=\sem{\G\ctxv v\c A}}{\sem{\G\ctxv f(v)\c B} = \sem{f}\comp h}\qquad 
  \frac{h=\sem{\G\ctxv v\c A}}{\sem{\G\ctxc f(v)\c B} = \sem{f}\comp h}\\[2ex]  
  \frac{h=\sem{\G\ctxv v\c A}}{\sem{\G\ctxc \ret v\c A\gsep B} = \eta\comp h}\\[2ex]
  \frac{h = \sem{\G\ctxc p\c A\gsep B}\qquad h_1=\sem{\G,x\c A\ctxc q\c C\gsep D}\qquad h_2 =\sem{\G,y\c B\ctxc r\c (C+D)\gsep\iobj}}{
  \begin{array}{r@{}l}
  	\sem{\G\ctxc\docase{p}{\linj\,& x\mto q}{\rinj y\mto r}\c C\gsep D} \\[1ex]
  	=\;& (\nabla\pm\id)\comp\ups\comp\mu\comp ((\id\pm\inr)\pm[\id,\bang]) \comp\zeta\comp  (h_1\pm h_2) \comp\wave\tau\comp \brks{\id, h}
  \end{array}}\\[2ex]
  \frac{}{\sem{\G\ctxc\oname{init} v\c A} = \bang}\qquad
  \frac{h=\sem{\G\ctxv v\c A}}{\sem{\G\ctxv\linj v\c A+B} = \inl\comp h}\qquad
  \frac{h=\sem{\G\ctxv v\c B}}{\sem{\G\ctxv\rinj v\c A+B} = \inr\comp h}\\[2ex]
  \frac{h=\sem{\G\ctxv v\c A+B}\qquad h_1=\sem{\G,x\c A\ctxc p\c C\gsep D}\qquad h_2=\sem{\G,y\c B\ctxc q\c C\gsep D}}{\sem{\G\ctxc\case{v}{\linj x\mto p}{\rinj y\mto q}\c C\gsep D} 
  = (\nabla\pm\nabla) \comp\chi \comp(h_1+h_2)\comp\dist\comp \brks{\id, h}}\\[2ex]
  \frac{h_1=\sem{\G\ctxv v\c A}\qquad h_2=\sem{\G\ctxv w\c B}}{\sem{\G\ctxv \brks{v,w}\c A\times B} = \brks{h_1,h_2}}
\end{gather*}
 \caption{Denotational semantics of guarded FGCBV over guarded parameterized monads.}
  \label{fig:mon-sem}
\end{figure*}

Finally, we can interpret the guarded version of FGCBV over a strong 
guarded parameterized monad $(\pm,\eta,\ups,\chi,\xi,\zeta,\wave\tau)$ on $\BV$. 
A semantics of~$(\Sigv,\Sigc)$ then assigns
\begin{itemize}[wide]
  \item an object~$\sem{A}\in |\BV|$ to each sort~$A$;
  \item a morphism~$\sem{f}\in\BV(\sem{A},\sem{B})$ to each~$f\c A\to B\in\Sigv$;
  \item a morphism~$\sem{f}\in\BV(\sem{A},\sem{B}\pm\sem{C})$ to each~$f\c A\to B\gsep C\in\Sigc$;
\end{itemize}
This semantics extends to types as before and to terms in context with the assignments 
in \autoref{fig:mon-sem}. Let us spell out the most sophisticated morphism 
corresponding to the rule for~$\oname{docase}$: 
\begin{align*}
\sem{&\G}
	\xto{\brks{\id, h}} 													  \sem{\G}\times(\sem{A}\pm\sem{B})
	\xto{\wave\tau_{\sem{\Gamma},\sem{A},\sem{B}}}  (\sem{\G}\times\sem{A})\pm(\sem{\G}\times\sem{B})\\
	&\xto{h_1\pm h_2} 														  (\sem{C}\pm\sem{D})\pm((\sem{C}+\sem{D})\pm\iobj)
	\xto{\zeta_{\sem C\pm\sem D,\sem{C}+\sem{D},\iobj}}     (\sem{C}\pm\sem{D})\pm((\sem{C}+\sem{D})+\iobj)\\
	&\xto{(\id\pm\inr)\pm [\id,\bang]}     (\sem{C}\pm(\sem{C}+\sem{D}))\pm(\sem{C}+\sem{D})
	\xto{\mu_{\sem{C},\sem{C}+\sem{D}}}     \sem{C}\pm(\sem{C}+\sem{D})\\*
	&	\xto{\ups_{\sem{C},\sem{C},\sem{D}}} 						(\sem{C}+\sem{C})\pm\sem{D} \xto{\nabla\pm\id}
\sem{C}\pm\sem{D}
\end{align*}
Note that what allows us to sidestep the monicity condition of the representability 
criterion (\autoref{thm:grep}) is that we gave up on the assumption that the space
of guarded morphisms $X\to Y\pm Z$ injectively embeds into the space of all morphisms 
$X\to (Y + Z)\pm\iobj$, in particular, the entire notion of guardedness predicate 
is eliminated.

\section{Conclusions and Further Work}\label{sec:conc}
We investigated a combination of FGCBV and guardedness, drawing inspiration
from previous work relating Freyd categories to strong monads via a natural
representability condition for certain presheaves. An abstract notion of
guardedness naturally fits into the FGCBV paradigm and gives rise to more
general formats of presheaves, which must be representable, e.g., to interpret
higher-order (guarded) functions. In our case, the representability requirement
gave rise to a novel categorical structure — we dub it a (strong) guarded
parameterized monad — that encapsulates the computational effects under
consideration while providing guardedness guarantees.

We regard our present results as a prerequisite step for implementing guarded
programs in existing higher-order languages, such as Haskell, and in proof
assistants with strict support of the propositions-as-types discipline,
such as Coq and Agda, where unproductive recursive definitions cannot be
implemented directly, and thus guarded iteration is particularly significant.
It would be interesting to further refine guarded parameterized monads to
include quantitative information about how productive a computation is, or how
unproductive it is, so that this relative unproductivity could possibly be
cancelled out by composition with something very productive. Another strand for
future work arises from the observation that guarded iteration is a formal dual
of guarded recursion~\cite{GoncharovSchroder18}.

A good deal of the present theory can be easily dualized, which will presumably 
lead to guarded parameterized comonads and comonadic recursion --  we are planning
to investigate these structures from the perspective of comonadic notion of 
computation~\cite{UustaluVene08}. In terms of syntax, a natural extension of 
fine-gain call-by-value is call-by-push-value~\cite{Levy99}. We expect it to be 
a natural environment for analyzing the above-mentioned aspects in the style of 
the presented approach.

As we demonstrated, guarded parameterized monads emerge as an answer to 
a very natural representability question, but the resulting notion, i.e.\ 
\autoref{def:pm}, admittedly appears to be rather unwieldy. It involves five natural 
transformations, two of which ($\eta$ and~$\xi$) render guarded parameterized monads
as graded~\cite{FujiiKatsumataEtAl16} or parametric monads~\cite{Mellies17}.
Each of the remaining transformations has its specific role in governing guardedness
guarantees. They ensure that a guardedness guarantee can be weakened ($\ups$),
that independent guardedness guarantees can be merged ($\chi$), and that nested 
guardedness guarantees can be flattened ($\zeta$). Numerous coherence conditions 
between these transformations are vital for the coherence theorem, and it seems 
that not much can be done to simplify them significantly. One 
seemingly natural idea is to replace the natural transformation 
$\chi\c A\pm B \ten C\pm D\to (A \ten C)\pm (B \ten D)$ with a more elementary 
transformation $\kappa\c A\ten B \pm C\to (A \ten B)\pm C$ from which $\chi$
can be derived. This, however, does not have a straightforward simplifying effect on the 
coherence conditions, in particular on the condition describing the interaction of $\chi$ 
and associativity transformations. Outside the context of coherence, the only useful example 
of $\ten$ known presently is the binary coproduct functor $+$, and the only useful 
candidate for $\chi$ in this case is $[\inl\pm\inl,\inr\pm\inr]$.
In this special case \autoref{def:pm} might be possible to simplify. 

\section*{Acknowledgement}
The author would like to thank anonymous reviewers of the present 
and previous editions of the paper for their diligence in their effort to improve it.

\bibliographystyle{alphaurl}
\bibliography{refs}

\end{document}